\documentclass[journal]{IEEEtran}
\usepackage{amsmath}
\usepackage{amsthm}
\usepackage{subcaption}
\newtheorem{theorem}{Theorem}
\newtheorem{proposition}{Proposition}
\newtheorem{remark}{Remark}
\newtheorem{lemma}{Lemma}
\usepackage{multirow}
\usepackage{booktabs} 
\usepackage{graphicx}
\usepackage{supertabular}
\usepackage{longtable}
\usepackage{xcolor}
\usepackage[T1]{fontenc}
\usepackage{times}
\usepackage{float}
\usepackage{algpseudocode}

\usepackage[ruled]{algorithm} 
\usepackage{amssymb}
\usepackage[margin=0.5in]{geometry}
\begin{document}

\title{FLoRa: Flight-Assisted Data Collection from Duty Cycling LoRa Nodes under Energy Constraints}

\author{Naresh~Babu~Kakarla and 
        V.~Mahendran
\thanks{Naresh Babu Kakarla and V.~Mahendran are with Department
of Computer Science and Engineering, Indian Institute of Technology Tirupati,
Andhra Pradesh, 517619,~India.~(e-mail: \{cs19d501, mahendran\}@iittp.ac.in).}
}

\markboth{IEEE Internet of Things Journal: Under Review}%
{Kakarla \MakeLowercase{\textit{et al.}}: FLoRa: Flight-Assisted Data Collection from Duty Cycling LoRa Nodes}

\maketitle
\begin{abstract}
Data collection using Unmanned Aerial Vehicles (UAVs) is challenging when LoRa IoT Devices (IoTDs) duty-cycle to conserve battery. Under energy constraints, a UAV must decide which IoTDs to visit, in what order, where to hover, and how many times to probe each node, while time-based data freshness decays. Tractably solving this problem requires a multi-level optimization architecture: discrete combinatorial optimization for routing, continuous global optimization for spatial positioning, and sequential decision-making under uncertainty. We propose FLoRa, a Flight-assisted LoRa data collection architecture using Simulated Annealing (SA) for path planning, Covariance Matrix Adaptation Evolution Strategy (CMA-ES) for hover positioning, and Partially Observable Markov Decision Processes (POMDPs) for probing IoTDs. To quantify collection utility from duty-cycling nodes, we introduce the Value of Information for Pull-based systems (VIP), a metric that rewards fresh data and penalizes failed probes, imposing well-posedness and preventing indefinite probing when an IoTD is off. Tracking hard battery constraints on every POMDP sample path requires state augmentation, worsening the curse of dimensionality. For tractability, SA and CMA-ES work on the hard battery constraints, while at the POMDP layer we relax them into soft average constraints via Lagrangian relaxation. Since solving the network-wide POMDP is computationally complex, we decompose it into node-level POMDPs by approximating inter-node time dependency using a forward-decomposition technique. Evaluation shows FLoRa outperforms metaheuristic, greedy, and deep reinforcement learning baselines by $30.6\%$, $27.8\%$, and $15.2\%$ in total expected VIP, while increasing node coverage by $24.5\%$, $29.2\%$, and $8.8\%$, and successful collections by $24.3\%$, $25.6\%$, and $15.2\%$, respectively.
\end{abstract}

\begin{IEEEkeywords}
UAV, LoRa, data collection, simulated annealing, CMA-ES, probabilistic orienteering problem,  sequential decision-making, POMDP, optimization, Lagrangian relaxation, forward decomposition approximation.
\end{IEEEkeywords}

\IEEEpeerreviewmaketitle

\section{Introduction}
Collecting data in difficult-to-reach regions with Low Power Wide Area Networks (LPWANs) is essential for applications such as environmental monitoring and disaster management~\cite{IoT}. Battery-powered IoTDs in LPWANs typically operate on a duty cycle that alternates between active and sleep states to conserve battery and extend network lifetime~\cite{LPWAN_DC_SNs}. This duty-cycling behavior, which directly influences the data availability, plays a vital role in the design of data-collection algorithms.

\par Data collection in challenged networks such as LPWANs requires careful planning of both spatial (IoTD visits) and temporal (duty-cycling patterns)~\cite{TC_UAV_DC_5} aspects. For long-range IoT applications, LoRa is considered as a viable option because it can provide up to a range of $15$ km while maintaining ultra-low power consumption~\cite{LoRa_for_IoT}. Unmanned Aerial Vehicle (UAV)-assisted data collection is especially advantageous when fixed infrastructure is not viable~\cite{UAV_AoI_1}. Moreover, UAVs can traverse large areas, establish Line-of-Sight (LoS) communication with ground IoTDs, and dynamically adjust trajectories based on network conditions.

\par In UAV-assisted LoRa networks, timely data collection is critical. For instance, in disaster response scenarios, the value of collected data typically decreases over time~\cite{Time_varying_data_value}. To measure the time-based utility of data and incorporate it into optimization models, several freshness metrics have been introduced in the context of UAV communications~\cite{UAV_AoI_3,UAV_AoI_4,UoI,VoI_with_deterministic_deadline}. Among them, deadline-based metrics such as Urgency of Information (UoI)~\cite{UoI} and Value of Information (VoI)~\cite{VoI_with_deterministic_deadline} capture data-collection utility that is suitable for disaster-recovery scenarios. These metrics decay over time and become stale after a specified deadline.

\par Existing deadline-based freshness metrics (such as VoI), however, have the following modeling limitation: VoI is not well-defined when data collection fails (for instance, due to device unavailability), thereby making the problem ill-posed. In pull-based lossy-networked systems, the UAV actively probes IoTDs that may not be available due to their duty-cycling on/off patterns. In such scenarios, path-planning algorithms need a negative feedback to account for failed probes. To address this gap, we propose a novel value-based freshness metric namely, Value of Information for Pull-based systems (VIP), which behaves like VoI upon successful collection, and additionally penalizes failed probe attempts thereby making the problem \textit{well-posed} especially when the IoTD is unavailable by preventing the UAV to probe infinitely.
\begin{table}[t]
\centering
\caption{ Types of UAV path-planning problems}
\label{tab:detailed_comparison}
\begin{tabular}{|c|l|l|l|l|l|}
\hline
\textbf{Ref}&\textbf{Problem} & \textbf{Path} &\textbf{Positioning}  &\textbf{Type} &\textbf{Probing}  \\
\hline
\cite{TSP_2}&TSP & All & Above IoTDs & Push & Deterministic \\
\hline
\cite{CE_TSP}&CE-TSP & All & IoTD neighborhood &  Push &  Deterministic\\ 
\hline
\cite{OP_DC}&OP & Subset & Above IoTDs &  Push & Deterministic \\
\hline
\cite{CEOP}&CEOP & Subset & IoTD  neighborhood &  Push & Deterministic\\
\hline
\textbf{FLoRa}&\textbf{PTCOP}&\textbf{Subset}&\textbf{IoTD neighborhood}& \textbf{Pull} &\textbf{Probabilistic}\\\hline
\end{tabular}
\end{table}

\par Table~\ref{tab:detailed_comparison} positions the proposed work among the other UAV data collection  problems from the literature. Prior works on UAV path-planning, such as~\cite{CE_TSP,TSP_1}, have implicitly assumed that UAVs visit all of the ground IoTDs. However, the fundamental challenge in UAV-assisted networks is to perform data collection under limited hard battery constraints~\cite{Avg_Energy_const_TITS},~\cite{Avg_Energy_const_MDPI}. For geographically widespread networks (such as LPWANs), such an energy-constrained path-planning algorithm can restrict UAVs to visit only a subset of IoTDs. The aforementioned subset selection problem is called the Orienteering Problem (OP)~\cite{Time_varying_data_value, Orienteering_Problem, UAV_OP_1, UAV_OP_2} which is considered to be NP-hard.

\par As UAVs fly over a large coverage area spanned by the ground IoTDs, the Close-Enough Orienteering Problem (CEOP)~\cite{CEOP} characterizes the nature of data collection by hovering from a distance within coverage neighborhood of the IoTDs (rather than directly hovering over the IoTDs).  While such an approach saves travel costs and enables collecting more data within an energy-limited travel budget, finding a suitable hovering point within a continuum neighborhood region is computationally intractable.

\par Problems dealing with the timely collection of data coupled with UAV path-planning to visit a subset of IoTDs are classified under the Time-Dependent Orienteering Problem (TDOP)~\cite{TDOP} framework. Moreover, the presence of  duty-cycling IoTDs such as LoRa devices induces uncertainty in IoTD's device availability patterns. The aforementioned problem frameworks are classified under Probabilistic Orienteering Problems (POP)~\cite{POP}. 
\par To the best of our knowledge, our proposed Flight-assisted data collection over LoRa nodes (FLoRa) is the first hierarchical optimization framework to jointly solve the following: (i) maximizing well-posed time-dependent utility via the proposed VIP freshness metric suitable for duty-cycling IoTDs, (ii) deterministic path-planning and close-enough positioning (hovering),  and (iii) probabilistic probing in the presence of duty-cycling IoTDs. For the first time, the proposed FLoRa framework uniquely solves the UAV data collection problem under the Probabilistic, Time-dependent, and Close-enough Orienteering Problem (PTCOP) problem setting.
\begin{figure}[t]
        \centering
\includegraphics[scale=.35]{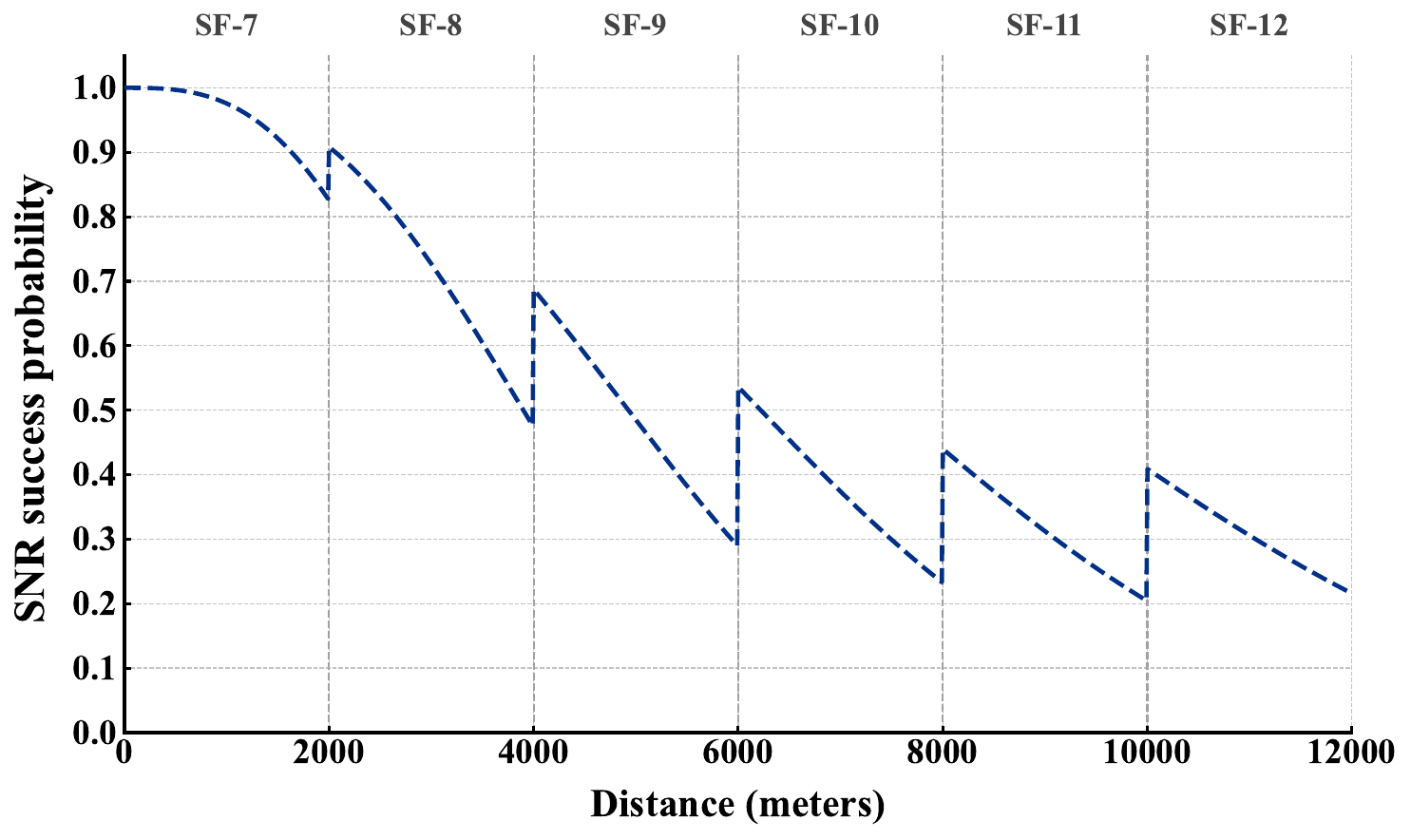}
    \caption{SNR success probability over different SF regions} 
    \label{fig:snrsuccprb}
\end{figure}

\subsection{Challenges addressed by  FLoRa framework} The PTCOP problem nature of FLoRa needs to deal with various levels of computational challenges as follows:
\par Firstly, the IoTD subset selection problem of determining IoTDs to visit under limited battery constraints of the UAV is a combinatorial optimization problem which is typically NP-hard. Over the selected subset of IoTDs, optimizing UAV hover positions in each of the continuous neighborhood of LoRa IoTDs is NP-hard due to the continuous search space. In addition, the uncertain nature of the response to the UAV's probes introduces a temporal dilemma of ``whether to probe same IoTD again, or move on to a different IoTD?''. Determining the optimal number of probing attempts at each hovering location under random IoTD availability is a challenging sequential decision-making problem. 

\par Secondly, the Signal-to-Noise Ratio (SNR) of IoTDs equipped with LoRa radios is characterized by a step-like function due to the discrete-type Spreading Factor (SF). As a consequence, as shown in Fig.~\ref{fig:snrsuccprb}, the packet success probability (as a function of SNR) is not differentiable at the discrete jumps of the SF transitions. This non-differentiable nature prohibits the use of gradient-based optimization frameworks.
\par Lastly, the proposed FLoRa framework is multi-objective in nature as it is aimed at optimizing several competing factors such as coverage, freshness, and data collection probability. For instance, flying closer to IoTDs increases packet success probability, but this consumes more flight time, thereby reducing VIP freshness. In a similar way, visiting more IoTDs by probing at their coverage boundaries may negatively affect the data collection probability. Balancing these trade-offs is essential for effective system design. 

\subsection{Positioning against the state-of-the-art} The latest research in the field of UAV-assisted data collection \textit{independently} addresses a \textbf{subset} of the aforementioned challenges. To name a few, the authors in~\cite{UAV_LoRa_Attention} jointly optimize UAV path-planning and data collection locations to minimize total energy consumption of both UAV and ground sensors while ensuring high packet reception rates. However, their work assumes the following: $i)$ `Always-on' secondary radio to wake up the ground IoTDs, $ii)$ Absence of subset selection: Sufficient UAV battery for the complete UAV tour, $iii)$ Fixed number of probe attempts, and $iv)$ No consideration for data freshness. The authors in~\cite{LoRa_UAV_AoI} optimize both hovering and flight trajectories for LoRa-enabled data collection aimed at minimizing the Age of Information (AoI). However, their model assumes the following: $i)$ Scheduled data collection ($i.e.,$ push-based communication) without any random duty-cycling, and $ii)$ Sufficient battery to visit all nodes. Unlike the aforementioned state-of-the-art, in this article, we present a unified framework that simultaneously addresses UAV energy  constraints, multi-SF LoRa communication, probabilistic data collection, path planning (flying) and positioning (hovering), and time-dependent probabilistic probing for (fresh) data collection.

\par To the best of our knowledge, the existing UAV-based works do not jointly solve the freshness-aware data collection problem involving a subset selection, path-planning, probing on randomly available  nodes in a holistic manner. A detailed comparison of existing work positioned against FLoRa is discussed in Section~\ref{sec:Related Work} and depicted in Table~\ref{tab:related_work_comparison}. Moreover, prior methods relax at least one assumption in ways that would make a direct performance comparison with FLoRa impossible. We therefore, use baselines for our performance evaluation study. 
 
\par  As FLoRa framework addresses the comprehensive network problem setting of PTCOP in nature, for evaluation purposes, we have considered baseline algorithms such as greedy, metaheuristic (ant-colony optimization), and learning-based variants (such as DRL) that are widely applied in the UAV-based path-planning research. It is worth to note that comparing against baseline approaches for evaluation is a well-established practice and recent works continue to adopt this methodology \cite{Avg_Energy_const_Globecom2025}.

\begin{figure}[t]
    \centering
\includegraphics[scale=.30]{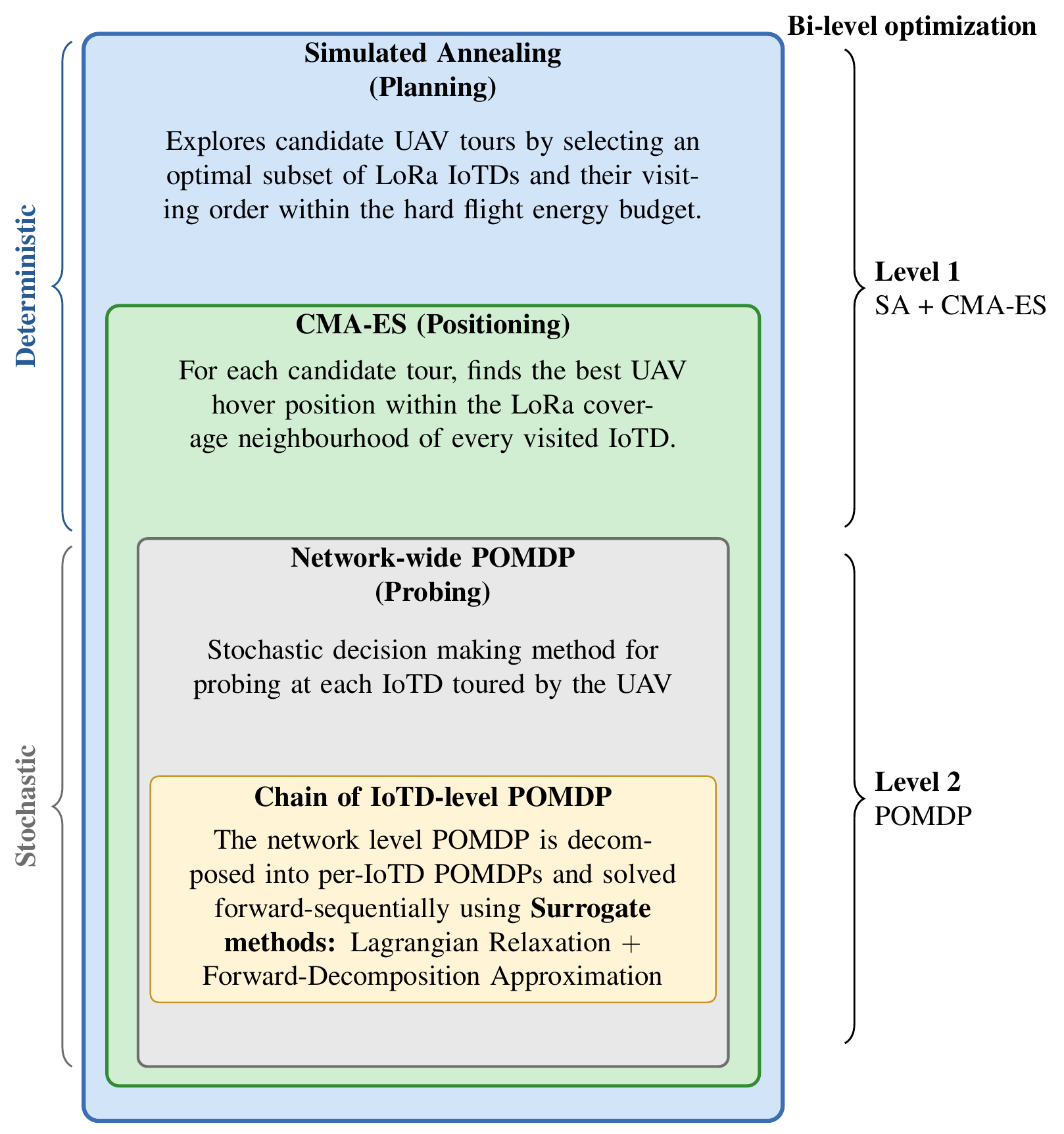}
    \caption{Hierarchical layout of the proposed FLoRa framework} 
\label{fig:floraconceptual}
\end{figure}
\par \subsection{FLoRa design principles}
The hierarchical structure of the proposed FLoRa framework is depicted in Fig.~\ref{fig:floraconceptual} and its various layers are described as follows:  Simulated Annealing (SA)~\cite{SA_Main},~\cite{SA_Np_Hard} typically used to solve NP-hard combinatorial problems is used to explore large-solution space of the path-planning aimed at deciding the order of visits to IoTDs. Each candidate path (chosen by SA) is evaluated by invoking Covariance Matrix Adaptation Evolution Strategy (CMA-ES)~\cite{CMA_ES_Tutorial} which in turn will address the \textit{close-enough} problem of finding a suitable hovering position in the coverage neighborhood of each of the IoTD's. Note that the discontinuity of SNR at SF transitions makes the derivative-free nature of CMA-ES more suitable, as its Gaussian smoothing (blurring behavior) acts as a low-pass filter which plays the role of a regularizer for SF-induced discontinuity. The top two layers of SA and CMA-ES considers hard (battery) constraints (without Lagrangian relaxation).

\par Upon hovering on the coverage neighborhood of an IoTD, the UAV faces the following temporal dilemma: To hover and probe on current IoTD or fly to the next IoTD (as per SA/CMA-ES) and probe it? This is a sequential stochastic decision-making problem that requires UAV to construct a belief using the underlying IoTD's data availability probability. Along with the energy constraints of UAV, this layer of the problem can be effectively modeled using constrained Partially-Observable Markov Decision Process (POMDP)~\cite{bertsekas2012dynamic}, a foundational method for  sequential decision-making in Artificial Intelligence (AI)~\cite{norvigbook}. 

\par \textit{Lagrangian relaxation:} A POMDP  need to maintain the belief states of each Markovian state in a suitable probability simplex. To maintain hard battery constraints on top of the continuous belief space requires tracking of battery consumption (which is also real value) on each sample path.  Tracking the aforementioned battery consumption requires augmenting the state space of POMDP thereby leading to the curse of dimensionality. To this end, for analytical tractability, at the probing layer, we use Lagrangian relaxation approach and convert hard battery constraints to soft (average) constraints problem, where the hovering battery constraints are considered to be met on an average. This helps avoiding complex state-augmentation method, by only adding a Lagrangian penalty term in the reward function of the POMDP. The Lagrangian battery relaxation also helps each IoTD to independently react to Lagrangian penalty instead of tracking battery consumption on the other nodes.  
\par \textit{Forward decomposition approximation}:
While the global network-shared battery constraints is decoupled (to node-level through penalty) for tractability using Lagrangian relaxation method, this method is limited to transform constraints into objective. The global network-share time resource, however, is a part of the objective (in the form of VIP utility) that needs to be dealt with a different surrogate method namely, forward decomposition method.

\par The key challenge is that the sequential dependency of UAV's global time cannot be decoupled. In other words, the random times spent by the UAV in hovering in preceding IoTDs and deterministic flight time spent by UAV on the tour (made so far) need to be known to the current IoTD towards computing the VIP utility. Keeping track of residual global time in state space (with standard state augmentation method) and tracking in every sample path is also not tractable.  

\par To address the aforementioned challenge, we have developed a forward decomposition method as a surrogate technique to break the dependency at the IoTD level. In this method, at each IoTD, we sequentially solve the underlying IoTD-level POMDP by using the statistics concerned with the expected time spent at preceding IoTDs along with total flight time spent. In this manner, the sample path dependency at the network level is approximated by the average statistics, and we solve IoTD level POMDPs in a forward-sequential manner.

\par \textit{Optimal threshold policy for IoTD-level POMDP}: The decoupled average energy-constrained node-level POMDPs can be solved for optimality. We specifically show the existence of threshold-based optimal polices, which can be computed tractably in offline manner as follows: The threshold-nature of optimal policy induces an absorbing Markov chain, and the expected total Lagrangian constrained reward at each node can be directly computed using the fundamental matrix of the aforementioned absorbing Markov chain. The aforementioned expected reward then serves as a scoring function for the heuristics layer for navigating in the solution space. 

\begin{table*}[t]
\centering
\caption{Comparison with related works}
\label{tab:related_work_comparison}
\setlength{\tabcolsep}{4pt}
\begin{tabular}{|c|c|c|c|c|c|}
\hline
\textbf{Work}  & \textbf{Energy Constraint} & \textbf{Freshness} & \textbf{Duty cycling} & \textbf{Optimization} & \textbf{Method}  \\ 
\hline
\cite{LoRa_UAV_AoI}  & \checkmark & AoI & Scheduled & Path and hover & Algorithm\\
\hline
\cite{TC_UAV_DC_4} & \texttimes & \texttimes & \texttimes & Trajectory & Exact methods\\
\hline
\cite{UAV_Hover_Flight_Heur}& \texttimes & \texttimes & \texttimes & Trajectory and hover & Meta heuristics\\
\hline
\cite{UAV_OP_Traj_Opt}  & \checkmark & \texttimes & \texttimes & Trajectory  & Meta heuristics\\
\hline
\cite{UAV_Subset_Visit_AoI}  & \checkmark & AoI & \texttimes &  Trajectory and hover & Exact methods\\
\hline
\cite{UAV_Traj_Opt_DC} & \texttimes & \texttimes & Scheduled & Trajectory & Exact methods\\
\hline
\cite{UAV_AoI_5}  & \texttimes & AoI & \texttimes & Trajectory and hover & Deep Learning \\
\hline
\cite{UAV_LoRa_MAB} & \checkmark & \texttimes & \texttimes & Hover & Multi Armed Bandits\\
\hline

\textbf{FLoRa}  & \textbf{\checkmark} & \textbf{VIP} & \textbf{Random} & \textbf{Path, hover and probing} & \textbf{Metaheuristics + POMDP}\\
\hline
\end{tabular}
\end{table*}

\par The summary of the key contributions of FLoRa is as follows:
\begin{itemize}
    \item \textbf{VIP metric:} A novel value-based freshness metric namely Value of Information for pull-based (VIP) data-collection systems is proposed. 
    
    \item \textbf{PTCOP formulation:} To the best of our knowledge, this is the first work to model fresh data collection in UAV-LoRaWAN under a novel PTCOP problem setting. To this end, the proposed FLoRa framework jointly optimizes time-dependent data collection with UAV path planning, positioning, and probing under average energy budget constraints.

    \item \textbf{Metaheuristic methods for spatial decision making:} Due to the NP-hardness and non-differentiable nature of the PTCOP problem, we propose the use of multiple methods from metaheuristics (such as SA) for path-planning, and evolutionary algorithms (such as CMA-ES)  for UAV-positioning.
    
    \item \textbf{POMDP for temporal decision making}: For a chosen path and hovering positions, the time-based probing problem of how many times to probe on each IoTD is solved using a POMDP planning model. 
    \item \textbf{Bi-level optimization structure of FLoRa:} The hierarchical nature of upper heuristics layer and lower POMDP layer serves as bi-level optimization framework with each layer solving independent sub-problems ($i.e$ spatial path-planning and positioning by heuristics layer and temporal probing by POMDP layer), and iteratively interacting to solve the combined problem of path-planning, positioning, and probing.
    \item \textbf{Network-wide to Node-level POMDP}: For tractability, the network-level serially-coupled hard-energy constrained POMDP involving all IoTDs of the selected path is decomposed into node (IoTD)-level expected-energy constrained POMDPs  using Lagrangian relaxation and forward decomposition based surrogate methods.
    \item \textbf{Structural-result for node-level POMDPs}: Each IoTD (node)-level POMDP is solved by using simple threshold-based optimal policies, which serve as scoring function for planning by the upper heuristics layers of the FLoRa framework. 
    
    \item \textbf{Performance Evaluation:} We show the efficacy of the proposed FLoRa framework by demonstrating its superior performance when compared against the greedy, metaheuristic, and DRL baseline counterparts across various performance metrics such as total expected VIP, node coverage, and expected successful data collections.
\end{itemize}
The remainder of this paper is organized as follows: Section~\ref{sec:Related Work}
 reviews related work on UAV-assisted data collection, orienteering problems, and freshness metrics. Section~\ref{sec:System Model} presents the system model, including LoRa technology, communication model, UAV energy consumption model, and data freshness model. Section~\ref{sec: VIP} introduces the VIP metric in detail. Section~\ref{FLoRa Framework} formulates the FLoRa problem and discusses its 
complexity, and Section~\ref{sec: Hierarchical Solution} presents the hierarchical solution framework that decomposes the problem into path planning, 
hover-position optimization, and probing-decision optimization. These three subproblems are then discussed in detail in the subsequent sections: path planning via SA in Section~\ref{sec:SA}, hover-position optimization via CMA-ES in Section~\ref{sec:CMAES}, The surrogate methods, namely Lagrangian relaxation, forward-decomposition approximation along with threshold based optimal policy for average-constrained IoTD-level POMDP are discussed in Section~\ref{sec:POMDP}. The design choice of the proposed FLoRa framework and the purpose of key layers and the benefit of their hierarchical arrangement is discussed in Section.~\ref{sec:florawellposedness}. Section~\ref{sec:Results} presents the performance evaluation study comparing FLoRa against baseline methods across different battery budgets and IoTD 
densities. Section~\ref{sec:conclusion} concludes with a discussion on future 
work.

\section{Related Work}\label{sec:Related Work}
Table~\ref{tab:related_work_comparison} summarizes the comparison of related works across various dimensions as follows: energy constraint, freshness metric, duty cycling model, optimization variables, and solution approach. The proposed FLoRa positioned against the related works in the \textit{last row of the Table~\ref{tab:related_work_comparison}}. In the following, we cluster the survey under three major groups of path-planning, freshness-based and LoRa-based data collection.
\subsection{Path-Planning with Energy Constraints}
UAV-enabled data collection has emerged as a promising solution for 
gathering information from distributed IoT networks.  Early research primarily focused on trajectory optimization to 
efficiently visit and collect data from ground sensors. The work 
in~\cite{TC_UAV_DC_4} employs exact methods to optimize UAV trajectories 
for data collection, establishing foundational approaches for path planning 
in UAV-IoT systems.

\par Recognizing that hovering positions significantly impact communication  quality and energy efficiency, subsequent research evolved to jointly  optimize both trajectory and hover locations. The authors  in~\cite{UAV_Hover_Flight_Heur} combine neighborhood search with sampling methods to optimize target subsets and visiting sequences in obstacle-rich environments, maximizing collected rewards within specified travel budgets.  Similarly,~\cite{UAV_OP_Traj_Opt} addresses trajectory optimization under  energy constraints using approximation-based meta-heuristic approaches,  demonstrating the importance of considering UAV battery limitations in  practical deployments. 
\par As shown in Table~\ref{tab:detailed_comparison}, most of the UAV path planning works are push-based in nature. Under duty-cycling network models, the data generation and device availability patterns may not follow deterministic schedules, making push-based approaches impractical. In contrast, pull-based UAV communication enables on-demand data collection wherein the UAVs actively visit and probe ground IoTDs for data without requiring prior synchronization or scheduling. Pull-based methods are generally more energy-efficient as data is transferred only on demand. Our proposed work adopts the pull-based paradigm to deal with the uncertainty introduced by the duty-cycling behavior of the IoTDs.

\subsection{Freshness-Aware Data Collection} 
As applications demanded timely data collection, researchers began incorporating  freshness metrics into UAV data collection frameworks. The concept of Age  of Information (AoI) became central to quantifying data freshness, leading  to optimization problems that balance collection efficiency with information  timeliness. The work in~\cite{UAV_Subset_Visit_AoI} formulates a joint  minimization of AoI and energy consumption by optimizing UAV trajectories  and visiting sequences using exact decomposition methods. Advanced  learning-based techniques have also emerged, as demonstrated  in~\cite{UAV_AoI_5}, which minimizes UAV flight energy subject to AoI  constraints by jointly optimizing the trajectory, hovering position, and  device scheduling using a hierarchical decomposition approach combining  reinforcement learning and convex approximation methods. 
\par Deadline-based metrics such as UoI~\cite{UoI} and VoI~\cite{VoI_with_deterministic_deadline} capture data-collection utility that is suitable for disaster-recovery scenarios. These metrics decay over time and become stale after a specified deadline. The aforementioned deadline-based freshness metrics inherently assume underlying data collection device is always `on'. Under duty-cycling devices, the aforementioned metrics will not work especially during `off' condition making the UAV probe without response for multiple times making the problem ill-posed. We therefore, propose novel VIP freshness metric for pull-based systems. 

\subsection{LoRa-Based Data Collection} LoRa technology has gained significant attention for UAV-based data 
collection due to its long-range, low-power characteristics, making it 
particularly suitable for wide-area IoT deployments. The authors 
in~\cite{UAV_LoRA_Directivity_Aware} optimize LoRa transmission parameters 
including spreading factor and packet length based on a predetermined UAV 
flight paths, using directivity-aware link models to adapt to link quality 
variations in multi-SF scenarios. However, LoRa operates in unlicensed ISM 
bands and is subject to strict regional duty cycle regulations, which limit the fraction of time a device may occupy the channel. This 
introduces an additional coordination challenge between UAV scheduling 
and sensor transmission opportunities. The work in~\cite{UAV_Traj_Opt_DC} 
investigates joint minimization of UAV propulsion energy and sensor energy 
by coordinating trajectory planning with sensor wake-up scheduling, 
highlighting the importance of aligning UAV operations with sensor 
activity constraints.

\par Building on these efforts, several works have addressed LoRa-specific 
challenges in UAV data collection. A learning-based approach 
in~\cite{UAV_LoRa_MAB} enables ground sensors to independently select 
optimal LoRa transmission settings to communicate with a hovering UAV 
gateways under energy constraints. An analytical model 
in~\cite{UAV_LoRa_Agriculture} calculates minimum proximity distances 
between UAVs and sensors for reliable LoRa data collection, balancing 
flight paths and duty cycle constraints to ensure complete coverage within 
battery limits. Joint optimization of UAV trajectory and data collection 
locations to minimize energy while ensuring high packet reception rates 
is addressed in~\cite{UAV_LoRa_Attention}. Combining LoRa capabilities 
with freshness considerations,~\cite{LoRa_UAV_AoI} extends AoI 
optimization to LoRa-enabled deployments by jointly optimizing hovering 
and flight trajectories under scheduled data collection scenarios. As shown in Table~\ref{tab:related_work_comparison}, the state of the art mostly assume scheduled duty-cycling, or implicitly assume always-on nodes. In this work, we consider more practically relevant random duty-cycling of LoRa nodes for data collection.

\section{System Model}\label{sec:System Model}

\begin{figure}[t]
        \centering
\includegraphics[scale=.3]{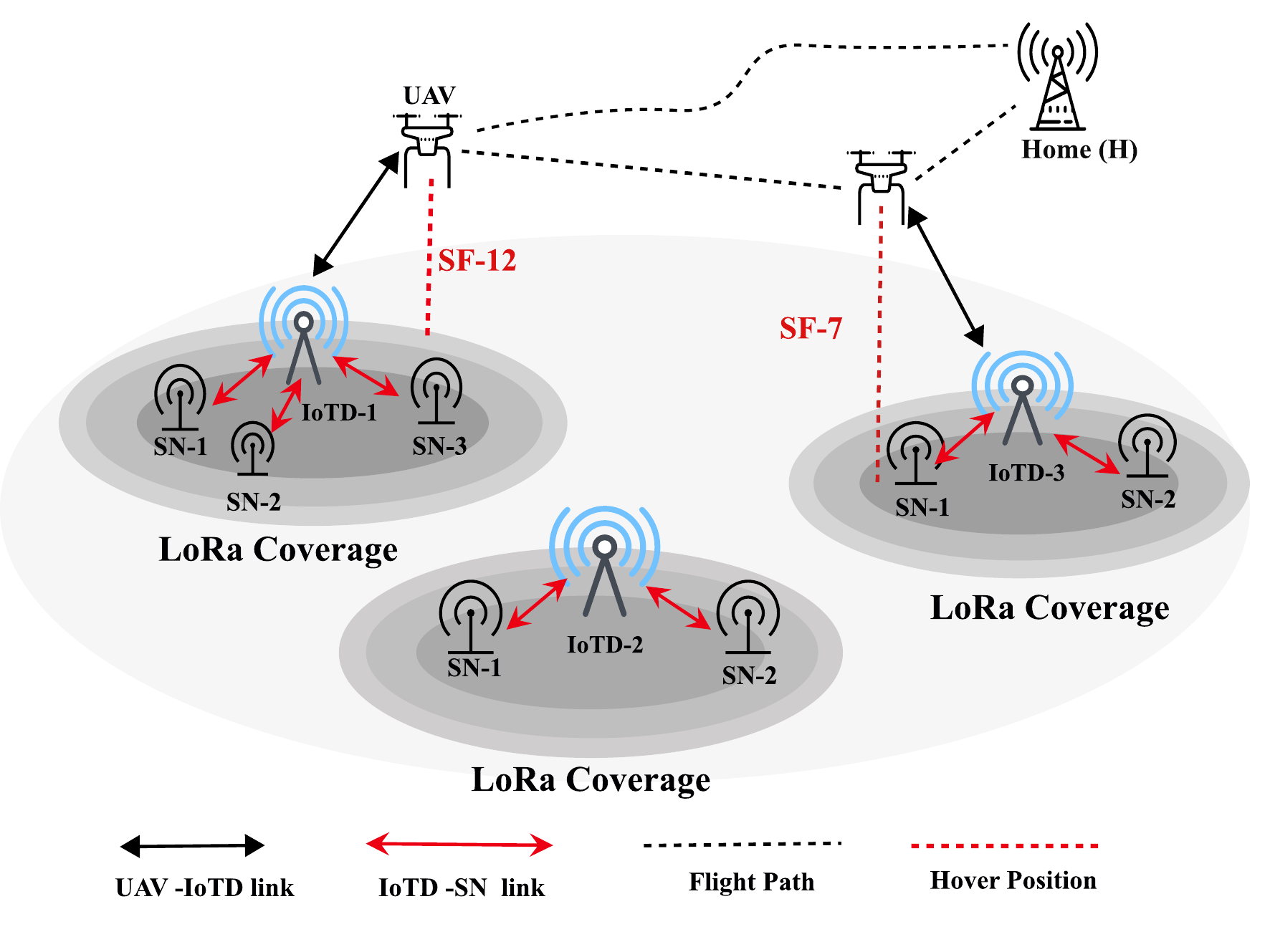}
    \caption{UAV-assisted LoRa IoT data-collection architecture with a sample trajectory over $N=3$ sensing regions, $[N]=\{1,2,3\}$. The UAV visits IoTDs $1$ and $3$ and returns home, giving the tour $\mathbf{z}=\{z_1=1,\,z_2=3\}$.}
    \label{fig: System_Model}
    
\end{figure}

We consider a network of $N$ sensing regions distributed across a wide geographical area, where each region $k \in [N]$ is managed by a LoRa IoTD acting as a local sink. As shown in Fig.~\ref{fig: System_Model}, a UAV equipped with a LoRa gateway flies with constant velocity $v$ at a constant height from a home location $H$, with a finite battery budget $b_{\max}$, to collect data from as many IoTDs as possible. Since the UAV operates under a finite battery budget, it selects a subset of IoTDs and visits them in some order. We represent the tour (ordered set of IoTDs visited) as $\mathbf{z} = (z_i)_{i=1}^{|\mathbf{z}|} \in \mathcal{Z}$, where $z_i$ is the $i^{\text{th}}$ IoTD visited in the tour. Without loss of generality, every tour starts and ends at the home station $0$, which is not explicitly shown in $\mathbf{z}$.
\subsection{Data Generation Model}
At the beginning of each UAV mission, the IoTD in region $k$ (where $k \in [N]$) probes its underlying sensor network and is available for data collection with a known probability $a_k \in [0,1]$. For details on the practical benefits of random duty-cycling, the readers are referred to~\cite{Dronet_Kakarla}. A two-time-scale process is considered, wherein the data-availability of each IoTD is relatively static during the tour time of the UAV. Accordingly, the availability of IoTD $k$ on a given mission is a static latent realization $A_k \sim \text{Bernoulli}(a_k)$, where $A_k$ is random variable and its outcome is hidden from the UAV (only its Bernoulli parameter $a_k$ is known during configuration time). 
\par We assume that the statistical characterization of the underlying data-generating sensor network is known~\cite{availability_tarak} (e.g., a Bernoulli delivery probability $p_k$), so that $a_k$ is a known function of $p_k$. Even so, the problem remains non-trivial because the realization $A_k$ is unknown. We shall, however, relax the known-statistics assumption in Sec.~\ref{sec:POMDP} (Proposition~\ref{prop:learn}) and discuss the working of FLoRa under an unknown availability statistics setting.

 \par The aforementioned duty-cycling refers to device based duty-cycling for energy-efficiency. We now focus on LoRa communication-level duty-cycling. In scheduled duty-cycling~\cite{LoRa_UAV_AoI}, the LoRa IoTD wakes up and listens for channel activity in predefined time slots, requiring the UAV collection activity to be synchronized with this schedule. Random duty-cycling \cite{Random_Duty_Cycling} refers to the fact that  LoRa IoTD listens for channel activity only when it has data to transmit. Otherwise, it remains in sleep or the OFF state. Once it transmits packets it will remain idle for certain amount of time (say $90\%$ of the time). With our two-time scale approximation, we assume that during a tour, the transmissions happen within $10\%$ of communication duty-cycling.

\subsection{Communication Model}\label{sec:Comm Model}
The SF determines the chirp rate in LoRa  modulation, where lower SFs support higher data rates over short ranges and higher SFs provide better reliability at longer ranges.  Unlike a fixed gateway, the UAV serves as a mobile gateway whose distance to each IoTD varies with its hover position. Given the  wide-area coverage of LoRa IoTDs (up to $d_{\max}$ km), the UAV  need not hover directly above each IoTD but can collect data from  anywhere within the coverage neighborhood of IoTD $z_i$. We define  the hover location $l_i \in \mathbb{R}^2$ as the UAV's position  when communicating with IoTD $z_i$, where $d_i = \|l_i - c_i\|  \leq d_{\max}$ and $c_i$ is the physical location of IoTD $z_i$.

The UAV selects the appropriate SF based on $d_i$, and each SF carries a distinct SNR sensitivity threshold $\theta_{SF}$. Following the SNR-based LoRa reliability model of~\cite{LoRa_Pathloss}, the probability of successful frame reception $s_i^{l_i}$ at hover location $l_i$ is modeled as
\begin{equation}
s_i^{l_i} = \exp\!\left(-\frac{\eta\, \theta_{SF}}{p_t \cdot G(d_i)}\right),
\label{eq:success_prob}
\end{equation}
where $\eta$ is the effective receiver noise power, $\theta_{SF}$ is the SNR threshold of the selected SF, $p_t$ is the transmit power of IoTD $z_i$, and $G(d_i)$ is the distance-dependent channel attenuation function adopted from~\cite{LoRa_Pathloss}. Since $G(d_i)$ decreases with distance and appears in the denominator, the received power $p_t \cdot G(d_i)$, and hence $s_i^{l_i}$, decreases with $d_i$, consistent with Fig.~\ref{fig:snrsuccprb}.

\subsection{UAV Energy Consumption Model}\label{subsec:UAV_Energy}
The UAV consumes energy in two phases during data collection: flight between nodes and hovering at each node for data transfer. The hovering power consumption $h$ is computed as~\cite{survey}:
\begin{equation}
h = \frac{\vartheta}{8}\rho \Lambda S \Omega^3 \zeta^3 + (1+\beta)\frac{W^{3/2}}{\sqrt{2\rho S}}
\label{eq:hover_power}
\end{equation}
where $W$ is the aircraft weight, $S$ is the rotor disk area, $\Lambda$ is the rotor solidity, $\zeta$ is the rotor radius, $\Omega$ is the blade angular velocity, $\vartheta$ is the profile drag coefficient, $\rho$ is the air density, and $\beta$ is the induced power correction factor. The flight power consumption when the UAV travels at velocity $v$ is~\cite{survey}:
\begin{equation}
f = P_0\left(1 + \frac{3v^2}{v_{\text{tip}}^2}\right) + P_i\left(\sqrt{1 + \frac{v^4}{4v_0^4}} - \frac{v^2}{2v_0^2}\right)^{\frac{1}{2}} + \frac{\xi\rho \Lambda S v^3}{2}
\label{eq:flight_power}
\end{equation}
where $v_{\text{tip}}$ is the rotor blade tip speed, $v_0$ is the mean rotor induced velocity in hover, and $\xi$ is the fuselage drag ratio. Let $t_{i-1,i}$ (with $i \in \{1, 2, \ldots, |\mathbf{z}|\}$) denote the UAV's flight time from a position at IoTD $i-1$ to a position at IoTD $i$ in the tour, and $t_i$ is hover time at a given position at IoTD $i$. The node-wise energy consumption for reaching and probing an IoTD $i$ in the tour $\mathbf{z}$ is therefore given:
\begin{equation}
B_i = f  t_{i-1,i} + h t_i
\label{eq:battery_node}
\end{equation}
For the first IoTD $i$, $B_1 = ft_{0,1}+ht_1$ (here $0$ represents `\textit{home}'), and to take into account the return flight energy from the last we use $B_{|\mathbf{z}|+1}=ft_{|\mathbf{z}|,0}$.

\subsection{Data Freshness Model}
The data collected at each IoTD is considered to carry a time-varying value $V_T(t)$ subject to a common deadline $T$. Post the deadline, the value of the data is completely lost. Such a deadline-based metric (for instance, VoI~\cite{VoI_with_deterministic_deadline}) that decays over time and becomes void after the deadline is necessary for disaster response and other time-critical applications.
\section{Proposing VIP Metric: Value of Information for Pull-Based Systems }\label{sec: VIP}

\begin{figure}[t]
    \centering
    \includegraphics[scale=0.65]{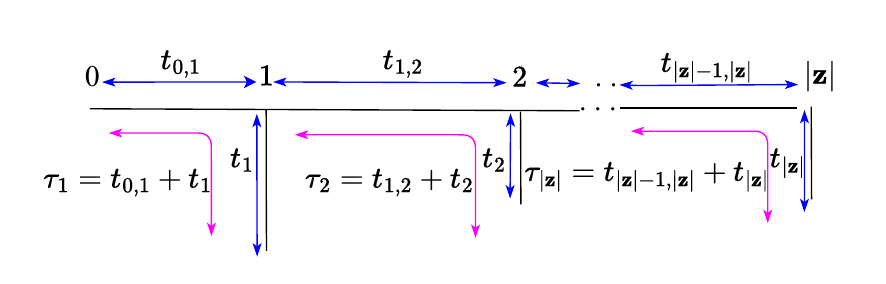}
    \caption{Timing components associated with the UAV flight and hover for each of the IoTDs visited sequentially in a given tour}
    \label{fig:timing}
\end{figure}

As shown in Fig.~\ref{fig:timing}, the UAV on a given tour $\mathbf{z}$ visits IoTDs (in a specified order) and probes them for data. The total flight and hovering time taken by the UAV after leaving a position at $i-1^{\tt th}$ IoTD to the next IoTD $i.e.,~i$ is represented by $\tau_i = t_{i-1,i} + t_i$. For the considered data freshness model, upon visiting an IoTD $i$ and collecting a data, the VoI of the data can be represented as a residual freshness as follows: 

\begin{equation}
\Delta_i = 
\begin{cases}
T - \sum_{j=1}^{i} \tau_j & \text{if } \sum_{j=1}^{i} \tau_j \leq T, \\
0 & \text{if } \sum_{j=1}^{i} \tau_j > T.
\end{cases}
\label{eq:voi}
\end{equation}
In other words, the VoI metric at node $i$ is rewritten as, 
\begin{equation}\label{eq:rephrasedvoi}
    \Delta_i = \max\bigg(0,T-\sum_{j=1}^i\tau_j\bigg)
\end{equation}
\par Value of Information (VoI)~\cite{VoI_with_deterministic_deadline} is designed for push-based communication systems where data is transmitted at intervals, and reception is implicitly assumed. In such systems, VoI measures the utility of successfully received data as it decays toward a deadline. However, in pull-based communication systems, the UAV actively probes IoTDs that may not respond due to a duty-cycling `off' state or poor channel conditions, in which case the data may not be collected. The VoI metric is undefined for failed collection attempts, creating a critical gap for optimization-based path planning algorithms that must account for both successful and unsuccessful probes.

To address this limitation, we propose a novel value-based freshness metric called Value of Information for Pull-based systems (VIP). Unlike VoI, which only assigns value to successfully collected data, the VIP explicitly accounts for data collection failures by assigning negative penalties, enabling path planning algorithms to optimize probing strategies while considering the cost of unsuccessful attempts.
Let $V_i$ be the random variable  representing the VIP at  $i^{th}$ IoTD ($i \in \{1, \ldots, |\mathbf{z}|\}$) visited in tour $\mathbf{z}$, where $i$ indexes the tour position and $z_i \in [N]$ is the IoTD visited at that position. The VIP metric is defined as:
\begin{equation}
V_i = \begin{cases} T - \sum_{j=1}^{i} \tau_j & \text{if collected and } \sum_{j=1}^{i} \tau_j \leq T, \\ -\tau_i & \text{otherwise,} \end{cases}
\label{eq:vip}
\end{equation}
where $\tau_i$
penalizes a failed attempt by the time spent at that node. The VIP metric captures the fundamental trade-off in pull-based data collection: successful collection yields positive utility proportional to the remaining
freshness, while failed attempts incur penalties proportional to the
wasted time. This penalty term ensures that optimization algorithms avoid excessive probing at low-probability locations or positions where the residual freshness is already depleted.

\par Leveraging VIP, we formulate the UAV path-planning and data-collection problem as a problem of maximizing the total expected VIP across all visited IoTDs, subject to the UAV's limited battery budget. This leads to
a constrained optimization problem over tour selection, hover
positioning, and probing strategies, which we formulate and solve in the following section.

\section{Proposed FLoRa Framework for Fresh Data Collection}\label{FLoRa Framework}

For the aforementioned system model, the  pull-based data collection process aimed at maximizing total expected VIP of the data involves complex sub-tasks of PTCOP as follows: $i)$ Subset selection with path planning (orienteering), $ii)$ Positioning, $iii)$ Probing under probabilistic IoTD availability under limited energy constraints of the UAV. 

\subsection{FLoRa: Problem Formulation}

The key objective of the FLoRa framework is as follows:
\\\textit{``To maximize the total expected VIP freshness of the collected data subject to the hard energy budget constraints of the UAV''}
\\To this end, the optimization formulation of  FLoRa is as follows:
\begin{equation}
\begin{aligned}
\max_{\mathbf{z},\, \mathbf{l},\, \mathbf{n}} \quad & \sum_{i=1}^{|\mathbf{z}|} \mathbb{E}[V_i] \\
\text{subject to} \quad & B_{|\mathbf{z}|+1}+{\sum_{i=1}^{|\mathbf{z}|} B_i \leq b_{\max}}, \\
& d_i \leq d_{\max}, \quad i \in \{1,2,\cdots, |\mathbf{z}|\}, \\
& n_i \in \mathbb{Z}^+\cup \{0\}, \quad i \in \{1,2,\cdots, |\mathbf{z}|\},
\end{aligned}
\label{eq:main_problem}
\end{equation}

where the sum-term $\sum_{i=1}^{|\mathbf{z}|}\mathbb{E}[V_i]$ captures the
total expected VIP (freshness) over the visited IoTDs; $\sum_{i=1}^{|\mathbf{z}|} B_i$
is the total flight-plus-hover energy consumed along tour $\mathbf{z}$, with
$B_i$ defined in Eq.~\eqref{eq:battery_node}. $B_{|\mathbf{z}|+1}$ is the
return-flight energy to the home station; and $b_{\max}$ is the total battery
budget available for the tour. The maximization is over: (i) the tour $\mathbf{z}=(z_i)_{i=1}^{|\mathbf{z}|} \in \mathcal{Z}$, (ii) probing (hovering) positions in the neighborhood of each IoTD, denoted by $\mathbf{l}=(l_i)_{i=1}^{|\mathbf{z}|}$ and (iii) the number of probes at each visited IoTD, denoted by $\mathbf{n}=(n_i)_{i=1}^{|\mathbf{z}|}$. The maximum coverage range at each IoTD is denoted by $d_{\max}$, and $d_i$ is the Euclidean distance from probing location $l_i$ to the IoTD center.
\subsubsection{Problem Characteristics}
The optimization problem involves three coupled decision variables of different types: the tour $\mathbf{z}$ is discrete and combinatorial (leading to a factorial-sized solution space), the hover vector $\mathbf{l}$ is continuous over $\mathbb{R}^{2|\mathbf{z}|}$, and the probe vector $\mathbf{n}$ is integer-valued with $n_i \in \mathbb{Z}^{+} \cup \{0\}$. Furthermore, the expected VIP objective is non-convex and non-differentiable due to the step-like SF transitions in $s_i$ (Fig.~\ref{fig:snrsuccprb}), precluding gradient-based methods. The interplay between these decision variables under stochastic IoTD availability naturally leads to a Mixed-Integer Non-Linear Programming (MINLP) formulation, which makes the problem computationally intractable.
\subsubsection{Multi-Objective Nature}
The optimization problem is inherently multi-objective in nature, where simultaneous 
optimization of coverage, freshness, and data collection probability is required. Flying closer to an IoTD increases packet success 
probability but consumes more flight time, directly reducing residual freshness 
for subsequent nodes. Maximizing coverage will minimize the per-node data collection probability. 
Balancing these trade-offs is central to the PTCOP formulation.

\subsubsection{Computational Complexity}
The optimization problem in~\eqref{eq:main_problem} is 
computationally intractable. The tour planning subproblem 
determining which subset of IoTDs to visit and in what order under the battery constraint is an instance of the 
Orienteering Problem~\cite{Orienteering_Problem}, which is NP-hard. Optimizing hover positions over the continuous neighborhood of each IoTD introduces a high-dimensional search space where the objective exhibits multiple local optima. The step-like transitions in packet success probability across LoRa spreading factor levels (Fig.~\ref{fig:snrsuccprb}) create discontinuities that eliminate the possibility of using derivative information. Finally, determining the optimal number of probes $n_i$ at each position involves sequential decision-making under uncertainty, where the UAV observes probe outcomes (success or failure) and must decide whether to probe again or move to the next IoTD without knowing the true availability state of the current IoTD.

\par  In the proposed FLoRa framework, the aforementioned MINLP problem is solved using an ensemble of methods from metaheuristics~\cite{SA_Np_Hard},\cite{meta_heuristics} and sequential decision-making methods~\cite{bertsekas2012dynamic} integrated in a  hierarchical manner. The ensemble of methods solves the aforementioned sub-tasks effectively using a divide-and-conquer approach.

The optimization problem in~\eqref{eq:main_problem} is a  mixed-integer nonlinear program that is computationally intractable to solve directly. We adopt the nested decomposition approach, where we decompose it into a nested hierarchical structure where each level optimizes one decision variable while treating the others as fixed parameters. Specifically, the tour $\mathbf{z} \in \mathcal{Z}$ is optimized at the upper level, the hover positions $\mathbf{l}$ are optimized at the middle level for a given tour $\mathbf{z}$, and the probing decisions $\mathbf{n}$ are optimized at the lower level for given $\mathbf{z}$ and $\mathbf{l}$. This decomposition is natural because each subproblem has a fundamentally different structure that suggests a different solution method: the upper level tour optimization is a combinatorial NP-hard problem suited to metaheuristic methods such as SA, the middle level hover optimization is a continuous non-differentiable problem suited to derivative-free evolutionary methods such as CMA-ES, and the lower level probing optimization is a sequential decision-making problem under uncertainty suited to POMDP-based methods.

\section{Hierarchical Solution Framework of FLoRa}\label{sec: Hierarchical Solution}
The tour selection 
over all possible path denoted by $\mathcal{Z}$ is NP-hard due to the combinatorial nature of the 
subset selection and ordering problem. The hover position optimization 
over $\mathbb{R}^{2|\mathbf{z}|}$ is non-convex and non-differentiable due to the 
step-like SF transitions in the packet success probability. The probing 
optimization involves sequential decision-making under uncertain IoTD 
availability. The solution requires multiple solvers to address these challenges. So we decompose the problem into 
three hierarchical subproblems by defining the following value functions with the notation convention $f(x;y)$ wherein  decision variables (such as $x$) and given parameters (such as $y$) are separated by a semicolon. For brevity, select parameters that vary during optimization are shown. For instance parameters such as $d_{\max}$ are not shown (but considered implicitly):

\begin{align}
R(\mathbf{z}) &= \max_{\mathbf{l}} F(\mathbf{l}; \mathbf{z}) \label{eq:R}\\
F(\mathbf{l} ; \mathbf{z}) &= \max_{\mathbf{n}} L(\mathbf{n};\mathbf{z}, \mathbf{l},b_{\max}-b_f) \label{eq:F}
\end{align}

where $b_f = \sum_{i=1}^{|\mathbf{z}|} f\,t_{i-1,i} + f\,t_{|\mathbf{z}|,0}$
denotes the total flight energy of tour $\mathbf{z}$ at hover coordinates
$\mathbf{l}$ (including the return leg to the home station), and the residual
$b_{\max}-b_f$ is the energy available for probing.

\noindent $R(\mathbf{z})$ denotes the optimal  reward achievable for a given tour $\mathbf{z}$ by optimizing hover  positions and probing decisions, and $F(\mathbf{l} ; \mathbf{z})$ denotes the optimal reward for given tour $\mathbf{z}$ and hover positions $\mathbf{l}$ by optimizing probing decisions. The original problem then decomposes into three subproblems:

\begin{align}
\mathbf{z}^* &= \arg\max_{\mathbf{z}} R(\mathbf{z}) \tag{SA} \label{eq:SA}\\
\mathbf{l}^* &= \arg\max_{\mathbf{l}} F(\mathbf{l} ; \mathbf{z}) \tag{CMA-ES} \label{eq:CMAES}\\
\mathbf{n}^* &= \arg\max_{\mathbf{n}} L(\mathbf{n} ; \mathbf{z}, \mathbf{l}, b_{\max}-b_f) \tag{POMDP} \label{eq:POMDP}
\end{align}

\noindent where SA optimizes the tour $\mathbf{z}$, invoking CMA-ES for 
each candidate tour to optimize hover positions $\mathbf{l}$, which in turn invokes POMDP for each candidate hover vector to optimize probing 
decisions $\mathbf{n}$. The proposed hierarchical solution framework is illustrated in Fig.~\ref{fig:Flora_Solution_Framework}

\par The details of each subproblem are presented in Sections~\ref{sec:SA}, \ref{sec:CMAES}, and 
\ref{sec:POMDP} respectively. 

\section{P1: Solving Path Planning Using Simulated Annealing Algorithm} \label{sec:SA}\label{SA}

This section describes a metaheuristic approach for solving the Orienteering Problem (OP): selecting a subset of IoTDs $\mathbf{z}\in\mathcal{Z}$ and an order in which the UAV visits them. Finding an optimal tour $\mathbf{z}^*$ is NP-hard. Hence, we resort to near-optimal methods. In particular, Simulated Annealing (SA)~\cite{SA_Np_Hard},~\cite{SA_TOP} is a well-known metaheuristic for combinatorial optimization problems such as OP~\cite{Orienteering_Problem}. Exact solution methods are infeasible at the network scales considered in this work, so we seek high-quality solutions through metaheuristic search.

\textbf{Working principle of SA}: The SA algorithm (shown in Algorithm~\ref{alg:SA}) navigates into the large solution space effectively to solve the following path planning problem:
\begin{align}
\mathbf{z}^* &= \arg\max_{\mathbf{z} \in \mathcal{Z}} R(\mathbf{z})
\end{align}
The SA algorithm starts with a candidate path say, $\mathbf{z}_{\tt init}$ and invokes CMA-ES algorithm to solve the sub problem (P2) of finding $\mathbf{l}^*$ which in turn recursively calls POMDP to solve the sub problem (P3) of finding $\mathbf{n}^*$, for the chosen candidate tour $\mathbf{z}_{\tt init}$. At the end of this process, the score of the chosen path is computed from $R(\mathbf{z}_{\tt init})$.
\par Now, SA generates a new candidate  through perturbations such as swap, shift, insert, and delete. In swap, the order of visits to IoTDs is perturbed through random swapping. In a shift, the existing visit to an IoTD is removed and inserted into a different place in the tour. In Insert, a new unvisited IoTD is added to the existing tour $\mathbf{z}_{\tt init}$. In Delete, a visit to IoTD is randomly deleted. Each of the aforementioned perturbations is chosen uniformly at random. In this manner, the better tours (paths) are chosen in an iterative manner (with the help of perturbation and scoring function provided by the underlying CMA-ES/POMDP methods). 
\begin{figure}[t]
    \centering
    \includegraphics[scale=.7]{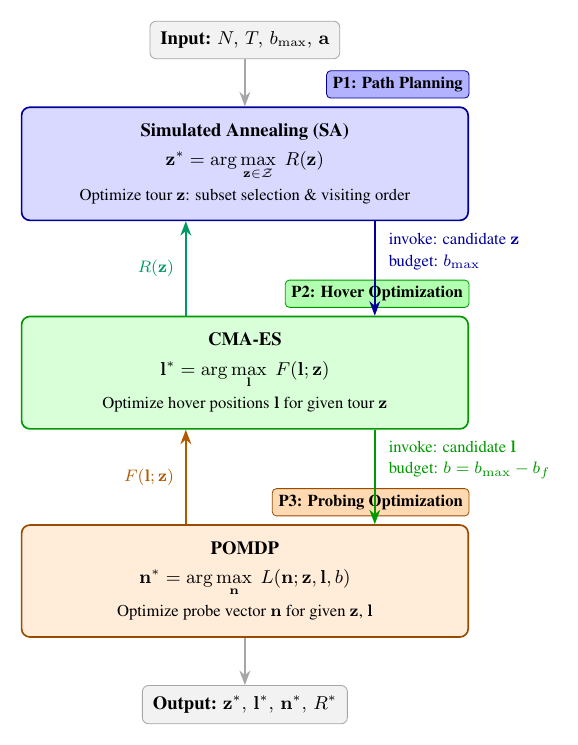}
    \caption{Conceptual block diagram of the proposed FLoRa framework}
    \label{fig:Flora_Solution_Framework}
\end{figure}

\par At each iteration, a candidate tour $\hat{\mathbf{z}}$ 
is generated by perturbing the current tour $\mathbf{z}$ 
via one of four operations: swap, shift, insert, or 
delete, chosen uniformly at random. The candidate is 
evaluated by 
using Eq.~(\ref{eq:R}). A better candidate is always 
accepted, and to escape local minima, a worse candidate is accepted with 
probability $\exp\left(\frac{\delta(\mathbf{\hat{z}},\mathbf{z)}}{\Upsilon}\right)$, which 
decreases exponentially as $\Upsilon$ reduces, gradually 
limiting the acceptance of inferior solutions as the 
search progresses.  The function $\delta(\mathbf{\hat{z}},\mathbf{z)}$ represents the difference in scores, as shown in \textit{Line 10 of Algorithm~\ref{alg:SA}}.

\par The temperature $\Upsilon$ controls the degree of 
randomness in the search. At high temperatures, the 
algorithm freely explores the tour space by accepting 
worse solutions, while at low temperatures it 
increasingly focuses on improving solutions. The 
temperature is reduced geometrically as $\Upsilon 
\leftarrow \alpha \Upsilon$ at each iteration, 
gradually shifting the search from exploration to 
exploitation.
The algorithm terminates when $\Upsilon$ drops below 
$\Upsilon_{\min}$ or the maximum number of iterations $K_{\max}$ (the SA iteration budget, kept distinct from the
freshness deadline $T$)
is reached, with hyper-parameters listed in 
Table~\ref{tab:sim_params}.

\begin{algorithm}[t]
\caption{SA = $\arg\max_{\mathbf{z}}R(\mathbf{z})$}
\label{alg:SA}
\begin{algorithmic}[1]
\State \textbf{Global Input:} $[N]$ (set of IoTDs), $T$ (deadline),  $\mathbf{a}$ (availability prob. vector), $b_{\max}$ (max energy budget).
\State \textbf{Output:}  $\mathbf{z}^*$ (tour)
\State \textbf{Auxiliary Output}: $\mathbf{l}^*$, $\mathbf{n}^*$
\State \textbf{Initialize:} $\mathbf{z} \leftarrow $ random permuted subset of N, $\alpha=0.995$ (cooling rate) ,\;
        $\Upsilon \leftarrow \Upsilon_0 $ (temperature, $> \Upsilon_{\min}$), $\mathbf{z}^*\leftarrow \mathbf{z}$, itr$=0$
\State CurrentScore $\leftarrow\max_{\mathbf{l}}F(\mathbf{l};\mathbf{z})$
\State BestScore  $\leftarrow$ CurrentScore
\While{$\Upsilon > \Upsilon_{\min}$ and $itr < K_{max}$}
    \State $\hat{\mathbf{z}} \leftarrow \textbf{perturb}(\mathbf{z})$
           $\triangleright$ via \{swap, shift, insert, delete\} 
   
   \State NewScore$(\hat{\mathbf{z}}) \leftarrow \max_{\mathbf{l}}F(\mathbf{l};\hat{\mathbf{z}})$
    \State $\delta(\hat{\mathbf{z}},\mathbf{z})\leftarrow$ NewScore - CurrentScore
\If {$\delta(\hat{\mathbf{z}},\mathbf{z}) > 0$ or $\exp\frac{\delta(\hat{\mathbf{z}},\mathbf{z})}{\Upsilon}>$random$(0,1)$}
            \State $\mathbf{z}\leftarrow \hat{\mathbf{z}}$
            \State CurrentScore $\leftarrow$ NewScore
            \If {CurrentScore > BestScore}
            \State $\mathbf{z}^* \leftarrow \mathbf{z}$
            \State BestScore $\leftarrow$ CurrentScore
        \EndIf
    \EndIf
    \State $\Upsilon \leftarrow \Upsilon \times \alpha$
    \State $itr \leftarrow itr+1$
\EndWhile
\State \Return $\mathbf{z}^*$
\end{algorithmic}
\end{algorithm}

\section{P2: Solving  Positioning Problem  Using CMA-ES Algorithm}\label{sec:CMAES}
For a given candidate tour $\mathbf{z}$ by the SA algorithm (discussed in the previous section), we are set to find the best hovering positions, sub-problem (P2), at each of the neighborhoods of IoTDs visited in the candidate tour $\mathbf{z}$.  To this end, the optimal hovering positions problem is formulated as the following \textit{continuous optimization} setting as follows:
\begin{align}
    \mathbf{l}^* &= \arg\max_{\mathbf{l}} F(\mathbf{l} ; \mathbf{z}) \label{eq:CMAES}
\end{align}
\textbf{P2 problem characteristics:}~As described by Eq.~\ref{eq:success_prob} and the same visualized in Fig.~\ref{fig:snrsuccprb}, the SNR surface as a function of Euclidean distances is $i)$ continuous, $ii)$ several ridges of sharply changing SNR values at the borders of SF regions, and $iii)$ gradually ranging SNR values within the SF regions. The step-like SNR transitions restrict the use of gradient-based optimization models, making metaheuristics-based methods suitable for such problems. 

To solve such a problem, we need a `variable metric' method that learns the shape of the terrain  and recognizes noticeable changes (ridges) of favorable regions in the terrain. In such a setting involving narrow ridges in the solution space, the SA algorithm yields ineffective results as it is primarily dependent on the random perturbation method that most likely chooses bad solutions (off of the ridges), requiring more time to converge.
\textbf{Need for CMA-ES:} CMA-ES~\cite{CMA_ES_Tutorial} is particularly well-suited for solving the  sub-problem P2. CMA-ES initially starts with a circular-shaped region around each IoTD and uses the covariance matrix update to learn a narrow ridge shape and stretch/rotate it into a suitable ellipsoid to match the narrow ridges. In this manner, the evolutionary and metaheuristic approach of CMA-ES most effectively finds the position at each IoTDs from the given SA candidate tour of $\mathbf{z}$.

\par \textbf{Working of CMA-ES}: CMA-ES maintains a multivariate Gaussian distribution $\mathcal{N}(\mathbf{m}, 
\sigma^2\mathbf{C})$ over $\mathbb{R}^{2|\mathbf{z}|}$, where $\mathbf{C}$ 
learns the shape of the objective from function evaluations alone, 
without requiring gradient information. This is particularly powerful as $\mathbf{C}$ naturally adapts to the anisotropic objective landscape arising from the geometric interplay between hover position, packet success probability, and 
flight costs to neighboring IoTDs.\\
\par As shown in Algorithm \ref{alg:CMAES} CMA-ES maintains a multivariate Gaussian distribution $\mathcal{N}(\mathbf{m}, \sigma^2\mathbf{C})$ over the search space, where $\mathbf{m}$ is the current best estimate of the solution, $\mathbf{C}$ captures the shape of the objective landscape, and $\sigma$ controls the overall scale of search. At each iteration, the evolution strategy (ES) samples a population of $\Gamma$ candidate solutions from this distribution and evaluates each via a fitness function, selecting the top $\mu = \Gamma/2$ candidates. The covariance matrix adaptation (CMA) then updates $\mathbf{m}$ toward the weighted average of selected candidates, $\mathbf{C}$ to capture the shape and orientation of the selected region, and $\sigma$ via cumulative path length control to regulate the search scale, preventing premature convergence and overshooting. This process repeats until convergence.\\
For our setup, the 
search space is $\mathbb{R}^{2|\mathbf{z}|}$, where each candidate solution 
is a hover vector $\mathbf{l} = (l_1, \ldots, l_{|\mathbf{z}|})$. 
The distribution is initialized with mean $\mathbf{m}_0$ set to 
the IoTD center coordinates $\{c_i\}_{i=1}^{|\mathbf{z}|}$, initial step 
size $\sigma_0$, and covariance $\mathbf{C}_0 = \mathbf{I}_{2|\mathbf{z}|}$. 
The fitness function for a candidate hover vector is defined as:
\begin{equation}
F(\mathbf{l} ;\mathbf{z}) = \max_{\mathbf{n}} L(\mathbf{n};\mathbf{z}, \mathbf{l}, b_{\max}-b_f) 
\end{equation}
\begin{algorithm}[t]
\caption{CMA-ES = $\max_{\mathbf{l}}F(\mathbf{l};\mathbf{z})$}
\label{alg:CMAES}
\begin{algorithmic}[1]
\State \textbf{Global Input:} $[N]$ (set of IoTDs), $T$ (deadline),  $\mathbf{a}$ (availability prob. vector), $b_{\max}$
\State \textbf{Parameter Input:} $\mathbf{z}$ (tour), $b_{\max}$ (battery budget)
\State \textbf{Output:} $\arg\max_{\mathbf{l}}F(\mathbf{l};\mathbf{z})$, $\max_{\mathbf{l}}F(\mathbf{l};\mathbf{z})$
\State \textbf{Initialize:}
       $g=0$, $\mathbf{m}_g \leftarrow  (c_1, \ldots, c_{|\mathbf{z}|})$ (center coordinates),
       $\sigma_g$,
       $\mathbf{C}_g \leftarrow \mathbf{I}_{2|\mathbf{z}|}$, $g_{\max}=$ a function of $[N]$
\Repeat
    \State Sample $\Gamma$ candidates
           $\{\mathbf{l}_j\}_{j=1}^{\Gamma} \sim
           \mathcal{N}(\mathbf{m}_g, \sigma_g^2 \mathbf{C}_g)$
    \For{$j = 1$ to $\Gamma$}
    \If{$b_f> b_{\max}$ or $t(\mathbf{z},\mathbf{l}_j)>T$}
         \State $\Gamma \leftarrow \Gamma \setminus \{\mathbf{l}_j\}$
         \State \textbf{continue}
        \EndIf
        \State Score$(\mathbf{l}_j) \leftarrow
               \max_{\mathbf{n}}L(\mathbf{n};\mathbf{z}, \mathbf{l}^{(j)}, b_{\max}-b_f)$
    \EndFor
    \State sort($\{\mathbf{l}_j\}_{j=1}^{\Gamma}$) by Score($\mathbf{l}_j$)  and select top $\mu = \lfloor\Gamma/2\rfloor$ scored candidates 
    \State Use top $\frac{1}{2}$ sorted candidates $\mu$ to update $\mathbf{m}_{g+1}$, $\mathbf{C}_{g+1}$,
           $\sigma_{g+1}$ $\triangleright$ (as per ~\cite{CMA_ES_Tutorial})
    \State $g \leftarrow g + 1$ $\triangleright$ next generation sampling.
   
\Until{$\sigma_g < 10^{-6}$ \textbf{or} $g > g_{\max}$}
\State $\mathbf{l}^* \leftarrow \mathbf{m}_g$
\State Score$(\mathbf{l}^*) \leftarrow
       \max_{\mathbf{n}}L(\mathbf{n};\mathbf{z}, \mathbf{l}^*, b_{\max}-b_f)$
\State \Return $\mathbf{l}^*$, Score$(\mathbf{l}^*)$
\end{algorithmic}
\end{algorithm}

\section{P3: Finding Optimal Number of Probes using POMDP Approach }\label{sec:POMDP}
The key challenge in $P3$ is that the underlying POMDP is inherently stochastic, operates over a continuous belief space and is also typically PSPACE-Hard. It is critical to note that the upper heuristic layer of CMA-ES (as described in $P2$) need to invoke POMDP \textit{multiple times} for score (fitness value) evaluation in an iterative manner. Most importantly, the strategy for entire path, positioning and probing need to be computed (offline) apriori before the tour starts. To this end, analytical tractability plays a central role over efficiency. While the deterministic nature of the  SA and CMA-ES heuristic layers helps in solving the problem tractably with the  hard constraints of battery availability $b_{\max}$,  accounting for such hard battery constraints in every sample path of the POMDP is computationally intractable (as it requires augmenting state space with continuous battery levels). For analytical tractability, we therefore approximate remaining hard battery constraint $b_{\max}-b_f$ as an expected constraint with the help of Lagrangian relaxation. 
\par From the upper heuristics layers of SA and CMA-ES, we get the tour $\mathbf{z}$ and positioning coordinates $\mathbf{l}$, respectively. After deducting battery consumed during the UAV's flight $b_f$ from the maximum available battery $b_{\max}$, with the remaining budget of $b=b_{\max}-b_f$ the goal of POMDP is formulated as follows:
\begin{equation}
    \max_{\mathbf{n}}L(\mathbf{n};\mathbf{z}, \mathbf{l},b)
\end{equation}
\subsection{Lagrangian Approximation 1}\label{sc:approx1lagrang}
As discussed earlier, to avoid curse of dimensionality in tracking the battery in the state space (using state augmentation), we alternatively consider expected battery constraints as follows:
\begin{align}
     \max_{\mathbf{n}}L(\mathbf{n};\mathbf{z}, \mathbf{l},b) &= \max_{\mathbf{n}}\sum_{i=1}^{|\mathbf{z}|}\mathbb{E}[V(n_i)]\\
     {\tt s.t} \quad \sum_{i=1}^{|\mathbf{z}|}\mathbb{E}[B(n_i)]  &\le b
\end{align}
where $\mathbf{n}=(n_i)_{i=1}^{|\mathbf{z}|}$ is the ordered tuple of number of probes made by UAV at each IoTD $i$, $V(n_i)=V_i$ is the random variable representing VIP metric at IoTD $i$ which is a function of number of probes $n_i$, and the hovering battery consumption $B(n_i)$ which is also the function of the time taken  by the number of probes $n_i$ at the IoTD $i$.

\par We can now employ the standard method of converting the aforementioned constrained network-level POMDP into unconstrained formulation using the Lagrangian function (using Lagrange multiplier, $\lambda \geq 0$) as follows:
\begin{equation}
L(\mathbf{n}, \lambda;\mathbf{z}, \mathbf{l},  b ) = \sum_{i=1}^{|\mathbf{z}|} \mathbb{E}[V(n_i)] - \lambda \sum_{i=1}^{|\mathbf{z}|} \mathbb{E}[B(n_i)] + \lambda b.
\label{eq:lagrangian}
\end{equation}

\par The Lagrange multiplier $\lambda$ in Eq.~\ref{eq:lagrangian} acts as shadow price indicating the unit converter representing VIP rewards earned per unit of battery.  The Lagrangian dual function $D(\lambda;b)$ for Eq.~\ref{eq:lagrangian} is represented as follows:
\begin{equation}
D(\lambda;b) = \sup_{\mathbf{n}} L(\mathbf{n};\mathbf{z}, \mathbf{l},\lambda).
\label{eq:dual_function}
\end{equation}
\par  The dual problem seeks to find optimal Lagrange multiplier by solving the optimization formulation: 
\begin{equation}\label{eq:dualproblem}
\inf_{\lambda \geq 0} D(\lambda;b) 
\end{equation}
The aforementioned dual problem can be solved using the bisection method (as shown in Algorithm~$\ref{alg:bisection}$), to get the optimal Lagrange multiplier as follows:
\begin{equation}\label{eq:dualproblemreturn}
\lambda^* = \arg\inf_{\lambda \geq 0} D(\lambda;b).
\end{equation}

\begin{algorithm}[t]
\caption{Bisection Method: $\arg\inf_{\lambda}D(\lambda; b)$}
\label{alg:bisection}
\begin{algorithmic}[1]
\State \textbf{Global Input:} $h$, $\epsilon$
\State \textbf{Parameter Input:} $b$
\State \textbf{Output:} $\lambda^*$
\State Initialize $\lambda_{\text{low}} \gets 0,\; \lambda_{\text{high}} \gets \frac{1}{h}$
\While{$(\lambda_{\text{high}} - \lambda_{\text{low}}) > \epsilon$}
    \State $\lambda_{\text{mid}} \gets \frac{\lambda_{\text{low}} + \lambda_{\text{high}}}{2}$
    \State $b_{\tt mid}=\sum_{i=1}^{|\mathbf{z}|}\mathbb{E}[B(n_i)]$, from the optimal solution of Eq.~\ref{eq:dual_function}, $\sup_{\mathbf{n}}{L}(\mathbf{n};\mathbf{z}, \mathbf{l},  \lambda_{\tt mid})$, using $\lambda_{\tt mid}$ 
    \If{$b_{\tt mid}> b$}
        \State $\lambda_{\text{low}} \gets \lambda_{\text{mid}}$
    \Else
        \State $\lambda_{\text{high}} \gets \lambda_{\text{mid}}$
    \EndIf
\EndWhile
\State $\lambda^* \gets \lambda_{\text{mid}}$
\State \Return $\lambda^*$
\end{algorithmic}
\end{algorithm}

\par The optimal number of probes at each IoTD $\mathbf{n}$ is computed using the optimal Lagrange multiplier $\lambda^*$ yielded by the bisection method :
\begin{equation}
    \mathbf{n}^* = \arg\sup_{\mathbf{n}}\ L(\mathbf{n}; \mathbf{z},
    \mathbf{l},\lambda^*).
\end{equation}
\subsection{Forward-Decomposition Approximation 2}\label{sc:approx2decoup}
In the previous Section~\ref{sc:approx1lagrang}, the global shared hard energy constraint (ill-posed) was approximately transformed into a soft (expected energy constraint) constraint by using Lagrangian relaxation as regularization method thereby making the problem well-posed.

\par In this section, we focus on curse of dimensionality of solving the network-wide POMDP optimization problem directly  as depicted in Eq.~\ref{eq:dualproblemreturn}, wherein the combinatorial combination of each node's states increases exponentially leading to curse of dimensionality. For tractability, we approximately decompose the network level POMDP to node-level POMDPs, this second-level of approximation is aimed at decoupling the dependence created by the global shared time resource encoded in the  VIP metric $V_i$ (as defined in Eq.~\ref{eq:vip}) independently across the nodes. Every IoTD's $i$ objective has dependence on total time incurred by UAV during flight and hovering on the previous nodes $j<i$, $\tau_j=t_{j-1,j}+t_j$. While the flight times such as $t_{j-1,j}$ are determined at the upper heuristics layer, the POMDP controls the hover time as a function of number of probes $n_j$ at each of the previous node $j$.

\par We use the definition of VIP metric from function of $n_i$ to hovering time $t_i=n_i\times t^l_i$ over $n_i$ probes along with hovering energy consumption also as a function of $t_i$ and accordingly unwind the corresponding terms of the  Lagrangian function (in Eq.~\ref{eq:lagrangian}) to reveal the dependency of shared global time resource $T$ among the IoTDs of the specified UAV path: 
\begin{equation}\label{eq:unwindlagrangian}
L(\mathbf{n},\lambda;\mathbf{z}, \mathbf{l} ) = \mathbb{E}
[V(T,\tau_1)-\lambda B(t_1)+V(T,\tau_1,\tau_2)-\lambda B(t_2)+\cdots].
\end{equation} 
\par From Eq.~\ref{eq:unwindlagrangian}, as each VIP term $V(.)$ depends on total probing time from previous nodes, and deterministic flight times from previous nodes, the resulting POMDP need to keep track of global time as a  part of state which leads to curse of dimensionality. So, we approximate using the following decomposition method by replacing random variables $\tau_i$ with their respective expectations $\mathbb{E}_{n_i}[\tau_i]$ as follows:
\begin{align}\label{eq:decouplelagrangian}
L(\mathbf{n}, \lambda;\mathbf{z}, \mathbf{l}) = \mathbb{E}
[\underbrace{V(T,\tau_1)-\lambda B(t_1)}_{L_1(n_1,\lambda;z_1,l_1)}+\underbrace{V(T,\mathbb{E}_{n_1}[\tau_1],\tau_2)-\lambda B(t_2)}_{L_2(n_2,\lambda;z_2,l_2)}+\cdots].
\end{align} 
\par Note that the decomposition approximation method of replacing previous nodes hovering time (a random variable) by its expectation, helps solve the node-level problem in a sequential forward from IoTD$_1$ through to IoTD$_{|\mathbf{z}|}$. In this manner, each subsequent IoTD can tractably use the expected time computed from previous IoTDs in an incremental manner. Now the IoTD level POMDP dual problem  can be solved tractably as follows:

\begin{align}\label{eq:pomdpconv}
D_i(\lambda)=\max_{n_i} L_i(n_i ; z_i, l_i, \lambda) = \nonumber\\
\max_{\pi\in\Pi}\mathbb{E}_\pi\bigg[ \mathbb{I}_{\{\textrm{data~collected}\}}\underbrace{\Big(T-\sum_{j=1}^{i-1} \mathbb{E}_{n_j}[\tau_j]-t_{i-1,i}\Big)}_{\textrm{constant~w.r.t.~}\pi}-  \sum_{t=1}^\infty n_t^\pi t_i^l- \nonumber \\ \lambda\sum_{t=1}^\infty n_t^\pi ht_i^l\bigg]
\end{align}

where $\mathbb{I}\{\text{collected}\}$ indicates IoTD $i$ successfully collected data and the policy $\pi$ is defined as $\pi = (n_t^\pi)_{t=0}^\infty$ with  $n_t^\pi \in \{0,1\}$, and $n_i = \sum_{t=1}^\infty n_t^{\pi}$. The summation in Eq.~\ref{eq:pomdpconv} represents number of probes made by the UAV at IoTD $i$, with each time slot $t$ of length equal to the SF dependent LoRa time-on-air(ToA) of a probe.

\par Eq.~\ref{eq:pomdpconv} represents the stochastic decision process formulation of the optimization sub-problem (P3).  As (the realization of) the state of availability of IoTD $i$ is unknown (hidden) at the UAV, the aforementioned stochastic decision process is formulated as  Partially Observable 
Markov Decision Process (POMDP).

\par \textbf{POMDP Formulation}:
The POMDP is defined by the tuple $(\mathcal{X}, \mathcal{U}, \mathcal{O}, \mathcal{P}, \mathcal{R})$, described as follows.

\textit{State Space} $\mathcal{X}$: The true availability of IoTD $i$ during the mission is fixed but unknown to the UAV. The UAV therefore maintains a belief $x_t \in \mathcal{X} = [0,1]$, where $x_t$ is the posterior probability that IoTD $i$ is available given the probe outcomes observed up to slot $t$. The belief is initialized at the known prior availability probability, $x_1 = a_i$.

\textit{Action Space} $\mathcal{U}$: At each decision epoch $t$, the UAV selects one of three actions: idle ($u_t = 0$, remain hovering without probing), probe ($u_t = 1$, send a probe packet to IoTD $i$), or terminate ($u_t = 2$, stop hovering and depart, i.e., fly to the next IoTD in the tour or return to the home station $0$). Thus $\mathcal{U} = \{0, 1, 2\}$, with $u_t \in \mathcal{U}$. The inter-node flight energy incurred by the terminate action is accounted for at the upper CMA-ES/SA layer; at the node level it is therefore treated as a costless, terminating action leading to an absorbing state $\phi$.

A successful collection requires both that IoTD $i$ is available and that the frame is received over the channel; hence the per-probe success probability is $x_t\, s_i^{l_i}$, where $s_i^{l_i}$ is the packet success probability at hover location $l_i$ from Eq.~(\ref{eq:success_prob}).

\textit{Transition Model} $\mathcal{P}$: Starting from the prior $x_1 = a_i$, the belief evolves under each action as follows. For probe ($u_t = 1$), a successful response confirms and consumes the data, so the node-level process transitions to the absorbing state $\phi$; a failed probe leaves the device belief Bayes-updated as follows:
\begin{align}
    x_{t+1} = \begin{cases}\phi & \text{w.p. } x_t s_i^{l_i}, \\
     \frac{x_t(1 - s^{l_i}_i)}{x_t(1 - s^{l_i}_i) + (1 - x_t)}  &\text{w.p. } (1 - x_t s_i^{l_i}). 
    \end{cases}
    \label{eq:bayesupdate}
\end{align}
For idle ($u_t = 0$), no probe is sent and the belief is unchanged:
\begin{align}
    x_{t+1} &= x_t & \text{w.p. } 1. \label{eq:idletransition}
\end{align}
For terminate ($u_t = 2$), the node-level process ends and transitions to the absorbing state:
\begin{align}
    x_{t+1} &= \phi & \text{w.p. } 1. \label{eq:terminatetransition}
\end{align}

\par \textit{Observation Model}: The probability of observing a successful data retrieval upon probing at IoTD $i$ is:
\begin{align}
\mathbb{P}(o_t = 1 \mid x_t) &= x_t s_i^{l_i}, \\
\mathbb{P}(o_t = 0 \mid x_t) &= 1 - x_t s_i^{l_i}.
\end{align}

\par \textit{Stage Reward} $\mathcal{R}$: The objective in Eq.~(\ref{eq:pomdpconv}) is solved with the above POMDP using the following single-stage reward $R_t$ (a random variable). For the probe action ($u_t = 1$):
\begin{align}\label{eq:stagereward}
R_t = \begin{cases}
\left(T-\sum_{j=1}^{i-1}\mathbb{E}_{n_j}[\tau_j]-t_{i-1,i}\right) - t^l_i(1 + \lambda h) & \text{w.p. } x_t s^{l_i}_i, \\[4pt]
-t^l_i(1 + \lambda h) & \text{w.p. } 1 - x_t s^{l_i}_i,
\end{cases}
\end{align}

For the idle action ($u_t = 0$), the UAV pays the hovering cost without probing:
\begin{align}
R_t = -t^l_i(1 + \lambda h) \quad \text{w.p. } 1 \quad (u_t = 0,\ \text{idle}). \nonumber
\end{align}
For the terminate action ($u_t = 2$), the node-level process ends at zero cost (the departure flight energy being accounted upstream):
\begin{align}
R_t = 0 \quad \text{w.p. } 1 \quad (u_t = 2,\ \text{terminate}).
\end{align}

The $-\tau_i$ penalty of Eq.~\eqref{eq:vip} splits across layers: the flight part $t_{i-1,i}$ is charged at the SA/CMA-ES layers (via $b_f$ and the freshness constant $T-\sum_{j<i}\mathbb{E}_{n_j}[\tau_j]$), so the node-level reward in Eq.~\eqref{eq:stagereward} retains only the priced hover cost $t^l_i(1+\lambda h)$.

\begin{lemma}\label{lem:reducedpolicy}[Reduced Policy and Action Space]
Let the per-slot action space be $\mathcal{U}=\{0,1,2\}$, where $0$ denotes idle (remain hovering without probing), $1$ denotes probe, and $2$ denotes terminate (departure to the next IoTD in the tour, or return to the home station $0$). Then there exists a reduced policy space $\tilde{\Pi}$ containing an optimal policy, wherein each $\tilde{\pi}\in\tilde{\Pi}$ has the form
$(\underbrace{1,1,\ldots,1}_{n^{\tilde{\pi}}=\sum_{t}n_t^{\tilde{\pi}}},\,2)$:
the UAV probes for some number of consecutive attempts and then terminates. Equivalently, the idle action is dominated, so the effective optimal decision at every slot reduces to probe versus terminate.
\end{lemma}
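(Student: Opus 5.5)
The plan is a pathwise exchange argument that uses the belief dynamics \eqref{eq:bayesupdate}--\eqref{eq:terminatetransition} and the stage rewards in \eqref{eq:stagereward}. The key structural fact is that idle leaves the belief unchanged by \eqref{eq:idletransition} but incurs the strictly negative reward $-t^l_i(1+\lambda h)$. The freshness constant $C_i := T-\sum_{j<i}\mathbb{E}_{n_j}[\tau_j]-t_{i-1,i}$ does not depend on the slot index: it is fixed in the node-level problem \eqref{eq:pomdpconv} by the forward decomposition. Hence the only state that matters is the belief $x_t$ (or $\phi$), and the problem is stationary in $x_t$.

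\emph{Step 1: idle is dominated.} Take any (possibly history-dependent) policy $\pi$ and any slot at which $\pi$ idles. Define $\pi'$ by deleting that idle slot and shifting all later decisions one slot earlier. Because idle changes neither the belief nor the observation history, $\pi'$ induces the same joint law of probe outcomes and termination. Every sample path of $\pi'$ therefore has the same collection indicator. It also has one fewer hover-cost term, unless the deleted idle occurs on a path that has already absorbed, which cannot happen since $\phi$ is absorbing. Consequently $\mathbb{E}_{\pi'}[\cdot]\ge \mathbb{E}_{\pi}[\cdot]+t^l_i(1+\lambda h)\,\mathbb{P}_\pi(\text{idle at that slot})$. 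Repeating this deletion removes all idles (by a limiting argument if infinitely many occur) and never decreases value. So the optimum may be sought among policies over $\{1,2\}$ only. Equivalently, in Bellman form $J(x)=\max\{0,\;-t^l_i(1+\lambda h)+J(x),\;Q_1(x)\}$, the idle branch can attain the max only when $J(x)\le J(x)-t^l_i(1+\lambda h)$, which is impossible because $t^l_i(1+\lambda h)>0$ for $\lambda\ge 0$.

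\emph{Step 2: the remaining policies have the stated form.} Once idles are removed, a policy over $\{1,2\}$ probes until it terminates or absorbs on success. The only information available is the sequence of failures, since any success absorbs into $\phi$. Thus along the unique non-absorbed path all observations are failures, and the belief follows the deterministic recursion $x_{t+1}=g(x_t)$ with $g(x)=x(1-s)/(1-xs)$. A deterministic policy is then fully specified by the first slot at which it plays $2$, that is, by $n^{\tilde\pi}$ consecutive probes followed by termination, which is exactly the form $(1,\dots,1,2)$. Randomized policies are mixtures over $n$, so some deterministic $n$ does at least as well by linearity of the objective. It remains to show that an optimum exists, i.e.\ that the supremum over $n\in\mathbb{Z}^+\cup\{0\}$ is attained. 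Since $g(x)\downarrow 0$ when $s>0$, the per-probe expected gain $x_t s\,C_i - t^l_i(1+\lambda h)$ eventually becomes negative and stays negative. Hence the value as a function of $n$ is maximized at a finite $n$ (or at $n=0$), and the case $s=0$ trivially gives $n=0$.

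\emph{Main obstacle.} I expect the delicate point to be Step 1 for general history-dependent or infinite-horizon policies. The shift construction must be shown to preserve the law of outcomes, and the value must be well defined, since undiscounted costs could sum to $-\infty$. I would handle this by noting that any policy with infinite expected hover cost is worse than immediate termination (value $0$), so it suffices to consider policies with finite expected probing time, for which the pathwise deletion and dominated convergence apply.
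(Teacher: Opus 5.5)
Your proof is correct and uses the same exchange argument as the paper. Because an idle slot leaves the belief unchanged by \eqref{eq:idletransition}, and the freshness constant $C_i$ does not depend on the slot, deleting that slot preserves the law of probe outcomes and saves exactly $t^l_i(1+\lambda h)$. You are more thorough than the paper in three ways. First, the paper only removes idles that precede a later probe and then simply asserts that terminate is the only surviving non-probe action; you delete arbitrary idle slots, including trailing ones. Second, your Bellman one-liner (idle would force $J(x)=-t^l_i(1+\lambda h)+J(x)$ with $J$ bounded) gives a quick independent check. Third, your Step~2 makes explicit what the paper leaves implicit: the non-absorbed path consists only of failures, so a policy over $\{1,2\}$ is just a choice of $n$, and mixtures over $n$ do no better.

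One step in your existence argument is false as stated. The claim $g(x)\downarrow 0$ for $s>0$ holds only when $x<1$, because $x=1$ is a fixed point of $g(x)=x(1-s)/(1-xs)$. Suppose $a_i=1$ and $s^{l_i}_i C_i>t^l_i(1+\lambda h)$. Then every further probe has the same positive expected gain, and the value
$\sum_{k=1}^{n}(1-s^{l_i}_i)^{k-1}\bigl(s^{l_i}_iC_i-t^l_i(1+\lambda h)\bigr)$
is strictly increasing in $n$. Its supremum is attained only by probing until success ($n=\infty$), not at any finite $n$. So you should either assume $a_i<1$ or allow $n^{\tilde\pi}=\infty$ in $\tilde\Pi$; with that change the reduced-form claim survives, since ``probe until success'' still involves no idling. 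The paper's proof never addresses attainment, and the proof of Theorem~\ref{thm:threshold} makes the same implicit assumption when it posits a last positive-reward slot $\overline{n}$.
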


\begin{proof}
See Appendix~\ref{app:lemmas}.
\end{proof}

\begin{lemma}\label{lem:Monotone} The state $x_t$ of the POMDP is  monotonically non-increasing over time slots $t$.
\end{lemma}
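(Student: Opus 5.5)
The plan is to prove the claim one transition at a time: it suffices to show $x_{t+1}\le x_t$ for each slot $t$ at which the process has not yet entered the absorbing state $\phi$. The proof then follows by induction on $t$, starting from $x_1=a_i$. I will split into cases according to the action taken, using the transition model in Eqs.~\eqref{eq:bayesupdate}--\eqref{eq:terminatetransition}. By Lemma~\ref{lem:reducedpolicy}, an optimal policy never uses idle, so only the probe case really matters. Still, the idle case is trivial, since Eq.~\eqref{eq:idletransition} gives $x_{t+1}=x_t$. The terminate case, and a successful probe, both move the process to $\phi$, after which no belief is updated. Monotonicity is therefore a statement about the belief values along the non-absorbed part of the sample path.

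The only substantive case is a failed probe. There the update is
\begin{equation*}
x_{t+1}=g(x_t):=\frac{x_t(1-s)}{x_t(1-s)+(1-x_t)}=\frac{x_t(1-s)}{1-x_t s},
\end{equation*}
with $s=s_i^{l_i}\in[0,1]$. I will compute
\begin{equation*}
x_t-g(x_t)=\frac{x_t(1-x_t s)-x_t(1-s)}{1-x_t s}=\frac{x_t s(1-x_t)}{1-x_t s}\ge 0 .
\end{equation*}
This is nonnegative because $x_t\in[0,1]$ and $s\in[0,1]$, so both the numerator and the denominator are nonnegative. Hence $x_{t+1}\le x_t$. I will also note that $g$ maps $[0,1]$ into $[0,1]$, so the inductive hypothesis $x_t\in[0,1]$ propagates from one slot to the next.

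The main obstacle is the degenerate denominator: $1-x_t s=0$ exactly when $x_t=s=1$. In that situation a failed probe has probability $1-x_t s=0$, so this branch never occurs and the update need not be defined. I will remark that the Bayes update is only invoked on events of positive probability, which disposes of the case. I will also note that equality $x_{t+1}=x_t$ holds precisely when $x_t\in\{0,1\}$ or $s=0$, which is why the lemma states ``non-increasing'' rather than strictly decreasing. Finally, I will interpret the result: each failed probe is evidence against availability, so the belief drifts downward. This sets up the threshold structure of the optimal policy that is used later.
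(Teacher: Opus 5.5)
Your proof is correct and takes essentially the same route as the paper: both reduce the claim to the failed-probe Bayes update and compare it with $x_t$ algebraically, with the paper arguing by contradiction (assuming $x_{t+1}>x_t$ and deriving $x_t>1$) where you compute $x_t-x_{t+1}=\frac{x_t s_i^{l_i}(1-x_t)}{1-x_t s_i^{l_i}}\ge 0$ directly. Your version is also a bit more careful, since it explicitly covers the idle action, the case $x_t=0$, and the degenerate denominator $1-x_t s_i^{l_i}=0$, which the paper's cross-multiplication and division-by-$x_t$ steps pass over without comment.
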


\begin{proof}
See Appendix~\ref{app:lemmas}.
\end{proof}

\begin{theorem}[Threshold Policy]
\label{thm:threshold}
The optimal probing policy for IoTD $i$ is a threshold policy: there exists a
sequence of thresholds against which the UAV probes at slot $t$ if and only if
\begin{equation}
\label{eq:threshold}
x_t\,s^{l_i}_i \;\ge\; \frac{t^l_i(1+\lambda h)}{T-\sum_{j=1}^{i-1}\mathbb{E}_{n_j}[\tau_j]-t_{i-1,i}},
\end{equation}
and stops otherwise.
\end{theorem}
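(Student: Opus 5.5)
By Lemma~\ref{lem:reducedpolicy} we may restrict attention to policies that, at each slot, choose between probe ($u_t=1$) and terminate ($u_t=2$). The problem is then an optimal stopping problem on the belief process~\eqref{eq:bayesupdate}. Write $C_i := T-\sum_{j=1}^{i-1}\mathbb{E}_{n_j}[\tau_j]-t_{i-1,i}$ for the constant freshness reward, $c_i := t^l_i(1+\lambda h)$ for the priced per-probe cost, and $s:=s^{l_i}_i$. The expected one-step gain of probing at belief $x$ is then
\begin{equation*}
g(x) := x s\, C_i - c_i ,
\end{equation*}
and the Bellman equation for the node-level value function reads
\begin{equation*}
W(x)=\max\Big\{0,\; g(x) + (1-xs)\,W\big(\beta(x)\big)\Big\},
\qquad
\beta(x)=\frac{x(1-s)}{x(1-s)+1-x}.
\end{equation*}
The threshold condition~\eqref{eq:threshold} is exactly $g(x_t)\ge 0$. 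The plan is therefore to show that the one-step-lookahead (myopic) rule is optimal. This is the ``monotone case'' of optimal stopping.

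\textbf{Step 1 (monotonicity of the gain).} If $C_i\le 0$, then $g<0$ everywhere, no probe is worthwhile, and the threshold is never met. Assume therefore $C_i>0$, so that $g$ is non-decreasing in $x$. By Lemma~\ref{lem:Monotone}, the belief along any non-absorbed sample path satisfies $x_{t+1}=\beta(x_t)\le x_t$. Hence $g(x_{t+1})\le g(x_t)$, and the ``stop region'' $\{x: g(x)<0\}$ is closed under the failure transition: once the UAV is in it, it never leaves.

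\textbf{Step 2 (stop when $g(x)<0$).} Suppose $g(x_t)<0$. Any policy continuing from $x_t$ collects a sum of stage rewards whose conditional expectations are $g(x_t), g(x_{t+1}),\dots$. These are taken along the failure path, weighted by the survival probabilities $\prod_k(1-x_k s)$, and each is strictly negative by Step 1. So every continuation has negative expected value and $W(x_t)=0$, attained by terminating. I expect this to be the main technical point, since the horizon is unbounded. I would handle it by first proving the statement for an $N$-slot truncated problem by backward induction ($W_N\equiv 0$, and $W_k(x)=0$ whenever $g(x)<0$). I would then pass to $N\to\infty$ using that the stage rewards are bounded and each probe costs at least $c_i>0$, so that $W_N\uparrow W$ by monotone convergence. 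Alternatively, one can bound directly: the expected number of probes of any policy with finite value is finite because every probe incurs the cost $c_i>0$.

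\textbf{Step 3 (probe when $g(x)\ge 0$).} Since $W\ge 0$, the probe branch is worth $g(x)+(1-xs)W(\beta(x))\ge g(x)\ge 0$, which weakly dominates stopping. Ties at equality are immaterial, so probing is optimal. Combining Steps 2 and 3 gives the if-and-only-if characterization~\eqref{eq:threshold}.

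Finally, the belief along the failure path is deterministic: $x_t=\beta^{(t-1)}(a_i)$. The rule is therefore equivalently a probe-count threshold, namely $n_i^*=\min\{t: g(\beta^{(t-1)}(a_i))<0\}-1$. This yields the ``sequence of thresholds'' in the statement, and it also feeds the absorbing-chain evaluation used by the upper layers.
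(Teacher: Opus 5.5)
Your proposal is correct and follows essentially the paper's argument: induction on the Bellman recursion, with Lemma~\ref{lem:Monotone} making the region $\{g<0\}$ closed under failed probes (so the continuation value vanishes there and terminating is optimal), and $W\ge 0$ making probing optimal whenever $g(x_t)\ge 0$. Your truncation-and-limit treatment of the unbounded horizon is more careful than the paper's appeal to a last profitable slot $\overline{n}$, which need not exist when $a_i=1$ (the belief then never decreases); one small correction is that for $C_i<0$ the displayed inequality holds trivially rather than never, so its equivalence with $g(x_t)\ge 0$ requires $C_i>0$, as the paper assumes.
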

\begin{proof}
We use finite-horizon backward induction~\cite{bertsekas2012dynamic}. Each
probe consumes a slot of hover time, so only finitely many probes yield a
positive reward; let $\overline{n}$ be the last such slot, with terminal value
$J^*_{\overline{n}}(x_{\overline{n}})=0$ for all $x_{\overline{n}}$. Throughout
probing $T-\sum_{j=1}^{i-1}\mathbb{E}_{n_j}[\tau_j]>0$, since the policy
terminates once freshness is exhausted.

By Lemma~\ref{lem:reducedpolicy} the idle action ($u_t=0$) is dominated, so on
the optimal path the only non-probe action is terminate ($u_t=2$), which
ends the process at IoTD $i$ at zero cost. Hence the stop branch takes value
$0$, and the stage-indexed Bellman equation is
\begin{equation}
\label{eq:bellman}
\begin{split}
J^*_t(x_t)=\max\Big\{\, 0,\;
-t^{l_i}(1+\lambda h) + x_t s^{l_i}_i \Big(T-\!\sum_{j=1}^{i-1}\mathbb{E}_{n_j}[\tau_j] -t_{i-1,i}\Big)\\
+ (1-x_t s^{l_i}_i)\,J^*_{t+1}(x_{t+1}) \Big\},
\end{split}
\end{equation}
where a failed probe updates $x_{t+1}$ by Eq.~\ref{eq:bayesupdate}. Probing is
optimal at slot $t$ iff the probe branch is nonnegative.

Base step: At $t=\overline{n}-1$, since $J^*_{\overline{n}}\equiv 0$, the probe
branch equals
$-t^{l_i}(1+\lambda h)+x_t s^{l_i}_i\big(T-\sum_{j=1}^{i-1}\mathbb{E}_{n_j}[\tau_j]-t_{i-1,i}\big)$,
which is nonnegative iff
$x_t s^{l_i}_i \ge t^{l_i}(1+\lambda h)/\big(T-\sum_{j=1}^{i-1}\mathbb{E}_{n_j}[\tau_j]-t_{i-1,i}\big)$,
giving Eq.~\ref{eq:threshold}. Thus $J^*_{\overline{n}-1}(x_t)\ge 0$, and equals
$0$ whenever the condition fails.

Induction step: Assume that at slot $t+1$ probing is optimal iff
Eq.~\ref{eq:threshold} holds, that $J^*_{t+1}\ge 0$, and that $J^*_{t+1}=0$
whenever Eq.~\ref{eq:threshold} fails. Consider slot $t$ with
$x_t s^{l_i}_i < t^{l_i}(1+\lambda h)/\big(T-\sum_{j=1}^{i-1}\mathbb{E}_{n_j}[\tau_j]\big)$.
The right-hand side of Eq.~\ref{eq:threshold} depends only on the flight and
hover times of the preceding nodes $j<i$ and is therefore constant across the
probe slots of IoTD $i$. By Lemma~\ref{lem:Monotone}, a failed probe gives
$x_{t+1}\le x_t$, so $x_{t+1}s^{l_i}_i \le x_t s^{l_i}_i$ still violates the
(unchanged) threshold. Hence the condition remains violated at $t+1$, and by
hypothesis $J^*_{t+1}(x_{t+1})=0$. The continuation term vanishes and the probe
branch
$-t^{l_i}(1+\lambda h)+x_t s^{l_i}_i\big(T-\sum_{j=1}^{i-1}\mathbb{E}_{n_j}[\tau_j]-t_{i-1,i}\big)\le 0$,
so terminating is optimal and $J^*_t(x_t)=0$. When Eq.~\ref{eq:threshold} holds,
the probe branch is at least
$-t^{l_i}(1+\lambda h)+x_t s^{l_i}_i\big(T-\sum_{j=1}^{i-1}\mathbb{E}_{n_j}[\tau_j]-t_{i-1,i}\big)\ge 0$
(using $J^*_{t+1}\ge 0$), so probing is optimal and $J^*_t(x_t)\ge 0$. The same
condition governs slot $t$.

By induction, Eq.~\ref{eq:threshold} holds at every slot. Since
$x_t s^{l_i}_i$ is non-increasing (Lemma~\ref{lem:Monotone}) while the
threshold is constant across the slots of IoTD $i$, once the condition fails it
never holds again. Together with Lemma~\ref{lem:reducedpolicy}, the UAV probes
for a consecutive run and then terminates permanently.
\end{proof}

\begin{remark}
The aforementioned threshold holds for a non-trivial case wherein the random IoTD duty-cycle state is fixed during the entire tour, and the POMDP solver tracks the availability belief in Bayesian manner. The other cases of independent random duty-cycling within the tour are not considered in this work.
\end{remark}

\subsubsection*{Computing $\mathbb{E}[n_i]$}
From Theorem~\ref{thm:threshold}, let $n_i$ be the total number of states until the threshold of Eq.~\ref{eq:threshold} is met for the IoTD $i$. Then these $n_i$ states form transient states of the absorbing Markov chain induced by the underlying optimal threshold policy. Let $\mathbf{Q}_{n_i\times n_i}$ be transient matrix of the absorbing Markov chain. The expected number of probes made by this absorbing Markov chain is given by:
\begin{equation}\label{eq:absexpsteps}
    \mathbb{E}[n_i]=\mathbf{e}^T_1(\mathbf{I}-\mathbf{Q})^{-1}\mathbf{1}
\end{equation}
where $\mathbf{1}$ is a column one vector of dimension $n_i\times 1$, and $e_1$ is a column vector of the same dimension with first element $1$ and all other elements with $0$.

\par The Bellman equation of the stochastic shortest path problem at IoTD $i$ can be reconstructed using the fundamental matrix $(\mathbf{I}-\mathbf{Q})^{-1}$ and  expected stage rewards of each transient states $\mathbf{R}$ as follows:
\begin{equation}\label{eq:expvi}
    \mathbb{E}[V_i] = \mathbf{e}_1^T(\mathbf{I}-\mathbf{Q})^{-1}\mathbb{E}[\mathbf{R}]\mathbf{1}
\end{equation}
where $\mathbf{R}$ is the vector of expected rewards of the random variables $R_t$ as given in Eq.~\ref{eq:stagereward} for $t=1$ to $n_i$. The algorithm for finding optimal policy solving (P3) is given in Algorithm~\ref{alg:pomdp}. Note the Algorithm~\ref{alg:pomdp} is the Lagrangian inner loop  for a given $\lambda$, and the dual problem for optimizing $\lambda$ is the outer-loop which is the bisection method as shown in Algorithm~\ref{alg:bisection}.

\begin{algorithm}[t]
\caption{Forward-Decomposition Approximation Algorithm}
\label{alg:pomdp}
\begin{algorithmic}[1]
\State \textbf{Input} Tour $\mathbf{z}$, hover positions $\mathbf{l} = (l_i)_{i=1}^{|z|}$,
         multiplier $\lambda$, availability vector $\mathbf{a} = (a_i)_{i=1}^{|z|}$
\State \textbf{Output}  Optimal probe vector $\mathbf{n}^*$, Lagrangian  $L(\mathbf{n}^*,\lambda; \mathbf{z},\mathbf{l})$
\State \textbf{Initialize} $temp \leftarrow T$, $\mathbf{n}\leftarrow \emptyset$, $L_{\text{acc}}\leftarrow 0$
\For{$i = 1$ \textbf{to} $|\mathbf{z}|$}
    \State $x_i \leftarrow a_i$,\; $n_i \leftarrow 0$
    \State $temp \leftarrow temp  - t_{i-1,i}$
           \Comment{Residual time after deducting flight time to IoTD $i$}
    \While{$x_i s^{l_i}_i \geq \dfrac{t^l_i(1 + \lambda h)}{temp}$}
           \Comment{Threshold test, Eq.~\ref{eq:threshold}} 
        \State $n_i \leftarrow n_i + 1$
        \State $x_i \leftarrow \dfrac{x_i(1 - s^{l_i}_i)}{x_i(1 - s^{l_i}_i) + (1 - x_i)}$
               \Comment{Belief update, Eq.~\ref{eq:bayesupdate}}
    \EndWhile
    \State Build transient matrix $\mathbf{Q}\in\mathbb{R}^{n_i\times n_i}$ of the absorbing chain
           induced by the $n_i$ consecutive probes
    \State $\mathbb{E}[n_i]=\mathbf{e}^T_1(\mathbf{I}-\mathbf{Q})^{-1}\mathbf{1}$ \Comment{From Eq.~\ref{eq:absexpsteps}}
    \State $\mathbb{E}[V_i] = \mathbf{e}_1^T(\mathbf{I}-\mathbf{Q})^{-1}\mathbb{E}[\mathbf{R}]\,\mathbf{1}$ \Comment{From Eq.~\ref{eq:stagereward}, Eq.~\ref{eq:expvi}}
    \State $temp \leftarrow temp - \mathbb{E}[n_i]\, t^l_i$
           \Comment{Debit expected hover time to successor nodes}
    \State $L_{\text{acc}} \leftarrow L_{\text{acc}}+\mathbb{E}[V_i]$
           \Comment{Priced hover energy; flight energy accounted in CMA-ES layer}
    \State $\mathbf{n} \leftarrow \mathbf{n}\cup \{n_i\}$
\EndFor
\State $\mathbf{n}^* \leftarrow \mathbf{n}$
\State $L(\mathbf{n}^*,\lambda; \mathbf{z},\mathbf{l}) \leftarrow L_{\text{acc}}$
\State \Return $\mathbf{n}^*$, $L(\mathbf{n}^*,\lambda; \mathbf{z},\mathbf{l})$
\end{algorithmic}
\end{algorithm}

\par The following Proposition~\ref{prop:learn}, relaxes the strong assumptions such as  device availability probability (as initial belief) to be known by showing the feasibility of learning the parameters of the optimal algorithm by using simple Strong Law of Large Numbers (SLLNs).
\begin{proposition}\label{prop:learn}
\textbf{Learning unknown parameters of IoTD-level POMDP}: There exists a feasible and optimal way of learning the parameters of IoTD-level POMDPs.
\end{proposition}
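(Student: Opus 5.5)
The plan is to make precise which parameters the IoTD-level POMDP needs and to show each can be learned consistently across missions by the Strong Law of Large Numbers. By Theorem~\ref{thm:threshold} and Algorithm~\ref{alg:pomdp}, the node-level policy at IoTD $i$ depends only on three inputs: the prior availability $a_i$, the channel success probability $s_i^{l_i}$, and the deterministic constants $t_i^l$, $h$, $\lambda$, $T$, $t_{i-1,i}$. The deterministic constants are known at configuration time, so only $a_i$ (and, if treated as unknown, $s_i^{l_i}$) must be estimated. \emph{Feasibility} will mean that a consistent estimator exists from observable data. \emph{Optimality} will mean that the plug-in threshold policy converges to the optimal threshold policy of Theorem~\ref{thm:threshold}, at least in the generic case where the threshold inequality is not met with equality.

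\textbf{Step 1 (the estimator).} Across missions $m=1,2,\dots$, the availability realizations $A_i^{(m)}\sim\text{Bernoulli}(a_i)$ are i.i.d. by the two-time-scale model. Whenever a probe succeeds we learn that $A_i^{(m)}=1$; a failure alone does not reveal $A_i^{(m)}$. I would therefore use exploration missions in which the UAV probes IoTD $i$ a fixed number $K$ of times at a fixed reference location. The observed success indicator $Y_i^{(m)}=\mathbb{I}\{\text{some probe succeeds}\}$ is i.i.d. Bernoulli with mean $a_i\,(1-(1-s)^K)$. Given the success count of the first successful probe, which has a geometric law with parameter $s$ conditional on $A_i=1$, the SLLN yields $\hat s\to s$ and hence $\hat a_i=\bar Y_i/(1-(1-\hat s)^K)\to a_i$ almost surely. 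If instead $s_i^{l_i}$ is taken as known from the model in Eq.~\eqref{eq:success_prob}, the estimator is simply $\hat a_i=\bar Y_i/(1-(1-s)^K)$.

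\textbf{Step 2 (continuity of the policy map).} The Bayes update in Eq.~\eqref{eq:bayesupdate} is continuous in $(x,s)$. The threshold test in Eq.~\eqref{eq:threshold} compares a continuous function of $(\hat a_i,\hat s)$ with a constant. Hence the probe count $n_i$ produced by Algorithm~\ref{alg:pomdp} is eventually equal to the true $n_i$ almost surely, provided no iterate lies exactly on the threshold. The quantities $\mathbb{E}[n_i]$ and $\mathbb{E}[V_i]$ in Eqs.~\eqref{eq:absexpsteps}--\eqref{eq:expvi} are finite rational expressions in these parameters, because $(\mathbf{I}-\mathbf{Q})$ is invertible for a transient chain, so they also converge. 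The forward recursion passes $temp$ to successors through finitely many continuous operations, so convergence propagates along the tour by induction on $i$. The bisection of Algorithm~\ref{alg:bisection} then converges to the true $\lambda^*$ whenever the dual minimizer is unique.

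\textbf{Main obstacle.} The difficulty is the identifiability of $a_i$ separately from $s$, since a failed probe conflates an \emph{off} device with a lost frame. This is why the exploration design with repeated probes, or a known $s$, is essential. A secondary issue is the boundary case where the threshold is met with equality. I would treat it by noting that both actions are then optimal, so the value, and hence optimality of the plug-in policy, still converges even if $n_i$ itself does not.
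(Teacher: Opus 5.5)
Your argument is sound in the generic case, but it takes a genuinely different route from the paper.

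\textbf{The paper's route.} The paper never separates $a_i$ from $s_i^{l_i}$ and never invokes the Bayes recursion. It notes that the threshold test of Theorem~\ref{thm:threshold} involves only the product $x_t s_i^{l_i}$. Under the consecutive-probe-then-terminate policy of Lemma~\ref{lem:reducedpolicy}, the belief at slot $t$ is a deterministic function of $t$: the failed-belief states are traversed in the same order in every episode. So, across episodes that reach slot $t$, the response at slot $t$ is Bernoulli with parameter $x_t s_i^{l_i}$, and the SLLN estimates each term of this sequence directly.

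\textbf{What each route buys.} The paper's nonparametric route avoids the identifiability problem you single out as the main obstacle, because it never needs $a_i$ and $s_i^{l_i}$ individually. It has two costs. It needs samples at every slot up to and including the first slot where the threshold fails, which implicitly requires probing beyond the current stopping point; the paper does not discuss this. It also stops at consistency of the per-slot estimates, without arguing that the plug-in threshold decision is eventually the optimal one. Your parametric route supplies both of these. Exploration missions with a small fixed $K$ pin down the whole belief sequence through two parameters. Your continuity and genericity argument then carries consistent estimates to eventual optimality of the plug-in policy, including the forward propagation of $temp$.

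\textbf{Details that need tightening.}
\begin{enumerate}
\item You never observe the first-success index $J$ when it exceeds $K$, and you cannot tell which missions had $A_i=1$. The observed index therefore follows a geometric law \emph{truncated} at $K$, and the naive estimator $1/\bar J$ is inconsistent. Use instead, e.g., $\hat s = 1-\hat p_2/\hat p_1$, where $\hat p_j$ is the empirical frequency of a first success at probe $j$; this works because $\mathbb{P}(J=j)=a_i s(1-s)^{j-1}$ for $j\le K$. This requires $K\ge 2$; with $K=1$ only the product $a_i s$ is identified.
\item If $s_i^{l_i}$ is unknown, the exploration probes must be made at $l_i$ itself, not at an arbitrary reference point.
\item In the tie case the node-level value is indeed unaffected. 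However, the plug-in $n_i$, and hence $\mathbb{E}[n_i]$ and the $temp$ passed to successors, can oscillate between two values. Propagation along the tour therefore holds only in the generic case.
\item The bisection output stabilizes when none of its finitely many comparisons is a tie under the true parameters. That is a different condition from uniqueness of the dual minimizer, and it lies outside the node-level statement anyway.
\end{enumerate}
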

\begin{proof}
Consider the threshold structure developed in Thm.~\ref{thm:threshold}. From the \textit{L.H.S} of the threshold (as shown in Eq.~\ref{eq:threshold}), it is obvious that learning the product of parameter terms $x_ts_i^{l_i}$ are sufficient to verify the threshold. By Lemma~\ref{lem:reducedpolicy}, the optimal policy lies in the reduced policy space with the structural form of $(1,1,\cdots,2)$. The absorbing Markov chain induced by such a reduced policy will visit each failed IoTD belief state ($x_t$) exactly once in every sample path, and most importantly in the same order. 
\par The ensemble average (across episodes) of responses at a given time $t$, there follows a Bernoulli distribution with parameter $x_ts_i^{l_i}$. Hence, by SLLN, the sample average of responses converge to the ground truth parameter of $x_ts_i^{l_i}$ with probability $1$. 
\end{proof}

\section{Well-Posedness of FLoRa Framework}\label{sec:florawellposedness}
\begin{enumerate}
\item \textit{Discontinuity regularization by CMA-ES}: The Gaussian based selection of location coordinates by CMA-ES~\cite{CMA_ES_Tutorial} acting as spatial filter  over the stair-case type LoRa channel  together with weight-based rank estimators  effectively help in approximately smoothing the (discontinuities of) abrupt SNR changes at the boundaries of SF levels. 
\item \textit{Unweighted scalarization}:
As shown in the definition of Eq.~\ref{eq:vip}, the VIP utility at each IoTD depends on common residual time $T$. Naturally, an increasing VIP at one IoTD results in decreasing VIP utility at a subsequent IoTD. This multi-objective nature of vector of VIPs leads to Pareto front, which can be \textit{scalarized} to convert into single objective function~\cite{modmbook}, and can be solved for optimality. To this end, the objective function in Eq.~\ref{eq:main_problem} is  designed to maximize the summation of the VIP freshness metric plays the role of unweighted linear scalarization (under mild assumptions of convexity of the solution space).
    \item \textit{Imposing delay penalty}: The novelty of the proposed VIP metric is to impose a delay penalty, especially if the UAV abandons a node after failed probing attempts. Without the explicit delay tax,  POMDP solver would be chasing an inactive node by probing it multiple number of times with no feedback (zero reward/cost) thereby violating the third-well posedness condition (also known as continuous dependence condition) of Hadamard~\cite{hadamardwellposednessbook}. 
    \item \textit{Lagrangian relaxation}: The rigid and discontinuous hard-limit of the global shared energy constraint is regularized via a Lagrangian relaxation which transforms hard limit into soft limit by the energy constraint to be met in the expected sense.  
     \item \textit{Structural regularization using Bi-level energy optimization}: The primal decomposition approaches such as bi-level optimization~\cite{bilevelstructural} helps as structural regularizers the upper layers of SA/CMA-ES  acting as leader (managing hard flight energy), and bottom layer of POMDP as follower (treats flight energy as a given constant by the upper layer), and optimize on hovering energy. In this manner, the leader-follower interact in an iterative manner to solve composite optimization problems. The interaction between POMDP and CMA-ES is through scoring function (note that CMA-ES uses the underlying POMDP solver as the scoring function, as shown in \textit{Line 19} of Algorithm~\ref{alg:CMAES}).  
\end{enumerate}
Hence the proposed FLoRa framework is first of kind to optimization model for combined UAV path planning, positioning, and probing which regularizes  various levels of rigidness and discontinuities  present in the problem structure.

\section{Results and Discussion}\label{sec:Results}
\subsection{Simulation Setup}
We evaluate FLoRa on two deployment scenarios that share the same relative
node layout but differ in LoRa coverage radius, spatial scale, and the
resulting path-loss regime such as a compact $3$~km scenario (dense, high-path-loss
urban field) and a wide-area $12$~km scenario (low-path-loss, near-free-space
rural field). The $12$~km node coordinates are a uniform $4\times$ scaling of
the $3$~km ones, and the success-probability profile is matched across
scenarios (Fig.~\ref{fig:snrsuccprb}), so the two use the same relative arrangement
and per-node availabilities while differing in scale, coverage radius, and
link budget. For each scenario we consider two densities, $N\!\in\!\{20,40\}$
IoTDs, with the $N=20$ set nested in $N=40$. Device availabilities $a_i$ are
drawn once from $\mathcal{U}[0,1]$ and held fixed across all methods and seeds
for a fair comparison. A single UAV with velocity
$v=50$~m/s, collects data
under a freshness deadline $ T=1500$~seconds for the $3$~km scenario and $ T=3600$~ seconds
for the $12$~km scenario. The UAV battery budget $b_{\max}$ is swept in equal
steps, reported in kilojoules: $321$--$1605$~kJ for the $3$~km scenario and
$642$--$3852$~kJ for the $12$~km scenario. Results are averaged over $20$
independent seeds, error bars in
all figures denote $95\%$ confidence intervals. The remaining hyper-parameters are listed in Table~\ref{tab:sim_params}.

\begin{table}[ht]
\centering
\caption{Simulation parameters}
\label{tab:sim_params}
\renewcommand{\arraystretch}{1}
\begin{tabular}{|l|l|c|c|}
\hline
\textbf{Category} & \textbf{Parameter} & \textbf{3 km} & \textbf{12 km} \\
\hline
\multirow{6}{*}{Network}
  & Deployment area               & $78.75 \times 55$ & $315 \times 220$ \\
  &                             & km$^2$ & km$^2$ \\
  & Number of IoTDs ($N$)       & \multicolumn{2}{c|}{20, 40} \\
  & IoTD radius ($d_{\max}$)         & 3 km & 12 km \\
  & Device availability ($a_i$) & \multicolumn{2}{c|}{$[0,\,1]$} \\
\hline
\multirow{9}{*}{Channel / Radio}
  & Bandwidth ($BW$)            & \multicolumn{2}{c|}{125 kHz} \\
  & Carrier frequency ($f_c$)   & \multicolumn{2}{c|}{869.4 MHz} \\
  & Noise power density ($N_0$) & \multicolumn{2}{c|}{$-174$ dBm/Hz} \\
  & Noise figure ($NF$)         & \multicolumn{2}{c|}{6 dB} \\
  & Transmit power ($p_t$)      & \multicolumn{2}{c|}{19 dBm} \\
  & Path-loss exponent ($\gamma$) & \multicolumn{2}{c|}{2.96} \\
  & Excess path loss ($L_{ex}$) & 17.82 dB & 0 dB \\
  & Payload size                & \multicolumn{2}{c|}{25 bytes} \\
\hline
\multirow{4}{*}{UAV}
  & Velocity ($v$)              & \multicolumn{2}{c|}{50 m/s} \\
  & Flight power ($f$)          & \multicolumn{2}{c|}{321 W} \\
  & Hover power ($h$)           & \multicolumn{2}{c|}{168 W} \\
  & Battery budget ($b_{\max}$) & 321--1605 kJ & 642--3852 kJ \\
\hline
VIP Metric
  & Deadline ($T$)              & 1500 s & 3600 s \\
\hline
\multirow{3}{*}{SA}
  & Cooling rate ($\alpha$)     & \multicolumn{2}{c|}{0.995} \\
  & Stopping criterion          & \multicolumn{2}{c|}{$\Upsilon < 0.01$} \\
  & Independent restarts        & \multicolumn{2}{c|}{5} \\
  & $K_{\tt max}$              &  \multicolumn{2}{c|}{2000 per restart}\\
\hline
\multirow{2}{*}{CMA-ES}
  & Initial step size ($\sigma_0$) & \multicolumn{2}{c|}{5.0} \\
  & Population size ($\Gamma$)   & \multicolumn{2}{c|}{adaptive to $\mathbf{|z|}$} \\
\hline
$\lambda$ Search
  & Lagrange multiplier         & \multicolumn{2}{c|}{$[0,\,1/h]$ via bisection} \\
\hline
\end{tabular}
\end{table}
\subsection{Baseline Approaches}
The proposed FLoRa framework is compared against five baseline methods of increasing sophistication as shown below.

\begin{enumerate}
  \item \emph{Simple\_SA (SSA)} uses the same SA-based global tour planning as
  FLoRa but hovers directly above each IoTD, omitting CMA-ES hover
  optimization. The SSA--FLoRa gap therefore isolates the contribution of
  close-enough positioning.
  \item \emph{Greedy\_Util (GU)} greedily selects the next IoTD by
  utility-to-travel-time ratio and hovers directly above each node.
  \item \emph{Greedy\_Util\_Hover\_Optimization (GUHO)} augments GU with
  CMA-ES hover optimization, matching FLoRa's positioning but retaining greedy
  (rather than SA-based) path planning. The GUHO--FLoRa gap isolates the contribution of global path
  optimization.
  \item \emph{Ant Colony Optimization (ACO)}~\cite{ACO} is an alternative metaheuristic
  tour planner that, like SSA, hovers above each node.
  \item \emph{Proximal Policy Optimization (PPO)}~\cite{PPO} is a deep-reinforcement-learning baseline where the UAV agent jointly learns node selection and hover positioning.
  To ensure a rigorous baseline, PPO was trained independently from scratch on each fixed network topology, allowing it to reach its empirical upper-bound performance for that specific instance rather than a single generalized agent, which would underperform on any individual layout. This guarantees that FLoRa is benchmarked against DRL's best-case performance, not a diluted, generalization-averaged result.
\end{enumerate}

The performance of FLoRa and the baseline methods is evaluated using 
three metrics: 
\begin{itemize}
    \item \textbf{Total Expected VIP~}($\sum_i^{|\mathbf{z}|}\mathbb{E}[V_i] )$ is the primary metric, capturing the end-to-end system performance
by measuring whether the UAV collects fresh, valuable data within the deadline $T$. It reflects the joint contribution of all three
FLoRa components: path planning, hover optimization, and adaptive
probing.
    \item \textbf{Node Coverage $|\mathbf{z}^*|$} measures the number of IoTDs
visited by the UAV, evaluating the spatial reach of the tour and
primarily reflecting the effectiveness of SA-based path planning
under the battery constraint.
    \item \textbf{Expected successful collections} from section \ref{sec:POMDP} measures the expected collection success across all probe attempts in the tour,
validating the effectiveness of the POMDP-based adaptive probing
strategy under unavailability uncertainty.

\end{itemize}

\begin{figure*}[t]
    \centering
    \begin{subfigure}[b]{0.48\textwidth}
        \centering
        \includegraphics[width=\textwidth]{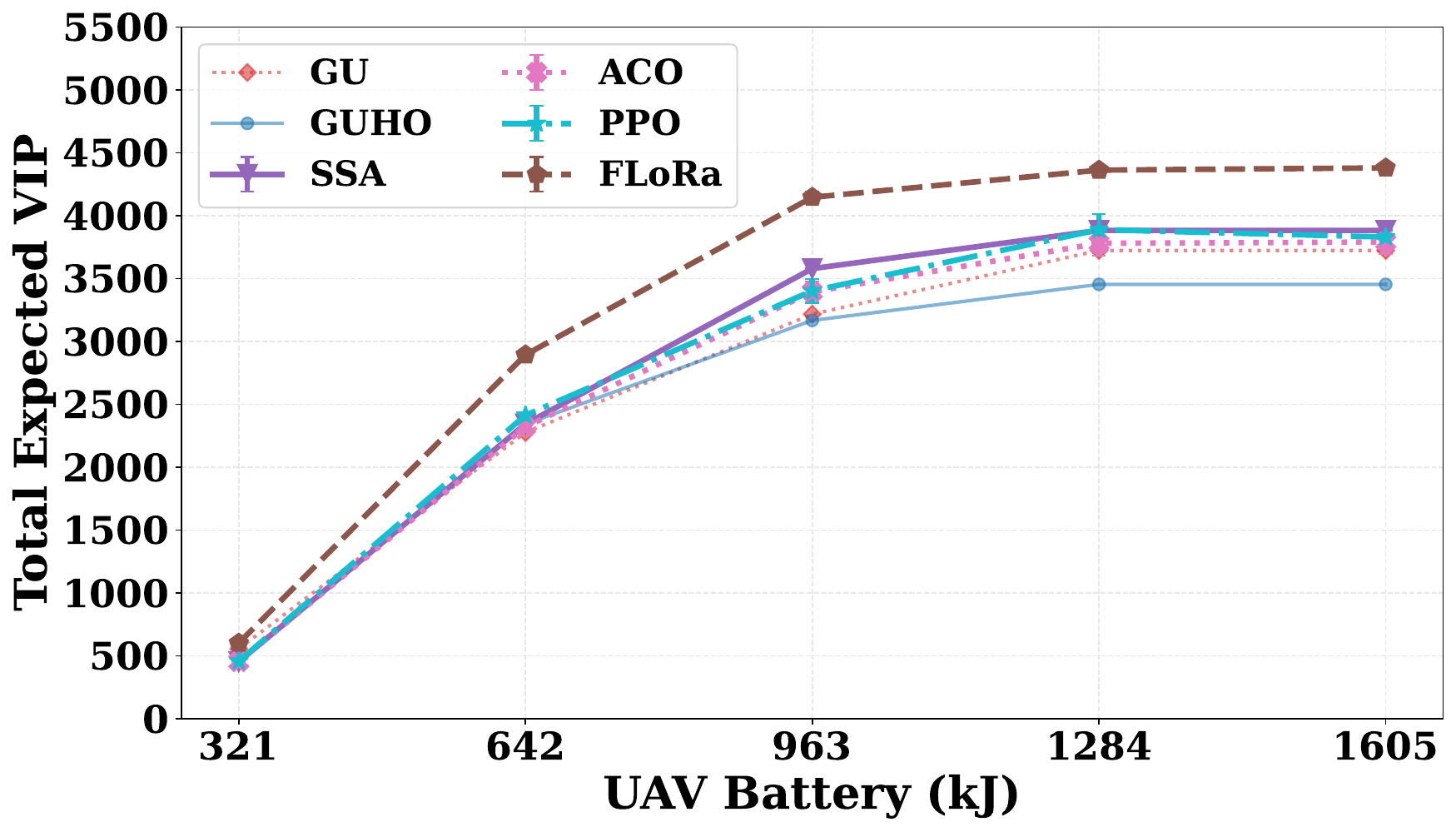}
        \caption{3~km scenario, $N=20$ IoTDs}
        \label{fig:vip_3km_N20}
    \end{subfigure}
    \hfill
    \begin{subfigure}[b]{0.48\textwidth}
        \centering
        \includegraphics[width=\textwidth]{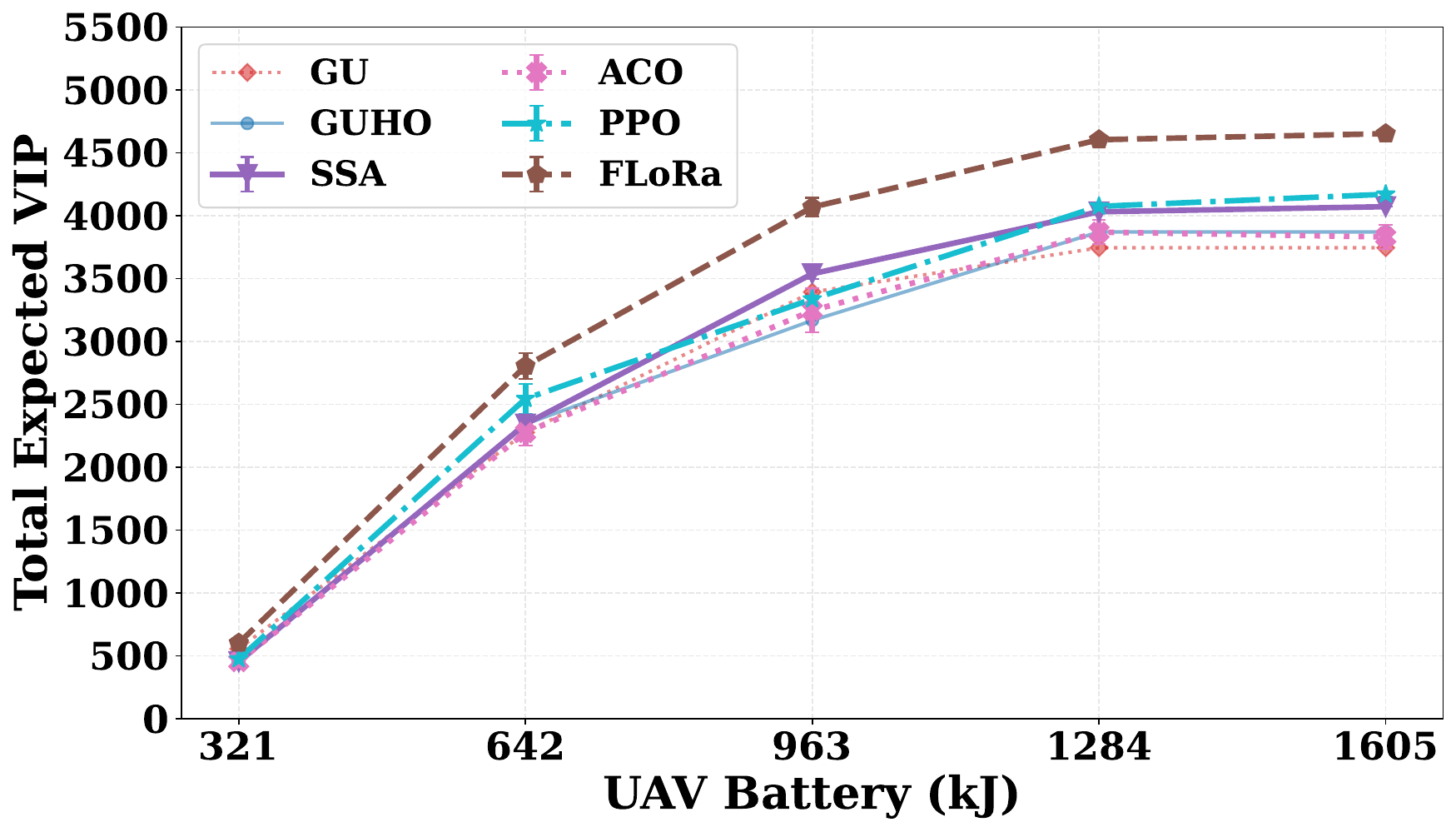}
        \caption{3~km scenario, $N=40$ IoTDs}
        \label{fig:vip_3km_N40}
    \end{subfigure}
    \caption{Total expected VIP vs UAV battery ($b_{\max}$) at 3~km scenario for (a)~$N=20$ and (b)~$N=40$ IoTDs.}
    \label{fig:Expected_VIP_3km}
\end{figure*}

\begin{figure*}[t]
    \centering
    \begin{subfigure}[b]{0.48\textwidth}
        \centering
        \includegraphics[width=\textwidth]{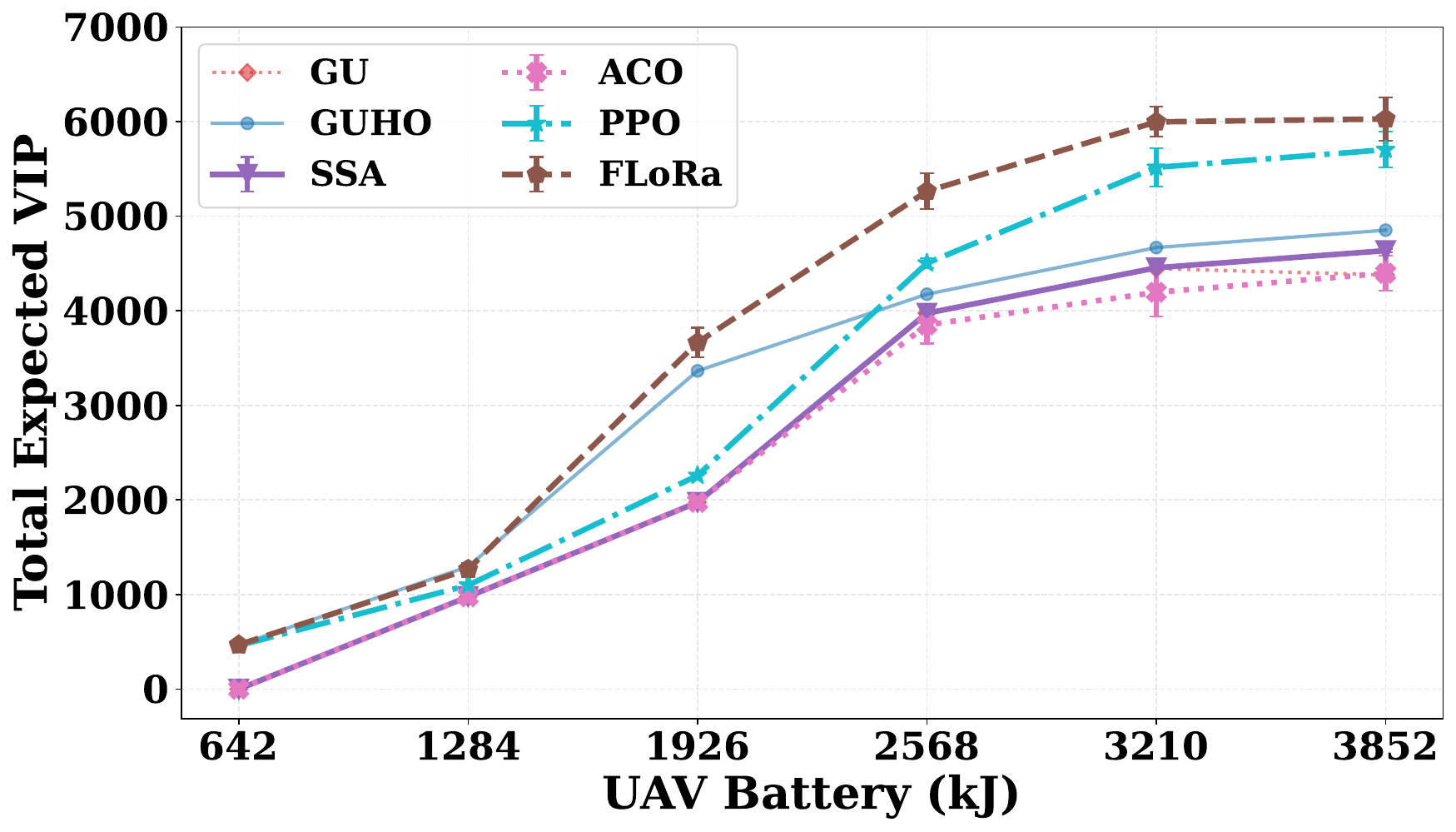}
        \caption{12~km scenario, $N=20$ IoTDs}
        \label{fig:vip_12km_N20}
    \end{subfigure}
    \hfill
    \begin{subfigure}[b]{0.48\textwidth}
        \centering
        \includegraphics[width=\textwidth]{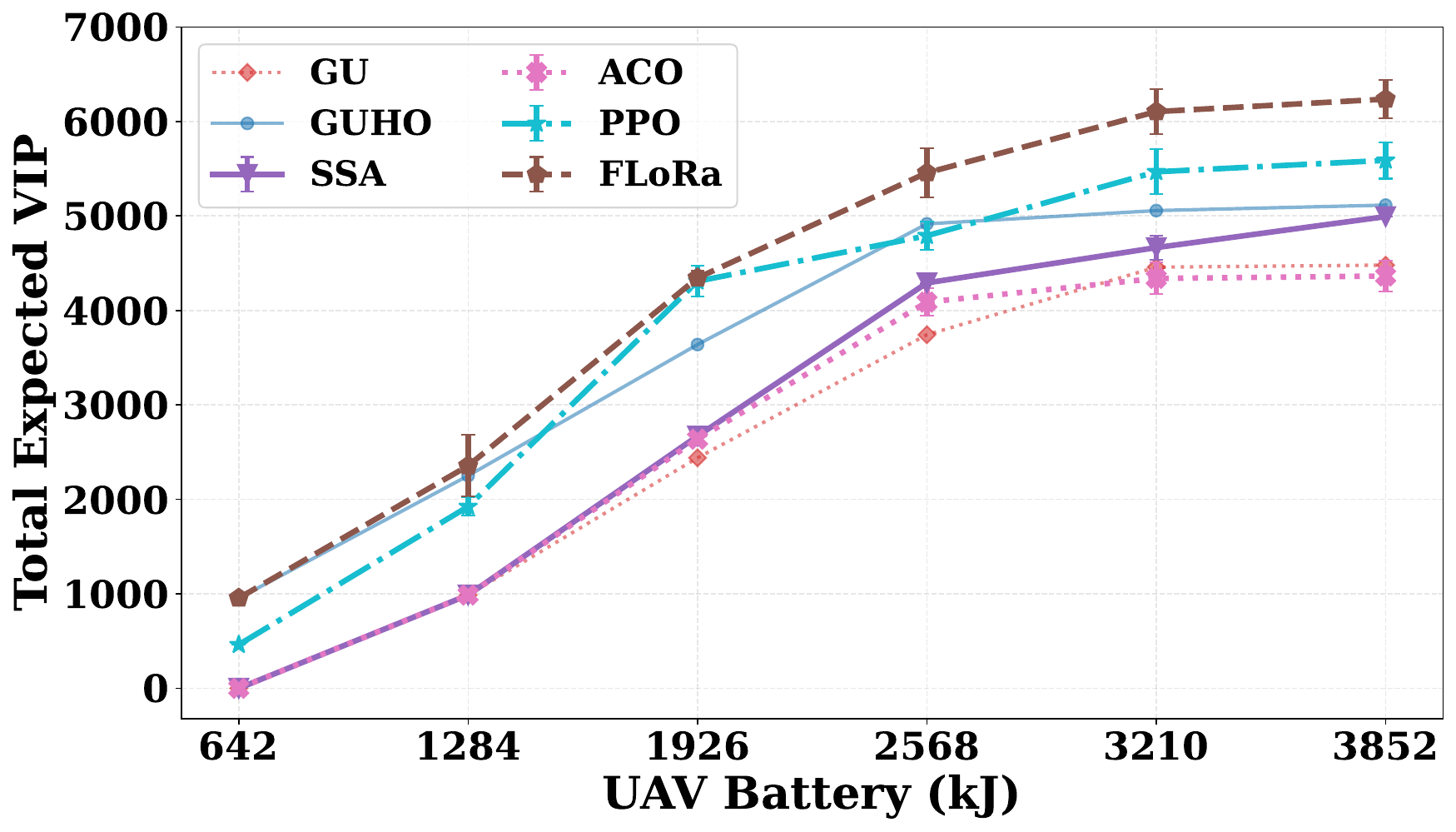}
        \caption{12~km scenario, $N=40$ IoTDs}
        \label{fig:vip_12km_N40}
    \end{subfigure}
    \caption{Total expected VIP vs UAV battery ($b_{\max}$) at 12~km scenario for (a)~$N=20$ and (b)~$N=40$ IoTDs.}
    \label{fig:Expected_VIP_12km}
\end{figure*}

\subsection{Total Expected VIP Performance}
Figs.~\ref{fig:Expected_VIP_3km} and \ref{fig:Expected_VIP_12km} report the total expected VIP
against battery budget for the $3$~km and $12$~km scenarios. FLoRa attains the
highest VIP at every battery level, in both scenarios and at both densities.
The magnitude of its advantage, however, varies with the operating
regime in a way that exposes the role of each FLoRa component.

\textbf{Close-enough positioning yields feasibility at low energy.}
The clearest effect appears at the lowest $12$~km budget ($642$~kJ). Here
SSA, GU and ACO collect zero VIP: hovering directly above a node, the
round trip to even the nearest IoTD exceeds the available energy, so no data
is collected. FLoRa, by hovering at the coverage boundary of the
nearest node (via CMA-ES), still reaches one IoTD and collects
$469$ ($N=20$) and $955$ ($N=40$) units of VIP, a feasibility margin that
the above-node baselines cannot achieve at any seed. This is a direct,
physical demonstration of the close-enough paradigm: trading a marginal
reduction in packet success probability $s_i^{l_i}$ for a large reduction in
flight cost converts an infeasible mission into a productive one.

\textbf{Ablation: attributing the gains.} Averaged over all battery levels,
the FLoRa vs SSA comparison (isolating CMA-ES hovering) yields VIP gains of
$15.9\%$ ($3$~km) and $38.9\%$ ($12$~km), while the FLoRa vs GUHO comparison
(isolating SA based global path planning, with CMA-ES hovering held common)
yields gains of $24.7\%$ ($3$~km) and $18.3\%$ ($12$~km). Together, these two
ablations show that close-enough hover positioning is beneficial irrespective
of which global path-planning method it is paired with.

Against the DRL baseline PPO, FLoRa improves total expected VIP by $15.2\%$ on
average. This advantage reflects FLoRa's hierarchical
decomposition, wherein SA searches the discrete tour space directly and
CMA-ES exploits the non-differentiable SF-transition ridges (Fig.~\ref{fig:snrsuccprb})
via covariance adaptation, whereas PPO must learn both decisions jointly from
a single policy network under a discontinuous, non-differentiable reward
surface.

\begin{figure*}[t]
    \centering
    \begin{subfigure}[b]{0.48\textwidth}
        \centering
        \includegraphics[width=\textwidth]{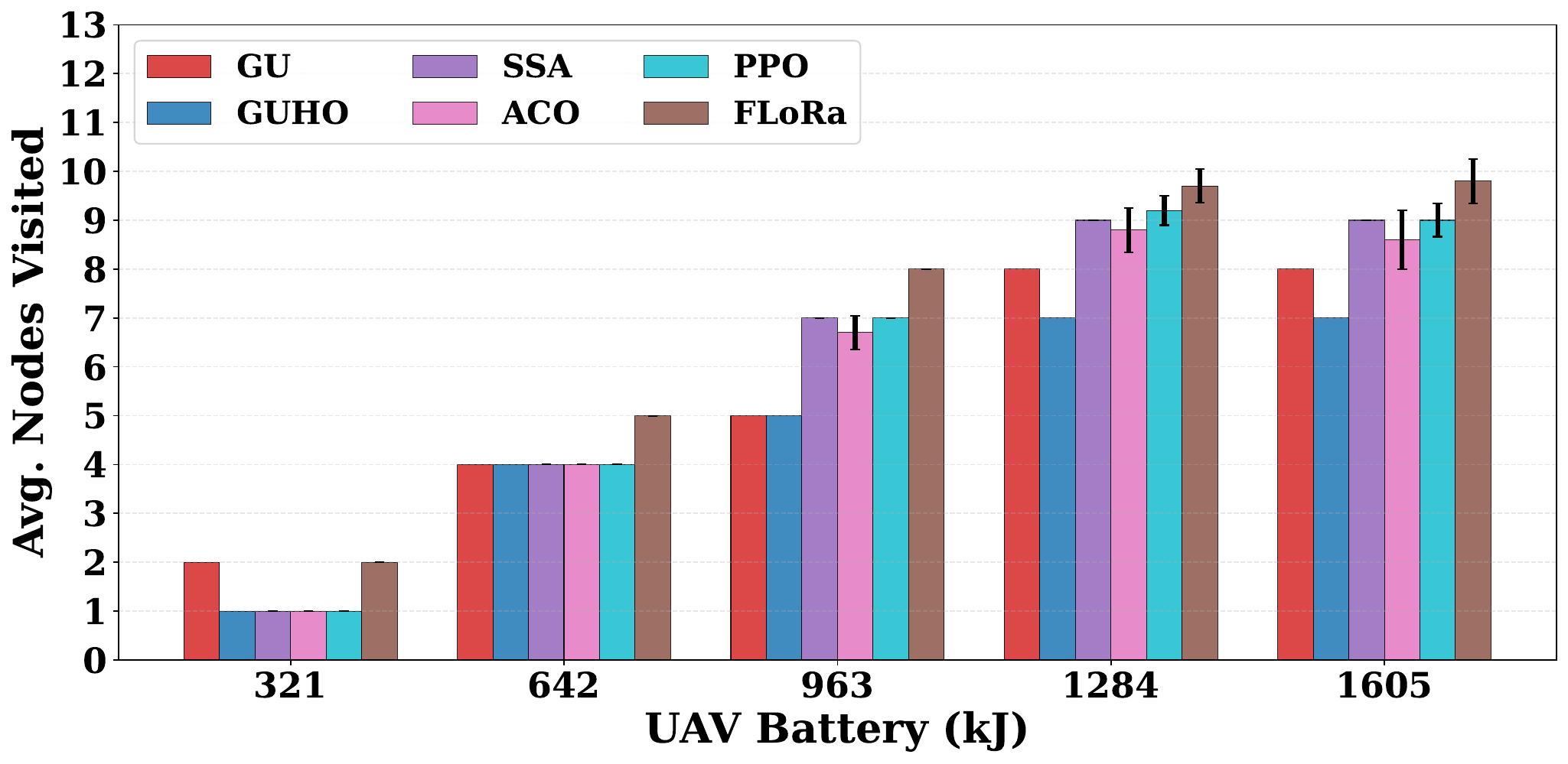}
        \caption{3~km scenario, $N=20$ IoTDs}
        \label{fig:nodes_3km_N20}
    \end{subfigure}
    \hfill
    \begin{subfigure}[b]{0.48\textwidth}
        \centering
        \includegraphics[width=\textwidth]{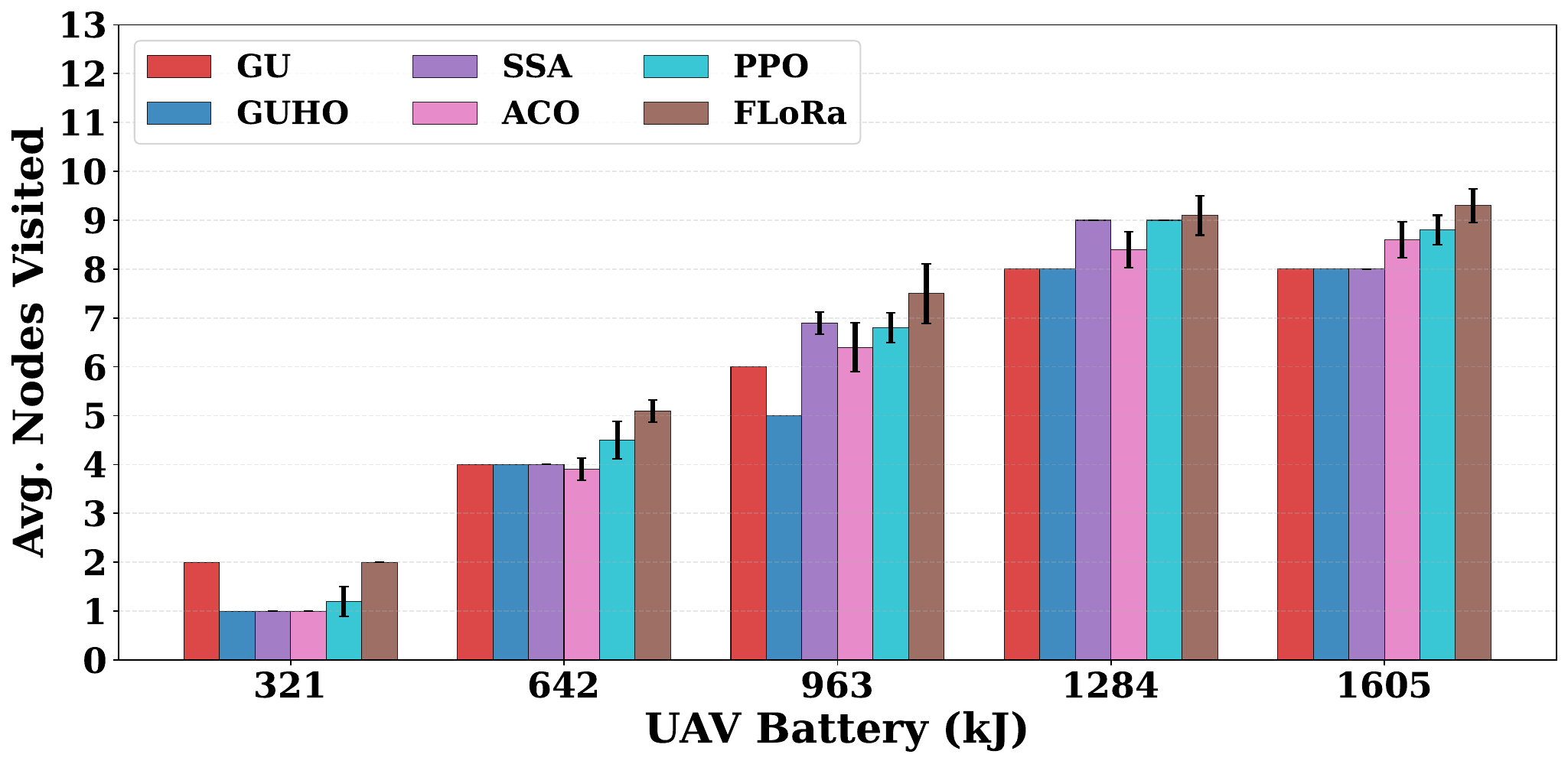}
        \caption{3~km scenario, $N=40$ IoTDs}
        \label{fig:nodes_3km_N40}
    \end{subfigure}
    \caption{Node Coverage vs UAV battery ($b_{\max}$) at 3~km scenario for (a)~$N=20$ and (b)~$N=40$ IoTDs.}
    \label{fig:Coverage_3km}
\end{figure*}

\begin{figure*}[t]
    \centering
    \begin{subfigure}[b]{0.48\textwidth}
        \centering
        \includegraphics[width=\textwidth]{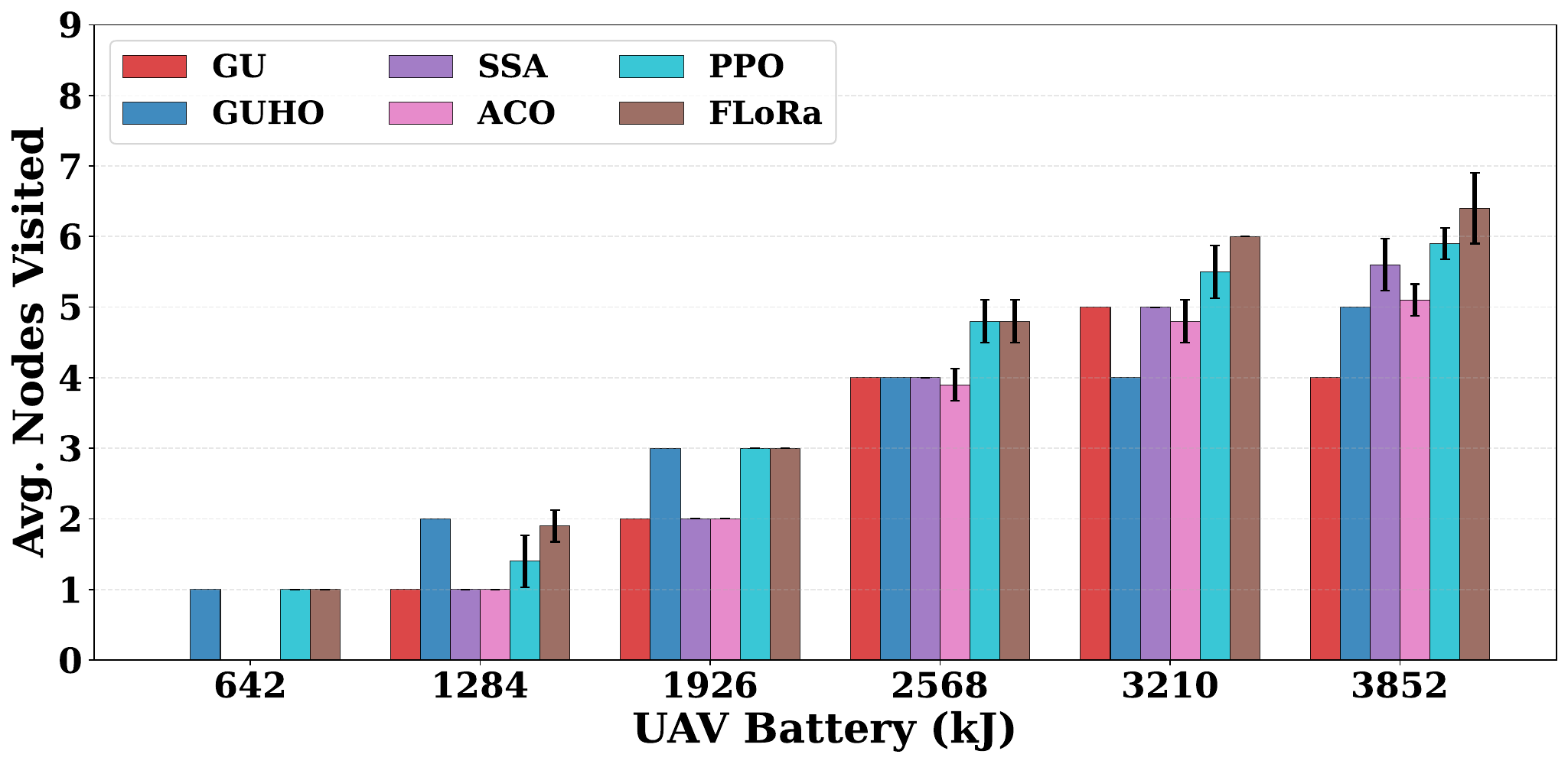}
        \caption{12~km scenario, $N=20$ IoTDs}
        \label{fig:nodes_12km_N20}
    \end{subfigure}
    \hfill
    \begin{subfigure}[b]{0.48\textwidth}
        \centering
        \includegraphics[width=\textwidth]{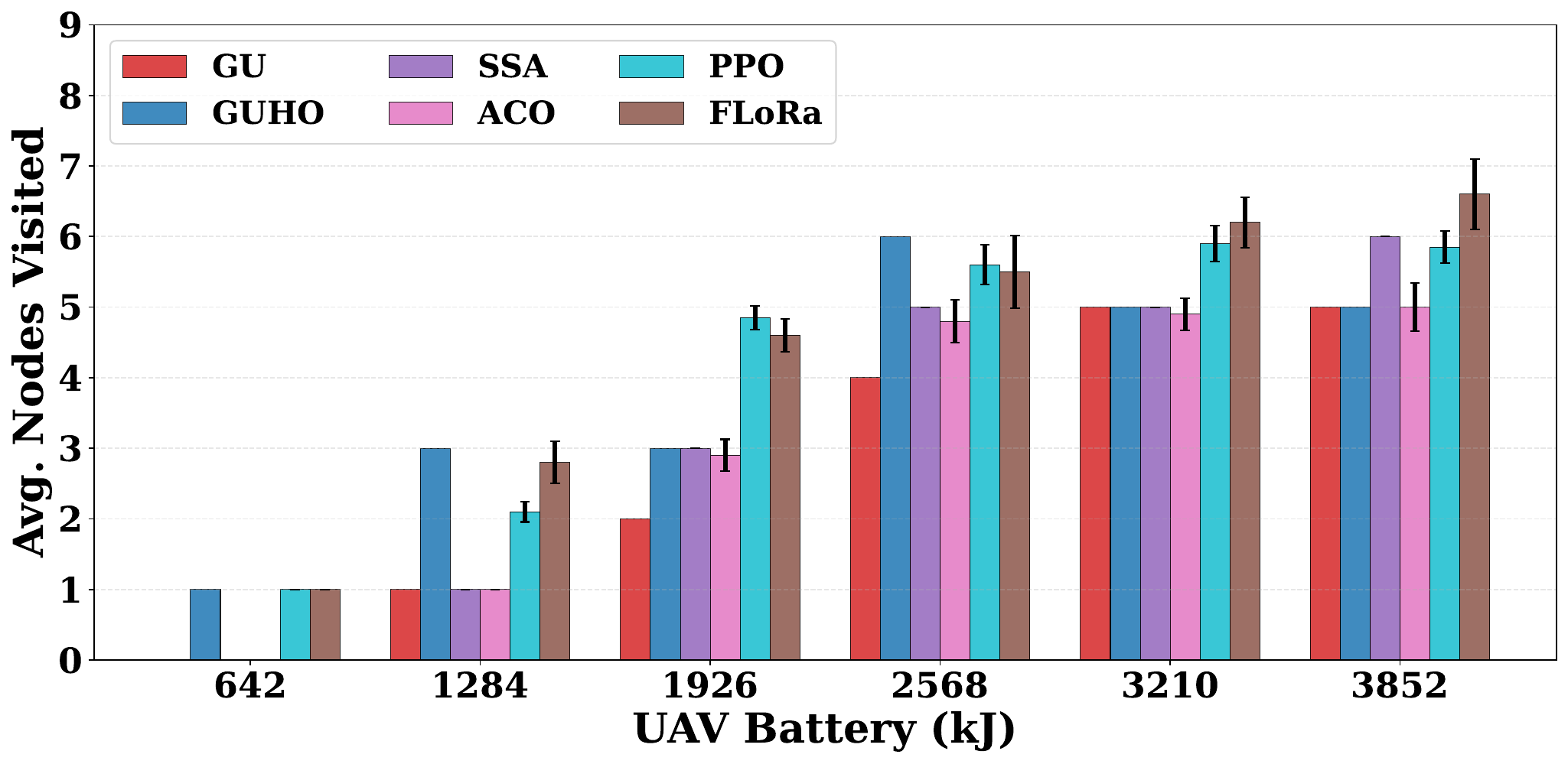}
        \caption{12~km scenario, $N=40$ IoTDs}
        \label{fig:nodes_12km_N40}
    \end{subfigure}
    \caption{Node Coverage vs UAV battery ($b_{\max}$) at 12~km scenario for (a)~$N=20$ and (b)~$N=40$ IoTDs.}
    \label{fig:Coverage_12km}
\end{figure*}

\subsection{Node Coverage Analysis}
Figs.~\ref{fig:Coverage_3km} and \ref{fig:Coverage_12km} report the number of IoTDs
visited. Node Coverage rises with battery for all methods. FLoRa attains the
highest Node Coverage at maximum budget up to $9.8$ nodes ($3$~km, $N=20$) and
$6.6$ nodes ($12$~km, $N=20$) because SA-based global planning selects the
high-value subset rather than committing to locally greedy choices. CMA-ES
contributes indirectly: by trimming per-node flight cost relative to the hover
above-node baselines, it preserves energy for additional visits, which is why
FLoRa consistently visits more nodes than SSA at equal battery.

At $2568$~kJ $(12 km)$, GUHO occasionally exceeds FLoRa in Node Coverage despite trailing in total VIP, since its greedy construction favors cheaply-reachable nodes without accounting for the freshness cost imposed on later nodes, whereas SA optimizes the freshness-weighted VIP objective (Eq. 7) directly.

Notably, PPO's Node Coverage deficit is the smallest of all baselines
($+11.6\%$ at 3~km, $+6.0\%$ at 12~km in Table~\ref{tab:impr3km}/~\ref{tab:impr12km}%
) closer to FLoRa here than GUHO, SSA, GU, or ACO indicating that PPO's
learned policy is comparatively competent at how many nodes to visit,
even though its VIP and success-probability gaps show it is far
less effective at which nodes and how to hover once a node is
chosen.

Averaged over
all battery levels, FLoRa improves Node Coverage over GUHO by $35.3\%$ ($3$~km) and
$17.7\%$ ($12$~km), and over SSA by $14.6\%$ ($3$~km) and $29.7\%$ ($12$~km).
GU and ACO trail throughout, reflecting the limits of greedy/above-node tour
construction under a tight energy budget.

\begin{figure*}[t]
    \centering
    \begin{subfigure}[b]{0.48\textwidth}
        \centering
        \includegraphics[width=\textwidth]{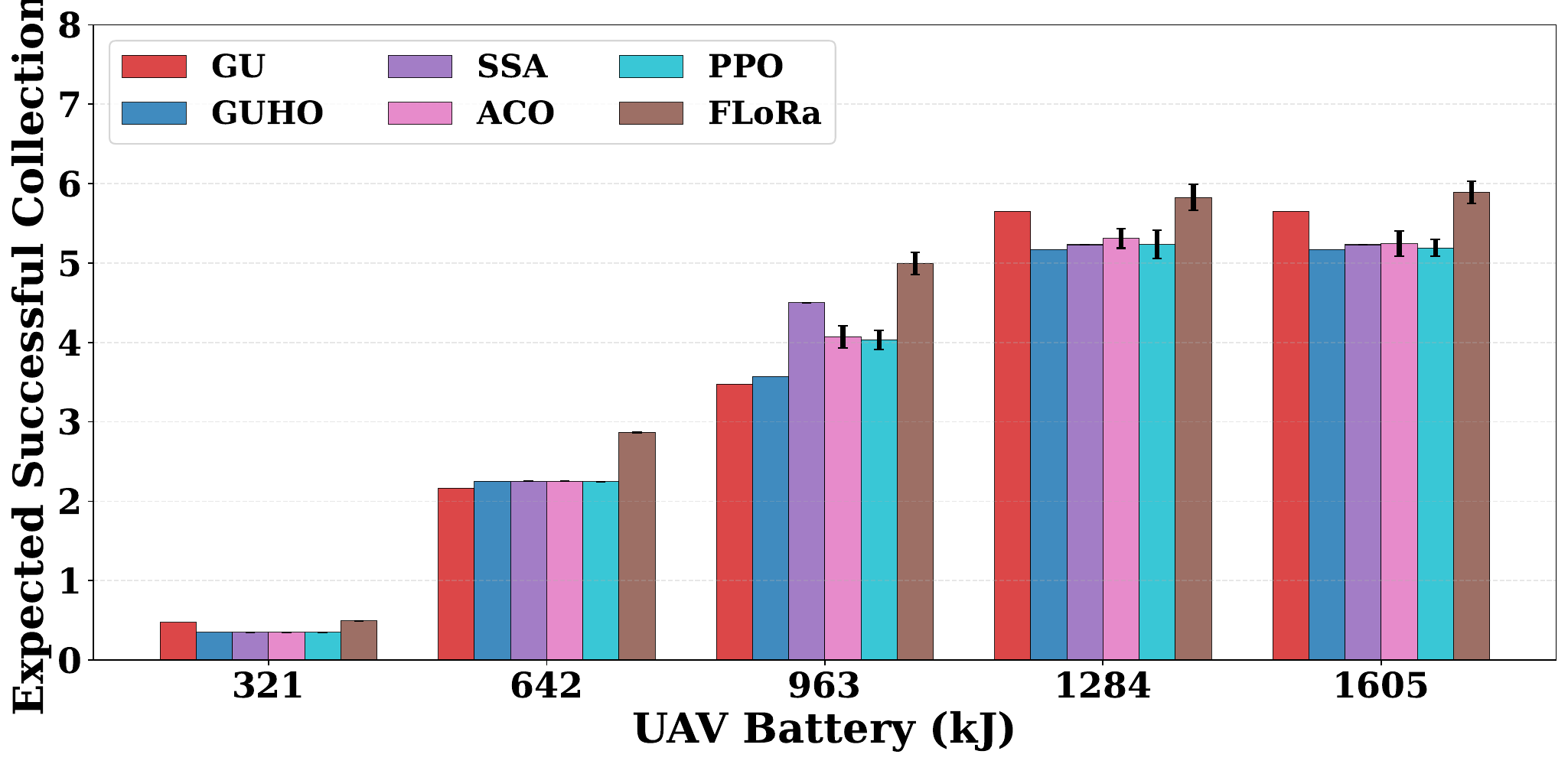}
        \caption{3~km scenario, $N=20$ IoTDs}
        \label{fig:success_3km_N20}
    \end{subfigure}
    \hfill
    \begin{subfigure}[b]{0.48\textwidth}
        \centering
        \includegraphics[width=\textwidth]{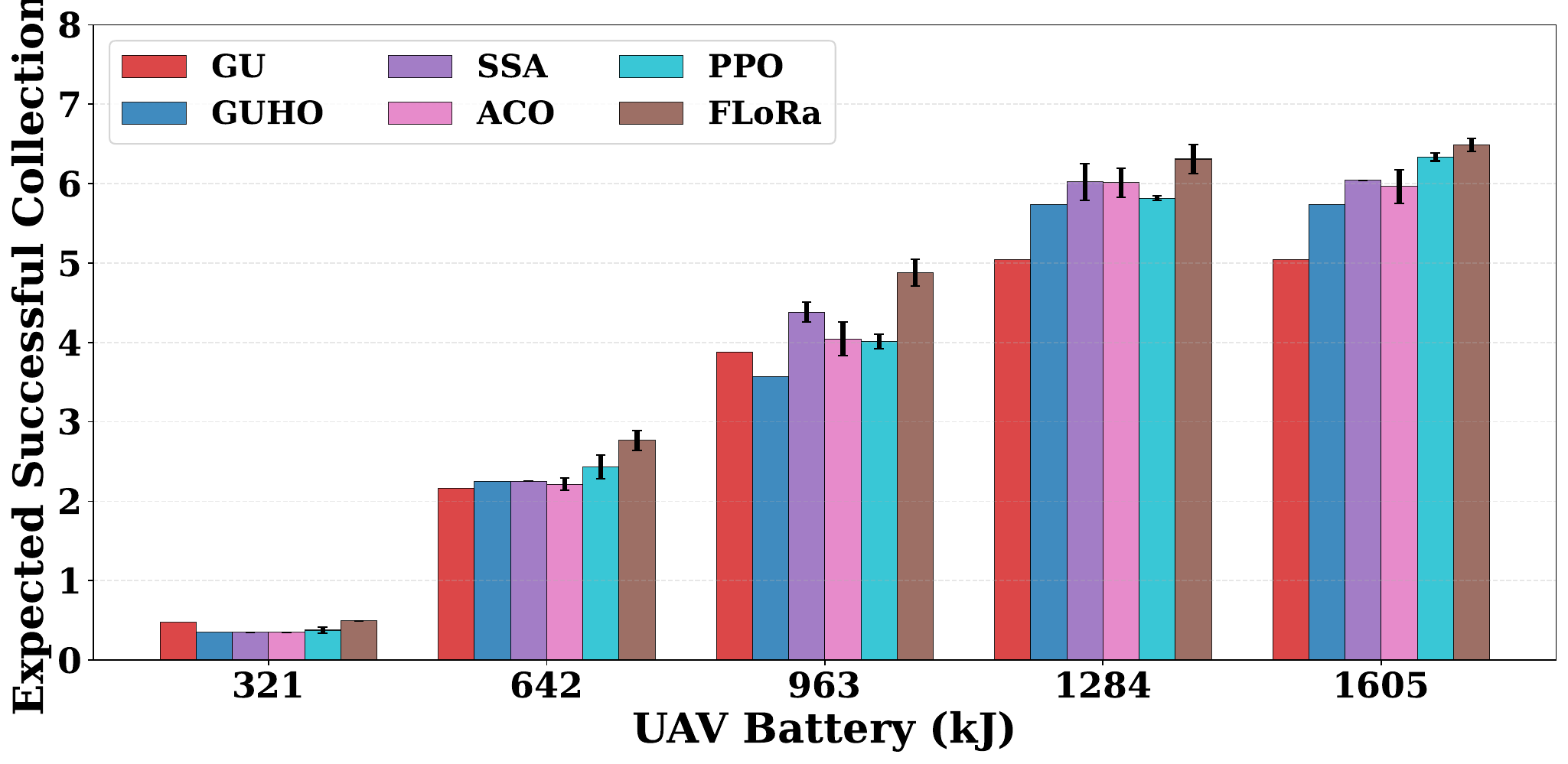}
        \caption{3~km scenario, $N=40$ IoTDs}
        \label{fig:success_3km_N40}
    \end{subfigure}
    \caption{Expected successful collections vs UAV battery ($b_{\max}$) at 3~km scenario for (a)~$N=20$ and (b)~$N=40$ IoTDs.}
    \label{fig:Success_3km}
\end{figure*}

\begin{figure*}[t]
    \centering
    \begin{subfigure}[b]{0.48\textwidth}
        \centering
        \includegraphics[width=\textwidth]{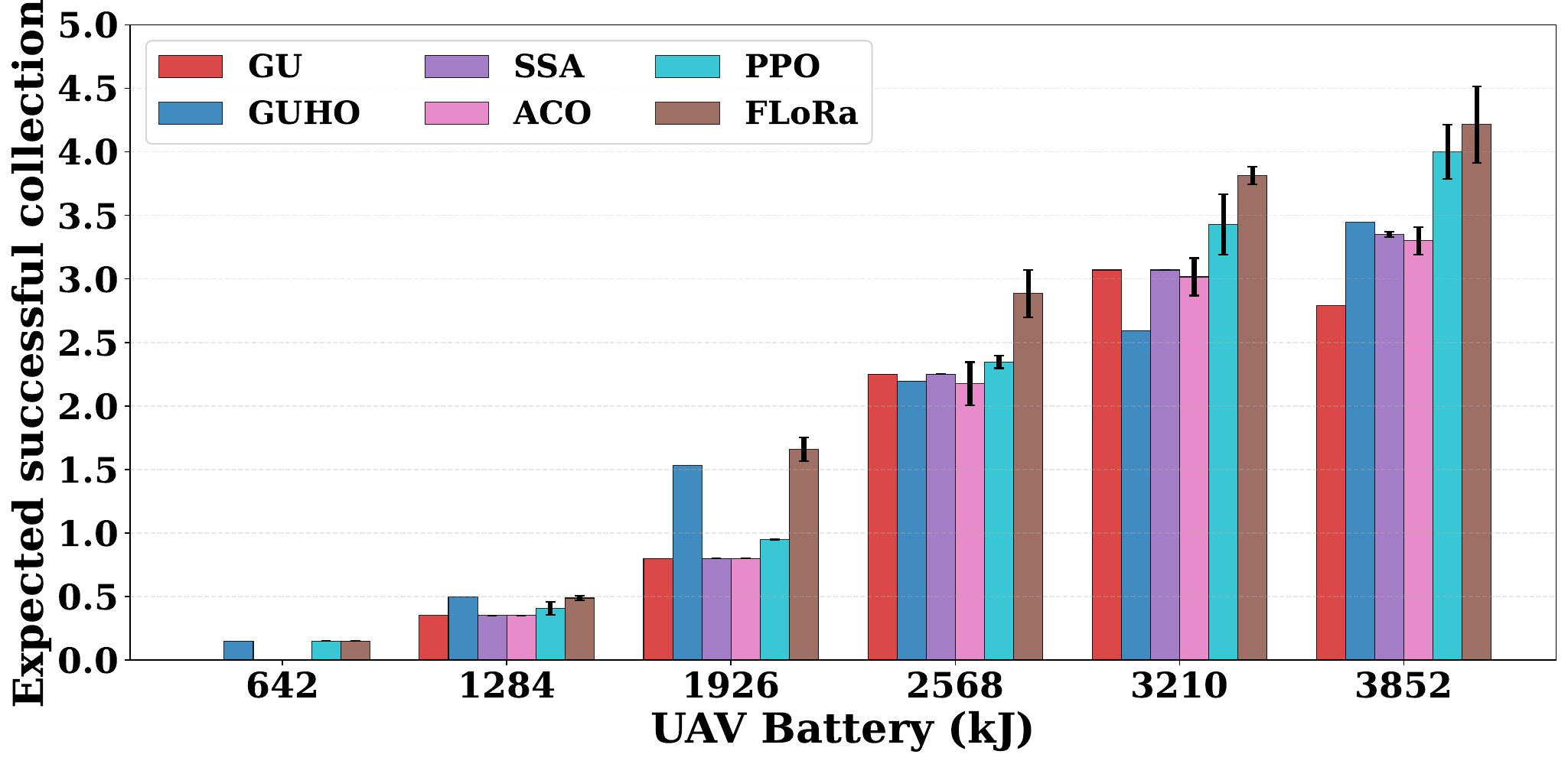}
        \caption{12~km scenario, $N=20$ IoTDs}
        \label{fig:success_12km_N20}
    \end{subfigure}
    \hfill
    \begin{subfigure}[b]{0.48\textwidth}
        \centering
        \includegraphics[width=\textwidth]{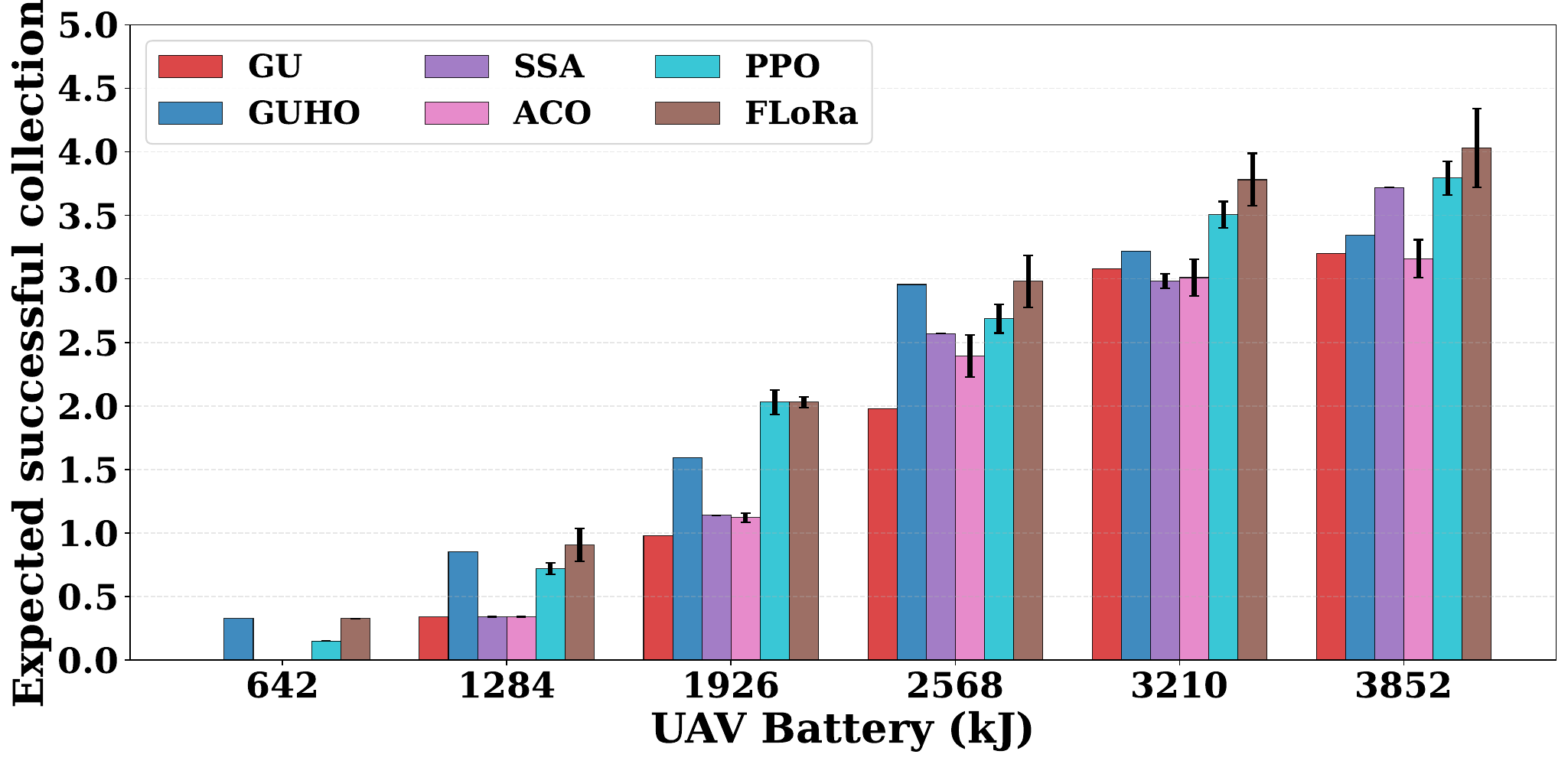}
        \caption{12~km scenario, $N=40$ IoTDs}
        \label{fig:success_12km_N40}
    \end{subfigure}
    \caption{Expected successful collections vs UAV battery ($b_{\max}$) at 12~km scenario for (a)~$N=20$ and (b)~$N=40$ IoTDs.}
    \label{fig:Success_12km}
\end{figure*}

\subsection{Expected successful collections Analysis}
Figs.~\ref{fig:Success_3km} and \ref{fig:Success_12km} show that FLoRa accumulates the highest total success across all budgets and densities, and this comes from how the three layers work together. SA searches over tours, deciding which nodes to visit and in what order, and its ordering naturally ends up front-loading high-availability nodes, visiting them while the residual freshness is still large. CMA-ES then searches for the best hover position at each of these nodes. Its job is not simply to hover near the coverage boundary and save flight cost for the sake of coverage and that would come at the price of a lower $s_i^{l_i}$ everywhere. Instead, CMA-ES looks for a position that keeps $s_i^{l_i}$ reasonably high while still saving enough battery, so the reach it buys does not come at the cost of reliability. With a high $s_i^{l_i}$ and a large $T-\sum_{j=1}^{i-1}\mathbb{E}_{n_j}[\tau_j]$ already in place, the POMDP's threshold policy Eq.~\ref{eq:threshold} stays low, which lets it probe each node more times, and this is what actually drives up the cumulative success count. Averaged across all battery levels, FLoRa improves total success over GUHO by $20.1\%$ ($3$~km) and $21.7\%$ ($12$~km), and over PPO by $14.0\%$ ($3$~km) and $16.3\%$ ($12$~km).

\par Table~\ref{tab:impr3km} and~\ref{tab:impr12km} summarize the performance of the proposed FLoRa compared with other baseline approaches across the aforementioned performance metrics. The table shows the average performance improvement across different UAV budgets, and the last column for each metric shows the average performance across different configurations and energy budgets.

\begin{table}[ht]
\centering
\caption{Percentage improvement of FLoRa over baselines, averaged over all
UAV battery levels ($3$~km scenario).}
\label{tab:impr3km}
\setlength{\tabcolsep}{3pt}
\footnotesize
\begin{tabular}{l|ccc|ccc|ccc}
\hline
 & \multicolumn{3}{c|}{Total Expected VIP} & \multicolumn{3}{c|}{Coverage} & \multicolumn{3}{c}{successful collections}\\
Baseline & 20$N$ & 40$N$ & Avg & 20$N$ & 40$N$ & Avg & 20$N$ & 40$N$ & Avg\\
\hline
GUHO & +27.3 & +22.1 & \textbf{+24.7} & +43.8 & +26.9 & \textbf{+35.3} & +21.6 & +18.6 & \textbf{+20.1} \\
ACO & +19.2 & +22.3 & \textbf{+20.8} & +18.6 & +16.6 & \textbf{+17.6} & +16.6 & +12.7 & \textbf{+14.6} \\
PPO & +17.1 & +14.6 & \textbf{+15.8} & +14.2 & +8.9 & \textbf{+11.6} & +17.7 & +10.4 & \textbf{+14.0} \\
SSA & +15.8 & +15.9 & \textbf{+15.9} & +15.0 & +14.2 & \textbf{+14.6} & +14.3 & +10.0 & \textbf{+12.1} \\
GU & +21.4 & +22.0 & \textbf{+21.7} & +27.8 & +17.9 & \textbf{+22.8} & +15.2 & +26.1 & \textbf{+20.7} \\
\hline
\end{tabular}
\end{table}

\begin{table}[ht]
\centering
\caption{Percentage improvement of FLoRa over baselines, averaged over all
UAV battery levels ($12$~km scenario).}
\label{tab:impr12km}
\setlength{\tabcolsep}{3pt}
\footnotesize
\begin{tabular}{l|ccc|ccc|ccc}
\hline
 & \multicolumn{3}{c|}{Total Expected VIP} & \multicolumn{3}{c|}{Coverage} & \multicolumn{3}{c}{successful collections}\\
Baseline & 20$N$ & 40$N$ & Avg & 20$N$ & 40$N$ & Avg & 20$N$ & 40$N$ & Avg\\
\hline
GUHO & +20.5 & +16.1 & \textbf{+18.3} & +19.3 & +16.1 & \textbf{+17.7} & +29.1 & +14.3 & \textbf{+21.7} \\
ACO & +44.3 & +49.2 & \textbf{+46.8} & +34.2 & +38.2 & \textbf{+36.2} & +39.5 & +37.0 & \textbf{+38.2} \\
PPO & +16.1 & +12.9 & \textbf{+14.5} & +6.4 & +5.5 & \textbf{+6.0} & +23.6 & +9.1 & \textbf{+16.3} \\
SSA & +38.7 & +39.1 & \textbf{+38.9} & +30.8 & +28.5 & \textbf{+29.7} & +36.7 & +27.7 & \textbf{+32.2} \\
GU & +41.1 & +52.1 & \textbf{+46.6} & +30.8 & +51.2 & \textbf{+41.0} & +36.7 & +43.3 & \textbf{+40.0} \\
\hline
\end{tabular}
\end{table}

\begin{table}[ht]
\centering
\caption{Percentage improvement of FLoRa over baseline categories.}
\label{tab:impr_cat_final}
\setlength{\tabcolsep}{3pt}
\footnotesize
\begin{tabular}{l|ccc|ccc|ccc}
\hline
 & \multicolumn{3}{c|}{3~km} & \multicolumn{3}{c|}{12~km} & \multicolumn{3}{c}{Overall}\\
Category & VIP & Cov. & Succ. & VIP & Cov. & Succ. & VIP & Cov. & Succ. \\
\hline
Metaheuristic & +18.4 & +16.1 & +13.3 & +42.8 & +33.0 & +35.2 & \textbf{+30.6} & \textbf{+24.5} & \textbf{+24.3} \\
Greedy        & +23.2 & +29.0 & +20.4 & +32.5 & +29.4 & +30.9 & \textbf{+27.8} & \textbf{+29.2} & \textbf{+25.6} \\
DRL           & +15.8 & +11.6 & +14.0 & +14.5 & +6.0  & +16.3 & \textbf{+15.2} & \textbf{+8.8}  & \textbf{+15.2} \\
\hline
\end{tabular}
\end{table}
Table~\ref{tab:impr_cat_final} further aggregates these results into three baseline categories: Greedy (GU, GUHO), Metaheuristic (SSA, ACO), and ~DRL~(PPO).
\subsection{Sensitivity Analysis}

\begin{figure}[htbp]
  \centering
  \begin{minipage}[b]{0.48\textwidth}
   \centering
\includegraphics[scale=0.4]{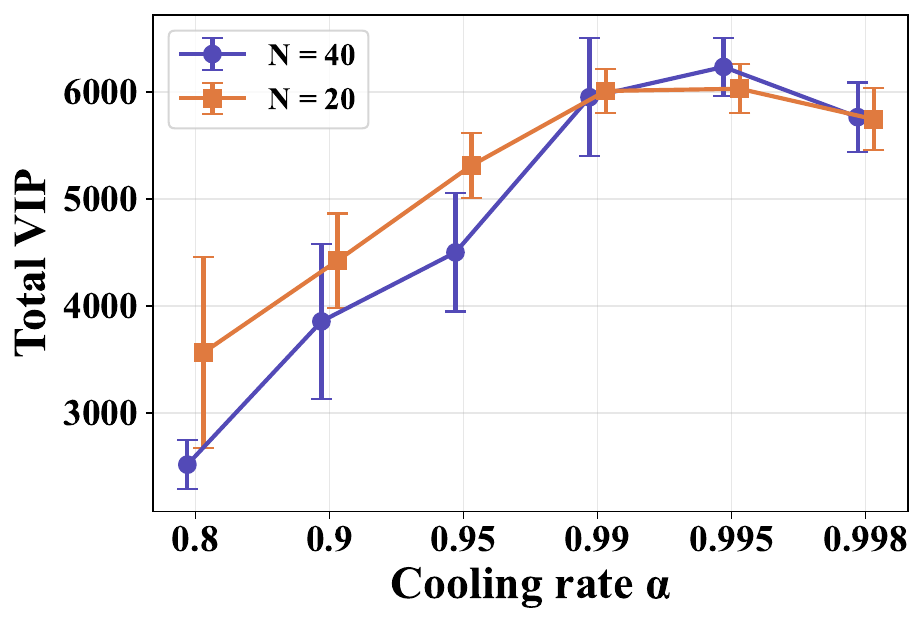}
    \caption{SA Sensitivity w.r.t $\alpha$} 
    \label{fig: alpha_sensitivity}
  \end{minipage}
  \hfill 
  \begin{minipage}[b]{0.48\textwidth}
      \centering
    \includegraphics[scale=0.4]{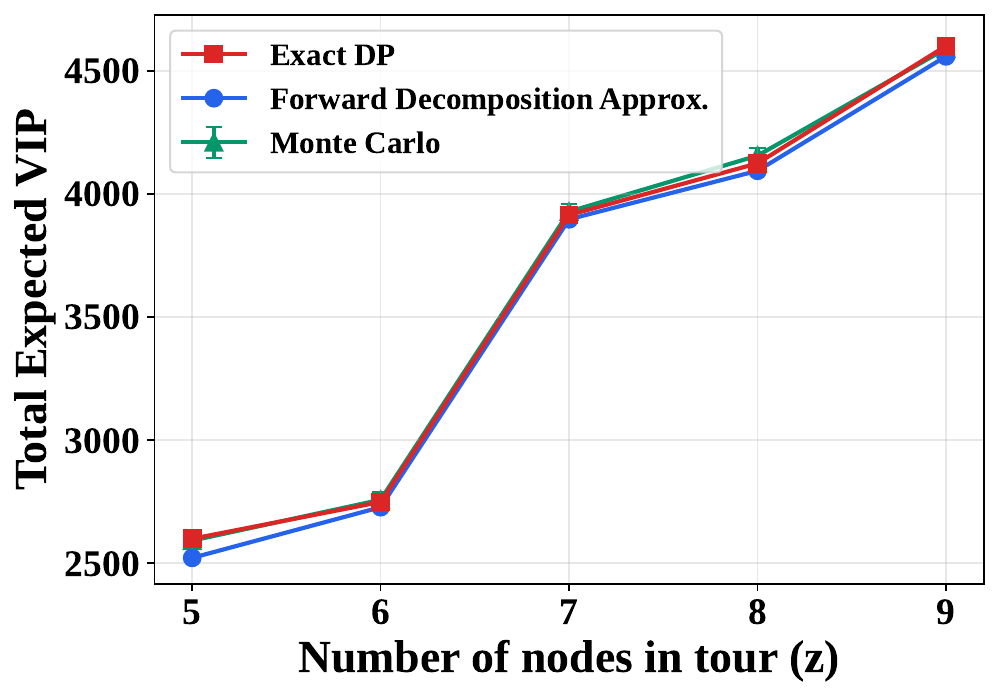}
    \caption{Total expected VIP performance over exact POMDP, forward decomposition approximation and Monte-Carlo method} 
    \label{fig:Fwd_decomposition}
  \end{minipage}
 \end{figure}

Fig.~\ref{fig: alpha_sensitivity} reports the total expected VIP as
a function of the SA cooling rate $\alpha$ for $N=20$ and $N=40$, with
error bars denoting $95\%$ confidence intervals over independent runs.
Three regimes are evident. For aggressive cooling ($\alpha \leq 0.95$), the temperature $\Upsilon$ decays before the tour space is adequately
explored, causing SA to freeze prematurely into sub-optimal tours and
yielding low VIP. For $\alpha \in [0.99, 0.995]$, the VIP reaches a
stable maximum for both network sizes, indicating a broad
well-conditioned operating region rather than a single sharp optimum. 
The close agreement between the $N=20$ and $N=40$ curves further confirms
that the optimal cooling range is insensitive to network density. For
very slow cooling ($\alpha = 0.998$), the VIP declines modestly because,
under the fixed iteration budget ($K_{\max} = 2000$) per restart, the schedule remains
effectively hot at termination and under-anneals, i.e., the search
explores without sufficient remaining budget to exploit and refine a
final solution. This is a budget-coupling effect rather than algorithmic
fragility, and is recoverable with a larger iteration budget. We
therefore adopt $\alpha = 0.995$, a conventional cooling rate, which lies
within the stable plateau and exhibits the lowest variance.

\subsection{Sensitivity Analysis of Forward Decomposition Approximation}
\label{subsec:Fwd_decomposition}

The forward decomposition method designed in Section~\ref{sec:POMDP} decouples the
chain of dependent POMDPs so that each IoTD's POMDP can be independently solved. The aforementioned approach is tractable as it decouples the  statistical coupling between the linear (Markov-dependent) chain of IoTDs. To evaluate the loss in accuracy due to the aforementioned approximation, we performed a sensitivity study by considering a fixed tours of length $|\mathbf{z}| \in \{5,6,7,8,9\}$ and comparing against two independent references. As the ground truth (coupled POMDP) is analytically complicated (due to curse of dimensionality), we restricted to a few number of nodes, however, as observed, the approximation loss is under $3\%$.

\par We also compare the performance of the node-level forward decoupling method with the following 
\textbf{Monte Carlo} estimate:  we simulate FLoRa's own probing
policy over $N_{\text{MC}}$ independent rollouts of the  duty-cycling
process, giving an unbiased empirical estimate of the VIP that policy actually
earns, reported as a mean with a 95\% confidence interval. The exact DP is
thus an upper bound produced by the optimal policy, while the forward decoupling method and Monte
Carlo values describe the same policy, the FLoRa actually deploys
analytically and by simulation, respectively.

Fig.~\ref{fig:Fwd_decomposition} plots the total expected VIP for all three
methods against tour length, alongside the optimality gap yielded by the forward decoupling method.
\begin{equation}
    \text{Gap}(\%) = \frac{R_{\text{Exact}} - R_{\text{Fwd Decoupling}}}{R_{\text{Exact}}} \times 100.
    \label{eq:fwd_decomp_gap}
\end{equation}

It is to be noted that the mean of the Monte Carlo method stays close to the exact DP value at every tour length, with the exact value falling inside or right at the edge of the 95\% confidence interval. Since the simulated policy is FLoRa's own policy rather than the exact one, this agreement is the substantive
result: the policy FLoRa actually runs recovers essentially all of the reward
available to the optimal chain solver.

\subsection{Runtime Analysis}

\begin{table}[ht]
\centering
\caption{Offline planning time comparison across methods (mean, single problem instance, $N{=}40$, 12\,km scenario).}
\label{tab:runtime}
\begin{tabular}{lccc}
\toprule
\textbf{Method} & \textbf{Path Planning} & \textbf{Hover Opt.} & \textbf{Mean Time} \\
\midrule
GU    & Greedy & None (hover-on-node) & $\sim$2\,s \\
SSA   & SA     & None (hover-on-node) & $\sim$6.2\,s \\
GUHO  & Greedy & CMA-ES               & $\sim$112\,s \\
PPO   & Learned (DRL) & Learned (DRL) & $\sim$308\,s (5.1\,min) \\
\textbf{FLoRa} & SA & CMA-ES (+ nested POMDP) & $\sim$724\,s (12.1\,min) \\
\bottomrule
\end{tabular}
\end{table}
Table~\ref{tab:runtime} reports the mean offline planning time per method on a representative instance ($N{=}40$, 12\,km), averaged over 20 seeds on a MacBook Air M1 (8GB RAM, no GPU). All methods plan offline; online execution merely follows the precomputed tour, hover positions, and closed-form threshold policy (Theorem~1), imposing no real-time burden. The hierarchy reflects architecture: GU and SSA skip hover optimization and finish in seconds; adding CMA-ES hover optimization (GUHO) costs $\sim$50$\times$ more, as each step re-optimizes positions via derivative-free search; and FLoRa's further $\sim$6.5$\times$ over GUHO arises because SA invokes the CMA-ES/POMDP evaluation at every perturbation, making cost multiplicative in (SA iterations)\,$\times$\,(CMA-ES cost) rather than additive. PPO is trained per configuration, like all methods here, so its time is the same per-instance cost via a different paradigm; under this like-for-like comparison FLoRa ($\sim$12.1\,min) is $\sim$2.3$\times$ PPO ($\sim$5.1\,min).

\section{Conclusion}\label{sec:conclusion}
In this paper, we addressed the challenging problem of UAV-assisted data  collection over duty-cycling LoRa IoT networks under energy  and deadline constraints. We proposed FLoRa, a novel Flight-assisted data collection  framework over LoRa nodes, formulated as a Probabilistic, Time-dependent,  and Close-enough Orienteering Problem (PTCOP). FLoRa is a hierarchical framework involving $i)$ SA algorithm for path planning, $ii)$ CMA-ES for positioning, and $iii)$ POMDP for probing. The network level POMDP is solved at IoTD-level POMDPs using Lagrangian relaxation and forward decomposition approximation as surrogate methods.

\par Extensive simulation-based evaluation demonstrated that FLoRa consistently outperforms all baseline categories in total expected VIP, with average gains of $30.6\%$ over metaheuristic baselines, $27.8\%$ over greedy baselines, and $15.2\%$ over the strongest deep reinforcement learning baseline. This advantage carries over to the remaining metrics as well: node coverage improves by $24.5\%$, $29.2\%$, and $8.8\%$, and Expected successful collections by $24.3\%$, $25.6\%$, and $15.2\%$, over the metaheuristic, greedy, and DRL baselines, respectively. These results confirm the benefit of jointly optimizing path planning, hover positioning, and adaptive probing for fresh data collection in energy-constrained UAV-LoRa networks. In future, we plan to extend FLoRa to multi-UAV settings involving swarm of UAVs performing data collection over ground LoRa IoTDs.

\appendices
\section{Proofs of Lemma~\ref{lem:reducedpolicy} and Lemma~\ref{lem:Monotone}}
\label{app:lemmas}

\begin{proof}[Proof of Lemma~\ref{lem:reducedpolicy}]
We use an exchange argument. Consider any node $i$ and any admissible policy $\pi$ that idles before probing again, i.e.\ contains the sub-pattern $(\dots,1,0,1,\dots)$: a probe, followed by at least one idle slot, followed by a further probe. Let $\pi'$ be the policy obtained by deleting the intervening idle slot(s), yielding $(\dots,1,1,\dots)$.

By the idle belief update in Eq.~\eqref{eq:idletransition}, an idle action leaves the belief unchanged, $x_{t+1}=x_t$. Hence under both $\pi$ and $\pi'$ the probe that follows the idle slot(s) acts on the same belief state, so both policies (i) collect data with identical probability and (ii) obtain the identical freshness reward on success. The two policies therefore yield the same expected VIP reward.

They differ only in cost. Each idle slot in $\pi$ requires the UAV to continue hovering at node $i$, consuming hover time $t_i^l$ and priced energy $\lambda h\,t_i^l$ per slot, and reducing the residual freshness available to all subsequent nodes by $t_i^l$ per slot, without altering the belief or advancing the chance of success. Consequently
\[
   L_i(\pi';z,l,\lambda) \;\ge\; L_i(\pi;z,l,\lambda),
\]
with strict inequality whenever $\pi$ contains at least one idle slot. Thus any policy containing an idle-then-probe pattern is weakly dominated by the policy that removes the idle slot.

Applying this exchange repeatedly eliminates every idle slot that precedes a later probe, leaving a run of consecutive probes with no interleaved idling. Since idle is dominated, the only non-probe action that survives on the optimal path is the terminal terminate action; the resulting policy has the form $(1,\dots,1,2)$, i.e.\ the UAV probes for $n_i=\sum_t n_t^{\pi}$ consecutive attempts and then terminates. Hence the reduced policy space $\tilde{\Pi}$ of consecutive-probe-then-terminate policies contains an optimal policy, and the per-slot optimal decision reduces to probe versus terminate.
\end{proof}

\begin{proof}[Proof of Lemma~\ref{lem:Monotone}]
We use a `proof by contradiction' approach. Assume the following contradicting statement $x_{t+1} > x_t$ is true. This is trivially not true for the `terminate' action as per Eq.~\ref{eq:terminatetransition}. For $x_t>0$, by applying the belief update shown in Eq.~\ref{eq:bayesupdate} to the aforementioned contradicting statement, we get the following:
\begin{equation}\label{eq:contradictupdate}
\frac{x_t(1-s_i^{l_i})}{x_t(1-s_i^{l_i}) + (1-x_t)} > x_t
\end{equation}
Eq.~\ref{eq:contradictupdate} can be rewritten as,
\begin{equation}
x_t(1-s_i^{l_i}) > x_t\left(x_t(1-s_i^{l_i}) + (1-x_t)\right)
\end{equation}
Dividing both sides of the aforementioned Equation  by $x_t$, we get the following:
\begin{equation}
1-s_i^{l_i} > x_t(1-s_i^{l_i}) + (1-x_t)
\end{equation}
Rearranging the above equation yields the following:
\begin{equation}
1-s_i^{l_i} > 1 - x_ts_i^{l_i}
\nonumber\end{equation}
\begin{equation}
x_ts_i^{l_i} > s_i^{l_i}
\end{equation}
\begin{equation}
x_t > 1
\end{equation}
By axioms of probability, the belief state probability $x_t>1$ is a contradiction. Hence $x_{t+1} \le x_t$ holds true.
\end{proof}

\begin{center}
\footnotesize
\renewcommand{\arraystretch}{1.15}
\topcaption{Summary of notations.}
\label{tab:notations}
\tablehead{%
\hline
\textbf{Symbol} & \textbf{Description} \\
\hline}
\tabletail{\hline}
\begin{supertabular}{|c|p{0.68\linewidth}|}
$a_k$ & Data-availability probability of IoTD in region $k$ \\
$\mathbf{a}$ & Availability probability vector $(a_i)_{i=1}^{|\mathbf{z}|}$ \\
$A_k$ & Bernoulli random variable representing IoTD $k$'s availability state, $A_k\!\sim\!\text{Bernoulli}(a_k)$ \\
$b_{\max}$ & UAV battery budget \\
$b_f$ & Battery consumed during flight \\
$b$ & Probing energy budget, $b = b_{\max} - b_f$ \\
$B_i$ & Node-wise energy (flight + hover) at IoTD $i$ \\
$B(n_i)$ & Hovering energy as a function of probe count $n_i$ \\
$\mathrm{BW}$ & Channel bandwidth \\
$\mathbf{c}_i$ & Physical location of IoTD $z_i$ (in $\mathbb{R}^2$) \\
$\mathbf{C}$ & CMA-ES covariance matrix \\
$d_i$ & Distance $\lVert \mathbf{l}_i - \mathbf{c}_i\rVert$ from hover to IoTD center \\
$d_{\max}$ & Maximum LoRa coverage range \\
$D(\lambda)$ & Lagrangian dual function \\
$\mathbf{e}_1$ & Unit column vector (first entry $1$, rest $0$) \\
$\mathbb{E}[\cdot]$ & Expectation operator \\
$f$ & Flight power \\
$f_c$ & Carrier frequency \\
$F(\mathbf{l};\mathbf{z})$ & Optimal reward for tour $\mathbf{z}$, hover $\mathbf{l}$ (CMA-ES) \\
$g,\ g_{\max}$ & CMA-ES generation index / max generations \\
$G(d_i)$ & Distance-dependent channel attenuation / gain term \\
$h$ & Hovering power \\
$H$ & Home location \\
$\mathbf{I}$ & Identity matrix \\
$\mathbf{1}$ & Column vector of ones \\
$\mathbb{I}\{\cdot\}$ & Indicator function \\
$\mathit{itr}$ & SA iteration counter \\
$J^*_t(x_t)$ & Optimal value function at slot $t$ (App.~A) \\
$k$ & Sensing-region / IoTD index \\
$K_{\max}$ & SA iteration budget \\
$\mathbf{l}_i$ & Hover (probing) location at IoTD $z_i$ (in $\mathbb{R}^2$) \\
$\mathbf{l}$ & Hover-position vector $(\mathbf{l}_i)_{i=1}^{|\mathbf{z}|}$ \\
$L_{\mathrm{ex}}$ & Excess path loss \\
$L(\cdot)$ & Value function / Lagrangian (per $\lambda$ argument) \\
$L_{\mathrm{acc}}$ & Lagrangian-value accumulator (Alg.~4) \\
$\mathbf{m}$ & CMA-ES distribution mean \\
$n_i$ & Number of probes at IoTD $i$ \\
$\mathbf{n}$ & Probe vector $(n_i)_{i=1}^{|\mathbf{z}|}$ \\
$n^\pi_t$ & Probe action at slot $t$ under $\pi$, $\in\{0,1\}$ \\
$\bar{n}$ & Last slot yielding a positive reward (App.~A) \\
$N$ & Number of sensing regions (IoTDs) \\
$[N]$ & Index set $\{1,\dots,N\}$ \\
$\mathcal{N}(\mathbf{m},\sigma^2\mathbf{C})$ & CMA-ES multivariate Gaussian sampling distribution \\
$N_0$ & Noise power density \\
$N_{\mathrm{MC}}$ & Number of Monte Carlo rollouts \\
$\mathrm{NF}$ & Noise figure \\
$o_t,\ \mathcal{O}$ & Observation at slot $t$ / observation space \\
$p_k$ & Bernoulli delivery probability of IoTD $k$ \\
$p_t$ & Transmit power of IoTD \\
$P_0,\ P_i$ & Blade profile / induced power constants \\
$\mathcal{P}$ & POMDP transition model \\
$\mathbb{P}(\cdot)$ & Probability measure \\
$\mathbf{Q}$ & Transient matrix of the absorbing chain \\
$R_t$ & Single-stage reward at slot $t$ (random variable) \\
$\mathbf{R}$ & Vector of expected stage rewards \\
$R(\mathbf{z})$ & Optimal reward for tour $\mathbf{z}$ (SA) \\
$R_{\mathrm{Exact}},\ R_{\mathrm{Fwd}}$ & Exact vs.\ forward-decoupling reward \\
$\mathcal{R}$ & POMDP reward model \\
$s^{l_i}_i$ & Packet success probability at hover $\mathbf{l}_i$ \\
$t$ & Probe-slot (decision-epoch) index \\
$t_i$ & Hover time at IoTD $i$ \\
$t_{i-1,i}$ & Flight time from IoTD $i{-}1$ to $i$ \\
$t^l_i$ & Per-probe slot duration at IoTD $i$ \\
$T$ & Freshness deadline \\
$u_t,\ \mathcal{U}$ & Action at slot $t$ / action space \\
$v$ & UAV velocity \\
$v_0,\ v_{\mathrm{tip}}$ & Mean rotor induced velocity / blade tip speed \\
$V_i,\ V(n_i)$ & VIP at IoTD $i$ (function of probe count), random variable \\
$V_T(t)$ & Time-varying value of data \\
$W$ & Aircraft weight \\
$x_t,\ \mathcal{X}$ & Belief state at slot $t$ / belief space $[0,1]$ \\
$z_i,\ \mathbf{z},\ \mathcal{Z}$ & Visited IoTD / tour / feasible tour set \\
$\alpha$ & SA geometric cooling rate \\
$\beta$ & Induced power correction factor \\
$\gamma$ & Path-loss exponent \\
$\Gamma$ &  CMA-ES population size \\
$\vartheta$ & Profile drag coefficient \\
$\delta(\hat{\mathbf{z}},\mathbf{z})$ & SA score difference \\
$\Delta_i$ & residual freshness at IoTD $i$ \\
$\epsilon$ & Bisection convergence tolerance \\
$\zeta$ & Rotor radius \\
$\eta$ & Effective receiver noise power \\
$\theta_{SF}$ & SNR threshold of the selected SF \\
$\lambda$ & Lagrange multiplier \\
$\lambda_{\mathrm{low/high/mid}},\ \lambda^*$ & Bisection bounds / midpoint / optimum \\
$\Lambda$ & Rotor solidity \\
$\mu$ & Number of selected CMA-ES candidates, $\lfloor \Gamma/2\rfloor$ \\
$\xi$ & Fuselage drag ratio \\
$\pi,\ \Pi,\ \tilde{\Pi}$ & Policy / policy space / reduced policy space \\
$\rho$ & Air density \\
$\sigma$ & CMA-ES step size \\
$\tau_i$ & Total flight $+$ hover time, $\tau_i = t_{i-1,i}+t_i$ \\
$\Upsilon$ & SA temperature ($\Upsilon_0$, $\Upsilon_{\min}$) \\
$\phi$ & Absorbing (terminal) state of the node-level POMDP \\
$\Omega$ & Blade angular velocity \\
\end{supertabular}
\end{center}

\ifCLASSOPTIONcaptionsoff
  \newpage
\fi

\bibliographystyle{ieeetr}
\tiny \bibliography{Transactions-Bibliography/references}
\end{document}